\newtheorem{Definition}{Definition}[section]
\newtheorem{Theorem}[Definition]{Theorem}
\newtheorem{Lemma}[Definition]{Lemma}
\newtheorem{Corollary}[Definition]{Corollary}
\newtheorem{Example}[Definition]{Example}
\newtheorem{Remark}[Definition]{Remark}
\newtheorem{Proposition}[Definition]{Proposition}
\newcommand{\lt}{{\mathop{\mathrm{lt}}}}
\newcommand{\lc}{{\mathop{\mathrm{lc}}}}
\newcommand{\lm}{{\mathop{\mathrm{lm}}}}
\begin{document}
\title{Comprehensive Systems for Primary Decompositions of Parametric Ideals}

\author{Yuki Ishihara\thanks{Nihon University, ishihara.yuki@nihon-u.ac.jp} \and Kazurhiro Yokoyama\thanks{Rikkyo University, kazuhiro@rikkyo.ac.jp}}
\date{}

\maketitle             

\begin{abstract}
We present an effective method for computing parametric primary decomposition via comprehensive Gr\"obner systems. In general, it is very difficult to compute a parametric primary decomposition of a given ideal in the polynomial ring with rational coefficients $\mathbb{Q}[A,X]$ where $A$ is the set of parameters and $X$ is the set of ordinary variables. One cause of the difficulty is related to the irreducibility of the specialized polynomial. Thus, we introduce a new notion of ``feasibility'' on the stability of the structure of the ideal in terms of its primary decomposition, and we give a new algorithm for computing a so-called {\em comprehensive system} consisting of pairs $(C, \mathcal{Q})$, where for each parameter value in $C$, the ideal has the stable decomposition $\mathcal{Q}$. We may call this comprehensive system {\em a parametric primary decomposition} of the ideal. Also, one can also compute a dense set $\mathcal{O}$ such that $\varphi_\alpha(\mathcal{Q})$ is a primary decomposition for any $\alpha\in C\cap \mathcal{O}$ via irreducible polynomials. In addition, we give several computational examples to examine the effectiveness of our new decomposition.
\end{abstract}

\section{Introduction}

In the analysis of the structure of a polynomial ideal, primary decomposition plays an important role. For example, the primary decomposition of a radical ideal corresponds to the irreducible decomposition of its variety. In engineering and pure mathematics computations, it is often required to deal with parametric ideals, i.e., ideals generated by polynomials with parametric coefficients. Thus, for such parametric ideals, algorithms for their Gr\"obner bases and several ideal operations have been studied in \cite{Kapur,Montes,Nabeshima-Tajima,Yokoyama2006}. In particular, we sometimes face on a family of ideals described by parameters and it is highly desirable to classify their primary decompositions with respect to parameter values to analyze their algebraic structure. However, in general, it is very difficult to comprehensively classify parameter values for which the given ideal has its primary decomposition uniformly. In this paper, we introduce a new notion ``feasible comprehensive primary decomposition system'' based on the Hilbert's irreducible theorem that can resolve the difficulty partly. Then, we propose a new approach to compute such classification (feasible comprehensive primary decomposition system) and give an effective method via Comprehensive Gr\"obner Systems (CGSs in short). 

Algorithms for ordinary primary decomposition (without parameters) have been studied in \cite{GIANNI1988149,Eisenbud1992,KAWAZOE20111158,SHIMOYAMA1996247}. Throughout this paper, we let $I$ be an ideal of $K[A,X]$, where $K$ is the rational number field $\mathbb{Q}$ or algebraic extensions of $\mathbb{Q}$. We also let $K[A,X]$ be the polynomial ring over $K$, where $A=\{a_1,\ldots,a_m\}$ is the set of parameters and $X=\{x_1,\ldots,x_n\}$ is the set of ordinary variables. For $\alpha\in K^m$ we denote by $\varphi_\alpha$ the canonical specialization homomorphism from $K[A,X]$ to $K[X]$, i.e.,
$\varphi_\alpha (f(A,X))=f(\alpha,X)$ for $f(A,X)\in K[A,X]$. We say that, with respect to a specified property, a parametric ideal $I$ is {\em stable} on a subset $C$ of $K^m$ if $\varphi_\alpha(I)$ keeps the property for any $\alpha\in C$. We consider primary decomposition as such a specified property and set our main goal to develop an efficient method to compute {\em a parametric primary decomposition}. Our contributions are as follows: 

\begin{enumerate}
	\item Define a {\em feasible} comprehensive primary decomposition system (feasible CPDS in short) of a parametric ideal (see Definition \ref{def:fCPDS}).
	\item Devise effective algorithms to compute feasible CPDS and minimal feasible CPDS (see Algorithms~\ref{alg1} and \ref{alg2}).
    \item Prove the density of feasible CPDS by using Hilbert's Irreducibility Theorem (see Section~\ref{sec5}).
    \item Devise a new method that uses minimal polynomials to derive the conditions for a specialized ideal to become an prime ideal (see Algorithm~3)
	\item Implement the algorithm in the computer algebra system Risa/Asir and examine its performance in several examples (see Section 6).
\end{enumerate}

We briefly outline (ordinary) primary decomposition. A set of ideals $\{Q_1,$ $\ldots$ $,Q_r\}$ of $K[A,X]$ is called a {\em primary decomposition} of $I$ if it satisfies the following conditions: 
\begin{enumerate}
\setlength{\leftskip}{1cm}
	\item[(PD-1)]$Q_1,\ldots Q_{r-1}$ and $Q_r$ are primary ideals of $K[A,X]$, 
	\item[(PD-2)] $I=Q_1\cap \cdots \cap Q_r$. 
\end{enumerate}
A primary decomposition of $I$ is said to be {\em minimal} or {\em irredundant} if it satisfies the following additional conditions: 
\begin{enumerate}\setlength{\leftskip}{1cm} \setcounter{enumi}{2}
		\item[(M-1)] $\bigcap_{j\neq i} Q_j\not \subset Q_i$ for all $i$,
		\item[(M-2)] $\sqrt{Q_i}\neq \sqrt{Q_j}$ for all $i\neq j$.
\end{enumerate}
We note that for a minimal primary decomposition $\mathcal{Q}=\{Q_1,\ldots,Q_r\}$ of $I$, $\varphi_\alpha(\mathcal{Q})=\{\varphi_{\alpha} (Q_1),$ $\ldots,\varphi_{\alpha} (Q_r)\}$ is not always that of $\varphi_\alpha (I)$. For instance, $I=\langle (x_1+a_1)x_1^2,(x_1+a_1)x_2 \rangle$, the ideal generated by $(x_1+a_1)x_1^2$ and $(x_1+a_1)x_2$ in $\mathbb{Q}[a_1,x_1,x_2]$, has a minimal primary decomposition $\mathcal{Q}=\{\langle x_1+a_1\rangle,\langle x_1^2,x_2\rangle\}$ but $\varphi_0(\mathcal{Q})=\{\langle x_1 \rangle,\langle x_1^2,x_2 \rangle\}$ is not a minimal primary decomposition of $\varphi_0(I)=\langle x_1^3,x_1x_2\rangle=\langle x_1^3 \rangle\cap \langle x_1,x_2 \rangle$. To solve this problem, we split $K^m$ into subsets $C_1,\ldots,C_l$ such that $\bigcup_{i=1}^{l} C_i=K^m$ and the primary decomposition of $\varphi_\alpha (I)$ is stable for $\alpha\in C_i$ in each $i=1,\ldots,l$. We might call idealistically $\{(C_1,\mathcal{Q}_1),\ldots,(C_l,\mathcal{Q}_l)\}$ a {\em comprehensive primary decomposition system} (CPDS) for $I$ if $\varphi_\alpha(\mathcal{Q}_i)$ is a primary decomposition of $\varphi_\alpha(I)$ for any $\alpha\in C_i$ and each $i$. For dealing with CPDS algebraically each subset $C_i$ is desirable be a finite union of {\em locally closed set} (see Definition \ref{def:ACS}), i.e., a form  $V(J_1)\setminus V(J_2)$ where $V(J_1)$ and $V(J_2)$ are varieties of ideals $J_1$ and $J_2$ in $K[A]$ respectively. 

In order to compute a CPDS, we utilize a comprehensive Gr\"obner system (see Definition \ref{def:CGS}). A comprehensive decomposition system satisfying conditions (PD-2), (M-1) and (M-2) can be computed completely, however, one satisfying (PD-1) does not necessarily exist. For example, for $I=\langle x_1^2-a_1 \rangle \subset \mathbb{Q}[a_1,x_1]$, $\varphi_\alpha (I)=\langle x_1^2-\alpha \rangle$ is a prime ideal if and only if $x^2-\alpha$ is irreducible over $\mathbb{Q}$, that is, $\alpha$ is not a positive square number; thus the condition can not determine a locally closed set. Hence, we introduce a notion of ``feasible'' CPDS based on Hilbert's irreducibility theorem, that is, for an irreducible polynomial $f(A,X)$ in $\mathbb{Q}[A,X]$, there exists a dense set $\mathcal{O}$ in $\mathbb{Q}^m$ such that $\varphi_\alpha (f)=f(\alpha,X)$ is irreducible over $\mathbb{Q}$ for $\alpha\in \mathcal{O}$ . We generalize this theorem for prime ideals and primary ideals (see Theorem \ref{thm:hilb_prime} and Corollary \ref{cor:hilb_primary}). 

For a practical computation of a feasible CPDS of $I$, as the first step, we compute a primary decomposition $\mathcal{Q}=\{Q_1,\ldots,Q_r\}$ of $I$ in $\mathbb{Q}[A,X]$. As the second step, we extract algebraic conditions on $A$ so that a given set of ideals $\mathcal{Q}=\{Q_1,\ldots,Q_r\}$ satisfies conditions (PD-2), based on CGS-computations. By recursively executing these two steps, we obtain a feasible CPDS (see Algorithm \ref{alg1}). For a minimal feasible CPDS, we need additional steps for satisfying conditions (M-1) and (M-2) (see Algorithm \ref{alg2}). For a prime ideal $P$ with parameters, we utilize minimal polynomials of (a localized) $P$ to compute the conditions of $\alpha\in K^m$ for $\varphi_\alpha(P)$ to be a prime ideal. We also implement our algorithms in the computer algebra system Risa/Asir and examine it in several examples. 

This paper is organized as follows. In section 2, we recall some notions about primary decomposition and comprehensive Gr\"obner system. In section 3, we define the stability of parametric ideal operations and feasible CPDS for a parametric ideal. In section 4, we introduce the main algorithms for feasible CPDSs and prove each correctness and termination. In section 5, we discuss the primarity of parametric primary ideal based on the Hilbert's irreducibility theorem. In section 6, we see some computational examples and the effectiveness of our algorithm. In section 7, we explain the conclusion and future works. 

\section{Mathematical Basis}
Let $K$ be a computable field of characteristic $0$ such as $\mathbb{Q}$ and algebraic extensions of $\mathbb{Q}$. For a set of parameters $A=\{a_1,\ldots,a_m\}$ and a set of ordinal variables $X=\{x_1,\ldots,x_n\}$, we denote by $K[A,X]$ and $K[X]$ the parametric polynomial ring and the ordinary polynomial ring over $K$, respectively. For $\alpha\in K^m$, we define the canonical specialization homomorphism $\varphi_\alpha$ from $K[A,X]$ to $K[X]$ by $\varphi_\alpha (f(A,X))=f(\alpha,X)$ and let $\varphi_\alpha (\mathcal{Q})=\{\varphi_\alpha (Q)\mid Q\in\mathcal{Q}\}$ for a set of ideals $\mathcal{Q}$. We note that, for an ideal $Q$,  $\varphi_\alpha(Q)=\{f(\alpha,X)\in K[X]\mid f(A,X)\in Q\}$ forms an ideal of $K[X]$. For the ring $R=K[A,X]$ or $K[X]$, we denote by $\langle f_1,\ldots,f_s \rangle_R$ the ideal generated by $f_1,\ldots,f_s$ in $R$. If $R$ is obvious, then we simply write $\langle f_1,\ldots,f_s \rangle$. Throughout this paper, we let $I$ be a proper ideal of $R$ and $\sqrt{I}$ the radical of $I$, i.e., $\sqrt{I}=\{f\in R\mid f^m\in I \text{ for some $m\in \mathbb{N}$}\}$ where $\mathbb{N}$ is the set of positive integers. Also, for an ideal $J$ of $K[A]$, we denote by $V_K(J)$ the variety of $J$ over $K$, i.e., $V_K(J)=\{\alpha\in K^m\mid f(\alpha)=0 \text{ for all $f\in J$}\}$ and simply write $V(J)$ if $K$ is fixed. Let $T(A)$, $T(X)$ and  $T(A,X)$ be the set of monomials of $K[A]$, $K[X]$ and $K[A,X]$ respectively. When we deal with Gr\"obner bases, we assume that a monomial ordering is fixed a priori. For a monomial ordering $\succ$ on $T(A,X)$, $\succ$ is called a block ordering with  $X\succ\succ A$ if any monomial in $T(A)$ is less than any monomial in $T(X)$. For a block ordering $\succ$ with $X\succ\succ A$, $\succ_X$ and $\succ_A$ denote restrictions of $\succ$ on $T(X)$ and $T(A)$ respectively. For a non-zero polynomial $f$ of $K[A,X]$, we denote by $\lc_{\succ}(f)$, $\lm_{\succ}(f)$ and $\lt_{\succ}(f)$ the leading coefficient of $f$, the leading monomial of $f$ and the leading term of $f$ with respect to $\succ_X$ in $K[A][X]$ respectively, i.e., $\lc_\succ(f)\in K[A]$, $\lm_\succ(f)\in K[X]$ and $\lt_\succ(f)=\lc_\succ(f)\lm_\succ(f)$. 

\subsection{Primary Decomposition}

First, we recall the four conditions for minimal primary decompositions. 

\begin{Definition}[Primary Decomposition Definition 4.1.1, \cite{greuel2002singular}]
	For a proper ideal $I$ of $R$, a finite set of ideals $\mathcal{Q}=\{Q_1\ldots,Q_r\}$ is called {\em a primary decomposition} of $I$ if it satisfies
	\begin{enumerate}\setlength{\leftskip}{1cm}
		\item[(PD-1)] $Q_1,\ldots,Q_{r-1}$ and $Q_r$ are primary ideals of $R$, 
		\item[(PD-2)] $I=Q_1\cap \cdots \cap Q_r$.
	\end{enumerate}
	 A primary decomposition $\mathcal{Q}=\{Q_1\ldots,Q_r\}$ of $I$ is said to be {\em minimal} or {\em irredundant} if it satisfies
	
	\begin{enumerate}\setcounter{enumi}{2}\setlength{\leftskip}{1cm}
 		\item[(M-1)] $\bigcap_{j\neq i} Q_j\not \subset Q_i$ for all $i$,
		\item[(M-2)] $\sqrt{Q_i}\neq \sqrt{Q_j}$ for all $i\neq j$.
	\end{enumerate}
	Each $Q_i$ is called {\em a primary component} of $I$ for a minimal primary decomposition $\mathcal{Q}=\{Q_1\ldots,Q_r\}$ of $I$.  
\end{Definition}

\begin{Example} \label{ex1}
	For $I=\langle (x_1+a_1)x_1^2,(x_1+a_1)x_2 \rangle$ in $\mathbb{Q}[a_1,x_1,x_2]$, $\{\langle x_1+a_1\rangle,\langle x_1^2,x_2\rangle\}$ is a minimal primary decomposition of $I$, i.e., 
	\[
	I=\langle x_1+a_1\rangle\cap \langle x_1^2,x_2\rangle.
	\] 
\end{Example}

\begin{Remark}
	For $\alpha\in K^m$, $\{\varphi_\alpha (Q_1)\ldots,\varphi_\alpha (Q_r)\}$ is not always a minimal primary decomposition of $\varphi_\alpha(I)$ even if $\{Q_1\ldots,Q_r\}$ is a minimal primary decomposition of $I$. For instance, in Example \ref{ex1}, $I=\langle x_1+a_1\rangle\cap \langle x_1^2,x_2\rangle$ is a minimal primary decomposition of $I$ but $\varphi_0 (\mathcal{Q})=\{\langle x_1\rangle,\langle x_1^2,x_2 \rangle\}$ is not that of $\varphi_0(I)=\langle x_1^3,x_1x_2\rangle=\langle x_1^3 \rangle\cap \langle x_1,x_2 \rangle$. On the other hand, for any $a\neq 0$ in $\mathbb{Q}$, $\varphi_a (\mathcal{Q})=\{\langle x_1+a\rangle,\langle x_1^2,x_2 \rangle\}$ is a minimal primary decomposition of $\varphi_a(I)=\langle (x_1+a_1)x_1^2,(x_1+a_1)x_2\rangle$.
\end{Remark}

We introduce a maximal independent set, which is useful to transform a positive dimensional ideal into a zero-dimensional one. 

\begin{Definition}[Maximal Independent Set \cite{greuel2002singular}, Definition 3.5.3]
    Let $I$ be a proper ideal of $K[A,X]$. A subset $U$ of $A\cup X$ is called a {\em maximal independent set} (MIS in short) of $I$ if $I\cap K[U]=\{0\}$ and the cardinality of $U$ is equal to the Krull-dimension of $I$. 
\end{Definition}

\begin{Example}
    $U=\{a_1,x_2\}$ is a maximal independent set of $I=\langle (x_1+a_1)x_1^2,(x_1+a_1)x_2 \rangle$ in $K[a_1,x_1,x_2]$. 
\end{Example}

We remark that $I^c=IK(U)[(X\cup A)\setminus U]$ is a $0$-dimensional ideal of $K(U)[(X\cup A)\setminus U]$ and $I^c$ is called the {\em contraction} of $I$ with respect to $U$.

\subsection{Comprehensive Gr\"obner System}

Second, we review the basic properties of comprehensive Gr\"obner system (CGS in short). The notion ``locally closed set'' below is important in {\em covering} of $K^m$.   

\begin{Definition}[Localy Closed Set \cite{Brunat}, Section 2] \label{def:ACS}
    A subset $C$ of $K^m$ is called a {\em locally closed set} if there exists ideals $J_1$ and $J_2$ of $K[A]$ such that $C=V(J_1)\setminus V(J_2)$. 
\end{Definition}

\begin{Remark}
    Since $V_K(\langle 1\rangle)=\emptyset$ and $V_K(\langle 0\rangle)=K^m$, both $\emptyset=V_K(\langle 1\rangle)\setminus V_K(\langle 1\rangle)$ and $K^m=V_K(\langle 0\rangle)\setminus V_K(\langle 1\rangle)$ are locally closed sets.
\end{Remark}

\begin{Remark}
    For proper ideals $J_1$ and $J_2$ of $K[A]$, $V_{K}(J_1)\setminus V_{K}(J_2)$
 may be the empty set though $V_{\mathbb{C}}(J_1)\setminus V_{\mathbb{C}}(J_2)$ is not. For example, let $J_1=\langle a^4-1\rangle$ and $J=\langle a^2-1\rangle$ of $\mathbb{Q}[x,a]$ then $V_{\mathbb{C}}(J_1)\setminus V_{\mathbb{C}}(J_2)=\{\pm i,\pm 1\}\setminus \{\pm 1\}=\{\pm i\}$ but $V_{\mathbb{Q}}(J_1)\setminus V_{\mathbb{Q}}(J_2)=\{\pm 1\}\setminus \{\pm 1\}=\emptyset$.
\end{Remark}

\begin{Example}
    $V(\langle a_1\rangle)\setminus V(\langle a_2\rangle)=\{(0, b) | b\not = 0 \in \mathbb{Q}\}$ is a locally closed set of $\mathbb{Q}^2$ for $A=\{a_1,a_2\}$. 
\end{Example}

\begin{Remark}
    In Definition \ref{def:ACS}, we may assume that $J_1\subset J_2$ since $V(J_1)\setminus V(J_2)=V(J_1)\setminus (V(J_1)\cap V(J_2))=V(J_1)\setminus V(J_1+J_2)$. 
\end{Remark}

For a given parametric ideal $I$ of $K[A,X]$, we divide $K^m$ into several locally closed sets such that for each locally closed set $C$, the reduced Gr\"obner basis of $\varphi_\alpha (I)$ is stable, i.e., there exists a finite subset $G$ of $K[A,X]$ such that $\varphi_\alpha(G)$ is the reduced Gr\"obner basis of $\varphi_\alpha (I)$, for any $\alpha \in C$. We call the set of pairs of a locally closed set and the (reduced) Gr\"obner basis {\em a comprehensive Gr\"obner system} of $I$. 

\begin{Definition}[Comprehensive Gr\"obner System \cite{Suzuki-Sato}, Definition 1] \label{def:CGS}
	Let $F$ be a finite set of polynomials in $K[A,X]$, $C_1,\ldots,C_l$ locally closed sets of $K^m$ with $\bigcup_{i=1}^l C_i = K^m$ and $G_1,\ldots,G_l$ finite subsets of $K[A,X]$ where $l$ is a positive integer. A finite set $\mathcal{G}=\{(C_1,G_1),$ $\ldots$ $,(C_l,G_l)\}$ is called a {\em comprehensive Gr\"obner system} of $F$ if $\varphi_\alpha(G_i)$ is a Gr\"obner basis of the ideal $\langle \varphi_\alpha(F) \rangle$ for any $\alpha\in C_i$ and each $i=1,\ldots,l$. 
\end{Definition}

\begin{Example}
    Fix the lexicographic ordering $x_1\succ x_2$. For $A=\{a_1\}$, $X=\{x_1,x_2\}$ and $F=\{a_1x_1^2+x_2,x_2^2\}$ in $\mathbb{C}[a_1,x_1,x_2]$, 
	\[
	\mathcal{G}=\{(\mathbb{C}\setminus V(\langle a_1\rangle),\{a_1x_1^2+x_2,x_2^2\}),(V(\langle a_1\rangle),\{x_2\})\}
	\]
	is a comprehensive Gr\"obner System of $F$ with respect to $\succ$. Here, $V(\langle 0\rangle)\setminus V(\langle a_1\rangle)=\mathbb{C}\setminus \{0\}$ and $V(\langle a_1\rangle)=\{0\}$. 
\end{Example}

Algorithms for CGS have been studied in \cite{Kapur,Montes,Suzuki-Sato,Weispfenning}. Suzuki-Sato Algorithm \cite{Suzuki-Sato} uses the following proposition to compute CGSs efficiently. 

\begin{Proposition}[\cite{Suzuki-Sato}, Lemma 2.2] \label{prop:generalCGS}
Let $G$ be a Gr\"obner basis of an ideal $\langle F\rangle$ in $K[A,X]$ with respect to a block ordering $X\succ\succ A$. If $\varphi_\alpha(\lc
_\succ (g)) \neq 0$ for each $g \in G\setminus K[A]$, then $\varphi_\alpha (G)$ is a Gr\"obner basis of $\langle \varphi_\alpha(F)\rangle$ in $K[X]$ with respect to $\succ_X$ for any $\alpha\in V(\langle G\cap K[A]\rangle)$. 
\end{Proposition}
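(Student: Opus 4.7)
The plan is to verify Buchberger's criterion for $\varphi_\alpha(G\setminus K[A])$ in $K[X]$ with respect to $\succ_X$. First I would separate $G$ according to behavior under specialization. For $g\in G\cap K[A]$, the assumption $\alpha\in V(\langle G\cap K[A]\rangle)$ forces $\varphi_\alpha(g)=0$, so such elements drop out. For $g\in G\setminus K[A]$, the hypothesis $\varphi_\alpha(\lc_\succ(g))\neq 0$ guarantees $\varphi_\alpha(g)\neq 0$ with $\lm_{\succ_X}(\varphi_\alpha(g))=\lm_\succ(g)$ and leading coefficient $\varphi_\alpha(\lc_\succ(g))$. The ideal identity $\langle\varphi_\alpha(G)\rangle=\langle\varphi_\alpha(F)\rangle$ in $K[X]$ follows from the surjectivity of $\varphi_\alpha$ together with $\langle G\rangle=\langle F\rangle$ in $K[A,X]$.

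For $g_i,g_j\in G\setminus K[A]$, the key construction is the auxiliary polynomial
\[
\sigma=\lc_\succ(g_j)\,\frac{L_X}{\lm_\succ(g_i)}\,g_i-\lc_\succ(g_i)\,\frac{L_X}{\lm_\succ(g_j)}\,g_j\in K[A,X],
\]
where $L_X=\mathrm{lcm}(\lm_\succ(g_i),\lm_\succ(g_j))\in T(X)$. By design the leading $X$-contributions cancel, so $\lm_\succ(\sigma)\prec_X L_X$, and applying $\varphi_\alpha$ yields $\varphi_\alpha(\sigma)=S_{\succ_X}(\varphi_\alpha(g_i),\varphi_\alpha(g_j))$ thanks to the leading-structure preservation of the first step. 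Since $\sigma\in\langle g_i,g_j\rangle\subseteq\langle F\rangle$ and $G$ is a Gr\"obner basis of $\langle F\rangle$ in $K[A,X]$ with respect to $\succ$, reduction yields a standard representation $\sigma=\sum_k h_k g_k$ whose $\succ$-leading-term bounds, combined with the block property $X\succ\succ A$, translate into the leading-$X$-monomial bound $\lm_\succ(h_k g_k)\prec_X L_X$ for every $k$ with $h_k g_k\neq 0$.

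Applying $\varphi_\alpha$ to this representation makes the summands with $g_k\in G\cap K[A]$ vanish, and for the remaining summands
\[
\lm_{\succ_X}(\varphi_\alpha(h_k)\varphi_\alpha(g_k))\preceq_X \lm_\succ(h_k g_k)\prec_X \mathrm{lcm}(\lm_{\succ_X}(\varphi_\alpha(g_i)),\lm_{\succ_X}(\varphi_\alpha(g_j))),
\]
since specialization can only decrease leading $X$-monomials. This is exactly a standard representation of $S_{\succ_X}(\varphi_\alpha(g_i),\varphi_\alpha(g_j))$ over $\varphi_\alpha(G\setminus K[A])$, so Buchberger's criterion gives that $\varphi_\alpha(G\setminus K[A])$ is a Gr\"obner basis of $\langle\varphi_\alpha(F)\rangle$ in $K[X]$ with respect to $\succ_X$. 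The main subtlety is that the ordinary S-polynomial of $g_i,g_j$ in $K[A,X]$ with respect to $\succ$ is \emph{not} in general the polynomial whose image under $\varphi_\alpha$ equals the $K[X]$-S-polynomial; the auxiliary $\sigma$ (essentially the S-polynomial viewed in $K[A][X]$) is what bridges the two, and it still lies in $\langle G\rangle$ so the Gr\"obner basis property of $G$ can be invoked.
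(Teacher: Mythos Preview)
The paper does not prove this proposition; it is quoted from \cite{Suzuki-Sato} and used as a black box. Your argument is therefore being compared against the original Suzuki--Sato proof rather than anything in the present paper, and your approach via Buchberger's criterion and standard representations is essentially the standard one and is correct.

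One small inaccuracy: you write that $\varphi_\alpha(\sigma)=S_{\succ_X}(\varphi_\alpha(g_i),\varphi_\alpha(g_j))$, but in fact
\[
\varphi_\alpha(\sigma)=\varphi_\alpha(\lc_\succ(g_i))\,\varphi_\alpha(\lc_\succ(g_j))\cdot S_{\succ_X}(\varphi_\alpha(g_i),\varphi_\alpha(g_j)),
\]
a nonzero scalar multiple of the S-polynomial rather than the S-polynomial itself. This is harmless for the argument, since a standard representation of a nonzero scalar multiple immediately yields one for the S-polynomial, but the equality as stated is not literally correct. Everything else---the preservation of $X$-leading monomials under the hypothesis, the passage from the $\succ$-bound on $\mathrm{LT}(h_kg_k)$ to the $\succ_X$-bound on $\lm_\succ(h_kg_k)$ via the block property, and the vanishing of the $g_k\in K[A]$ summands---is sound.
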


In order to compute a parametic primary decomposition of $I$ effectively, we first compute primary decompsition of $I\cap K[A]$, which we call {\em the condition ideal of $I$}.

\begin{Definition}
    For an ideal $I$ of $K[A,X]$, we call $I\cap K[A]$ the condition ideal of $I$. If the condition ideal of $I$ is a prime (resp. a radical) ideal then we say that $I$ has a {\em prime} (resp. a {\em radical}) condition ideal. In particular, if the condition ideal of $I$ is zero then we say that $I$ has a {\em generic} condition ideal.
\end{Definition}

\begin{Example}
    $I_1=\langle ax^2,a^2+1\rangle$ has a prime condition ideal $I_1\cap \mathbb{Q}[a]=\langle a^2+1\rangle$. $I_2=\langle x^2+a,y^2\rangle$ has a generic condition ideal $I_1\cap \mathbb{Q}[a]=\langle 0\rangle$.
\end{Example}

\begin{Remark}
    Since $\varphi_\alpha(I)=K[X]$ for any $\alpha\in K^m\setminus V(I\cap K[A])$, it is sufficient to consider only the value $\alpha$ in $V(I\cap K[A])$. Also, we may assume that the condition ideal is a proper radical ideal as $V(I\cap K[A])=V(\sqrt{I\cap K[A]})$. 
\end{Remark}

The following lemma states that the radical of the condition ideal does not change after we replace $I$ by $I+\sqrt{I\cap K[A]]}$.

\begin{Lemma}
    Let $I$ be a proper ideal of $K[A,X]$ and $I^\prime=I+\sqrt{I\cap K[A]}$. Then, the condition ideal of $I^\prime$ is equal to $\sqrt{I\cap K[A]}$. In particular, $I^\prime$ has a radical condition ideal.
\end{Lemma}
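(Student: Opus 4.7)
The plan is to deduce the lemma from the equality of radicals $\sqrt{I'} = \sqrt{I}$, from which $I' \cap K[A] = \sqrt{I \cap K[A]}$ falls out in a single line. I would not try to manipulate elements of $I' \cap K[A]$ directly by expanding them in terms of $X$-monomials, since one would then have to control the constant-in-$X$ coefficients of elements of $I$, which we have no information about.

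The preparatory step is the standard identity $\sqrt{I \cap K[A]} = \sqrt{I} \cap K[A]$. The inclusion $\subseteq$ is immediate from $I \cap K[A] \subseteq I$; the reverse holds because if $f \in K[A]$ satisfies $f^m \in I$, then $f^m$ already lies in $K[A]$, hence in $I \cap K[A]$. Abbreviate $J := \sqrt{I \cap K[A]}$, so that $J$ is simultaneously $\sqrt{I} \cap K[A]$.

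The main step is to show $\sqrt{I'} = \sqrt{I}$. The inclusion $\sqrt{I} \subseteq \sqrt{I'}$ is trivial from $I \subseteq I'$. For the reverse, I would use that $J \subseteq \sqrt{I}$ by the preparatory step, and since $\sqrt{I}$ is an ideal of $K[A,X]$, its closure under multiplication by $K[A,X]$ yields $J \cdot K[A,X] \subseteq \sqrt{I}$. Therefore $I' = I + J \cdot K[A,X] \subseteq \sqrt{I}$, and taking radicals gives $\sqrt{I'} \subseteq \sqrt{I}$.

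To conclude, I chain inclusions: $I' \cap K[A] \subseteq \sqrt{I'} \cap K[A] = \sqrt{I} \cap K[A] = J$, while $J \subseteq I' \cap K[A]$ is immediate since $J \subseteq I'$ by construction and $J \subseteq K[A]$. This gives $I' \cap K[A] = J = \sqrt{I \cap K[A]}$, and the \emph{in particular} clause is automatic because $J$ is tautologically radical. I do not foresee any genuine obstacle; the only place where a misstep would cost time is skipping the preparatory identity $\sqrt{I \cap K[A]} = \sqrt{I} \cap K[A]$, without which comparing radicals of $I$ and $I'$ does not translate back into a statement about condition ideals.
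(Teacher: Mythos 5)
Your proof is correct and takes essentially the same route as the paper: both hinge on $\sqrt{I\cap K[A]}\subset \sqrt{I}$ (hence $I'\subset\sqrt{I}$) together with the identity $\sqrt{I}\cap K[A]=\sqrt{I\cap K[A]}$, and the trivial inclusion $\sqrt{I\cap K[A]}\subset I'\cap K[A]$. The paper compresses this into a single chain of inclusions on $I'\cap K[A]$, whereas you pass through the intermediate equality $\sqrt{I'}=\sqrt{I}$; that is a mildly stronger waypoint but the mathematical content is identical.
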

\begin{proof}
    Since 
    \begin{align*}
        (I+\sqrt{I\cap K[A]})\cap K[A]&\subset (\sqrt{I}+\sqrt{I\cap K[A]})\cap K[A]\\
        &\subset \sqrt{I}\cap K[A]\\
        &=\sqrt{I\cap K[A]}\\
        &\subset (I+\sqrt{I\cap K[A]})\cap K[A], 
    \end{align*}
    we obtain $I^\prime\cap K[A]=(I+\sqrt{I\cap K[A]})\cap K[A]=\sqrt{I\cap K[A]}$.
\end{proof}

For recurrent computations, we consider an ideal $J$ of $K[A]$ containing $I\cap K[A]$ properly. If $I\cap K[A]$ is radical then varieties of $I\cap K[A]$ and $J$ are different over $\mathbb{C}$ as follows. 

\begin{Lemma}\label{lem:procon}
    Let $J_1$ and $J_2$ be ideals of $K[A]$. If $J_1$ is a proper radical ideal and $J_2\supsetneq J_1$ then $V_{\mathbb{C}}(J_1)\setminus V_{\mathbb{C}}(J_2)\neq \emptyset$.
\end{Lemma}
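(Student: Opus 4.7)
The plan is to argue by contrapositive: I will assume $V_{\mathbb{C}}(J_1)\subseteq V_{\mathbb{C}}(J_2)$ and deduce $J_2\subseteq J_1$, which directly contradicts the hypothesis $J_2\supsetneq J_1$. The move to work over $\mathbb{C}$ rather than $K$ is crucial because Hilbert's Nullstellensatz requires an algebraically closed field; since $K$ has characteristic $0$ and is either $\mathbb{Q}$ or algebraic over $\mathbb{Q}$, there is a canonical embedding $K\hookrightarrow\mathbb{C}$, and we have the identification $V_{\mathbb{C}}(J_i)=V_{\mathbb{C}}(J_i\mathbb{C}[A])$.

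Granting this setup, I would apply the Nullstellensatz in $\mathbb{C}[A]$ to turn the assumed containment of varieties into the reverse containment of vanishing ideals,
\[
\sqrt{J_2\mathbb{C}[A]} \;=\; I_{\mathbb{C}}(V_{\mathbb{C}}(J_2)) \;\subseteq\; I_{\mathbb{C}}(V_{\mathbb{C}}(J_1)) \;=\; \sqrt{J_1\mathbb{C}[A]}.
\]
Next I would show that $J_1\mathbb{C}[A]$ is itself radical, so that the right-hand side simplifies to $J_1\mathbb{C}[A]$. Finally, contracting back to $K[A]$ along the faithfully flat extension $K[A]\hookrightarrow\mathbb{C}[A]$ (faithful flatness holds because $\mathbb{C}$ is a free $K$-module, being a $K$-vector space), one has $J_1\mathbb{C}[A]\cap K[A]=J_1$, and hence
\[
J_2 \;\subseteq\; J_2\mathbb{C}[A]\cap K[A] \;\subseteq\; J_1\mathbb{C}[A]\cap K[A] \;=\; J_1,
\]
the desired contradiction.

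The main obstacle, and the only non-formal step, is the preservation of radicality under the base change $K\subseteq\mathbb{C}$: given that $J_1=\sqrt{J_1}$ in $K[A]$, one must conclude $J_1\mathbb{C}[A]=\sqrt{J_1\mathbb{C}[A]}$ in $\mathbb{C}[A]$. This is where the characteristic-$0$ assumption does essential work, via the fact that $\mathbb{C}/K$ is a separable (equivalently, geometrically reduced) field extension, so radical ideals remain radical after tensoring with $\mathbb{C}$. If one prefers to avoid invoking geometric reducedness, the same fact can be obtained by fixing a $K$-basis $\{e_\lambda\}$ of $\mathbb{C}$ with $e_{\lambda_0}=1$, expanding elements of $\mathbb{C}[A]$ component-wise in this basis, and verifying both the radicality of $J_1\mathbb{C}[A]$ and the contraction identity $J_1\mathbb{C}[A]\cap K[A]=J_1$ by straightforward coefficient comparisons. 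Everything else in the argument is a standard application of the Nullstellensatz together with the ideal--variety correspondence.
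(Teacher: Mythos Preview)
Your argument is correct and follows the same route as the paper: both deduce the result from Hilbert's Nullstellensatz via the contrapositive, using that $J_1$ is radical. You are considerably more careful than the paper about the base change $K\hookrightarrow\mathbb{C}$ (the paper simply writes ``$\sqrt{J_2}\not\subset\sqrt{J_1}$, i.e.\ $V_{\mathbb{C}}(J_1)\not\subset V_{\mathbb{C}}(J_2)$ by Hilbert's Nullstellensatz'' without mentioning extension or contraction of ideals), and your appeal to preservation of radicality under the separable extension $K\subset\mathbb{C}$ is correct though slightly heavier than needed, since the elementary identity $\sqrt{J\mathbb{C}[A]}\cap K[A]=\sqrt{J}$ already suffices.
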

\begin{proof}
    As $J_2\supsetneq J_1$, $J_2\not \subset J_1=\sqrt{J_1}$ and then $\sqrt{J_2}\not \subset \sqrt{J_1}$ i.e. $V_{\mathbb{C}}(J_1)\not \subset V_{\mathbb{C}}(J_2)$ by Hilbert's Nullstellensatz. Since $J_1$ is proper, $V_{\mathbb{C}}(J_1)\neq \emptyset$ and $V_{\mathbb{C}}(J_1)\setminus V_{\mathbb{C}}(J_2)$ is not empty.
\end{proof}

Proposition \ref{prop:generalCGS} can be rewritten using the language of locally closed sets as follows. Corollary \ref{cor:generalCGS} is very useful to compute stable pairs of ideal operations which we discuss in Section 3.

\begin{Corollary}\label{cor:generalCGS}
    Let $I$ be a proper ideal with a radical condition ideal. Let $G$ be the reduced Gr\"obner basis of $I$ in $K[A,X]$ with respect to a block ordering $X\succ\succ A$. Then, there exists an ideal $J$ of $K[A]$ such that $J\supsetneq I\cap K[A]$ and $\varphi_\alpha(G)$ is a Gr\"obner basis of $\varphi_\alpha(I)$ with respect to $\succ_X$ for any $\alpha\in V(I\cap K[A])\setminus V(J)$. In particular, $V_{\mathbb{C}}(I\cap K[A])\setminus V_{\mathbb{C}}(J)\neq \emptyset$. 
\end{Corollary}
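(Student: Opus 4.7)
The plan is to apply Proposition \ref{prop:generalCGS} directly to the reduced Gröbner basis $G$. Because $\succ$ is a block ordering with $X \succ\succ A$, elimination gives $\langle G \cap K[A]\rangle = I \cap K[A]$, so $V(\langle G \cap K[A]\rangle) = V(I \cap K[A])$. I would set $h = \prod_{g \in G \setminus K[A]} \lc_\succ(g) \in K[A]$ (with the convention $h = 1$ if $G \subseteq K[A]$) and define $J = \langle I \cap K[A], h \rangle$. For any $\alpha \in V(I \cap K[A]) \setminus V(J)$ we have $h(\alpha) \neq 0$, hence $\varphi_\alpha(\lc_\succ(g)) \neq 0$ for every $g \in G \setminus K[A]$, and Proposition \ref{prop:generalCGS} yields that $\varphi_\alpha(G)$ is a Gröbner basis of $\langle \varphi_\alpha(G)\rangle = \varphi_\alpha(I)$ with respect to $\succ_X$.

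The next step is to show $J \supsetneq I \cap K[A]$ by exploiting the reducedness of $G$. Every monomial of any $g \in G \setminus K[A]$ has the form $a^\beta x^\gamma$ with $x^\gamma \preceq_X \lm_\succ(g)$, and reducedness forbids such a monomial from being divisible by the leading monomial of any other element of $G$, in particular any $g' \in G \cap K[A]$ whose leading monomial lies purely in $A$. Consequently every $A$-monomial appearing in $\lc_\succ(g)$ avoids the leading monomial ideal of $G \cap K[A]$, so $\lc_\succ(g)$ is in normal form modulo the Gröbner basis $G \cap K[A]$ of $I \cap K[A]$; being nonzero and in normal form, it cannot lie in $I \cap K[A]$. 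Combined with the previous paragraph, Lemma \ref{lem:procon} applied with $J_1 = I \cap K[A]$ and $J_2 = J$ then gives $V_{\mathbb{C}}(I \cap K[A]) \setminus V_{\mathbb{C}}(J) \neq \emptyset$.

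The main obstacle I anticipate is bridging the gap between ``each $\lc_\succ(g) \notin I \cap K[A]$'' and ``$h = \prod_{g} \lc_\succ(g) \notin I \cap K[A]$''. When $I \cap K[A]$ is prime this passage is free, but for a general radical ideal a product of nonmembers may well lie in the ideal (think of $\langle ab \rangle$ with factors $a$ and $b$, which in fact is realised by $I = \langle ax_1, bx_2, ab\rangle$ whose reduced Gröbner basis has $h = ab \in I \cap K[A]$). If this obstruction occurs with the product, a fallback is to replace $h$ by a single $\lc_\succ(g_0)$ for some $g_0 \in G \setminus K[A]$; this keeps $J$ strictly above $I \cap K[A]$ by the reducedness argument, but then the Gröbner basis property on the complement has to be re-established by hand, using that at such $\alpha$ the polynomials $\varphi_\alpha(g)$ for which $\lc_\succ(g)$ vanishes are forced, by reducedness of $G$, either to vanish or to drop to monomials that do not disrupt the remaining leading monomial set.
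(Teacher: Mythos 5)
You have reproduced the paper's proof essentially verbatim — the paper also sets $J=\langle \prod_{g\in G\setminus K[A]}\lc_{\succ}(g)\rangle+I\cap K[A]$, argues from reducedness of $G$ that each $\lc_\succ(g)\notin I\cap K[A]$ (precisely your normal-form observation), concludes $J\supsetneq I\cap K[A]$, and then applies Proposition~\ref{prop:generalCGS} and Lemma~\ref{lem:procon}. However, the obstacle you flag in your last paragraph is a genuine gap, and it is present in the paper's own proof as well. ``Each $\lc_\succ(g)\notin I\cap K[A]$'' yields ``$\prod_g \lc_\succ(g)\notin I\cap K[A]$'' only when $I\cap K[A]$ is prime; for a merely radical condition ideal the inference fails, and your example is valid: for $I=\langle a x_1, b x_2, ab\rangle\subset \mathbb{Q}[a,b,x_1,x_2]$ the reduced Gr\"obner basis for a block order with $\{x_1,x_2\}\succ\succ\{a,b\}$ is $\{a x_1,\, b x_2,\, ab\}$, the condition ideal $\langle ab\rangle$ is radical but not prime, $a,b\notin\langle ab\rangle$ individually, yet $ab\in\langle ab\rangle$, so the construction gives $J=\langle ab\rangle=I\cap K[A]$ and the containment is not proper.

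Two further remarks. First, your fallback — replace $h$ by a single $\lc_\succ(g_0)$ — keeps $J\supsetneq I\cap K[A]$ but, as you yourself note, you then lose the hypothesis of Proposition~\ref{prop:generalCGS} for the other $g\in G\setminus K[A]$; the hand-wave about ``dropping to monomials that do not disrupt the remaining leading monomial set'' is not established by reducedness alone, so this does not close the gap as stated. Second, in the paper's actual uses of Corollary~\ref{cor:generalCGS} (inside Proposition~\ref{prop:genpd} and Lemma~\ref{lem:inclusion}) the ideal in play always has a \emph{prime} condition ideal, in which case the product passage is automatic and the proof is sound; the corollary as printed is simply overstated for the radical case. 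A clean fix would be either to add the primality hypothesis to the corollary, or to decompose $I\cap K[A]=P_1\cap\cdots\cap P_s$ into primes, handle each $V(P_i)$ separately (where the product argument works over $P_i$), and take the union of the resulting constructible sets — which then yields a constructible set rather than a single locally closed set.
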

\begin{proof}
     Let $J=\langle \prod_{g\in G\setminus K[A]}\lc_{\succ}(g)\rangle+I\cap K[A]$. As $G$ is the reduced Gr\"obner basis of $I$ with respect to the block ordering $X\succ\succ A$ , $G\cap K[A]$ is the reduced Gr\"obner basis of $I\cap K[A]$ with respect to $\succ_A$. Thus, for  each $g\in G$, $\lc_\succ(g)$ is not divided by any $\lt_\succ(h)\in G\cap K[A]$ i.e. $\lc_\succ(g)\not \in I\cap K[A]$. Thus, $J\supsetneq I\cap K[A]$.

    For $\alpha\in V(I\cap K[A])\setminus V(J)$, $\varphi_\alpha(\lc
_\succ (g)) \neq 0$ for each $g \in G\setminus K[A]$. Thus, from Proposition \ref{prop:generalCGS}, $\varphi_\alpha(G)$ is a Gr\"obner basis of $\langle \varphi_\alpha(G)\rangle$ with respect to $\succ_X$ for any $\alpha\in V_{\mathbb{C}}(G\cap K[A])\setminus V_{\mathbb{C}}(J)$. Since $\varphi_{\alpha}(\langle G\rangle)=\langle \varphi_\alpha(G)\rangle$ for any $\alpha\in K^m$ and $G\cap K[A]$ is a Gr\"obner basis of $I\cap K[A]$ with respect to $\succ_{A}$, $G$ is also a Gr\"obner basis of $\varphi_\alpha(I)=\varphi_{\alpha}(\langle G\rangle)$ with respect to $\succ_X$ for any $\alpha\in V(I\cap K[A])\setminus V(J)$. By Lemma \ref{lem:procon}, $V_{\mathbb{C}}(I\cap K[A])\setminus V_{\mathbb{C}}(J)$ is not empty.
\end{proof}

We see some properties of locally closed sets in the following. First, we recall that the intersection of locally closed sets is also a locally closed set. Similarly, the set difference of two locally closed sets is a union of two locally closed sets. 

\begin{Lemma}[\cite{Yokoyama2006}, Remark 1] \label{lem:union-of-lcs}
    For locally closed sets $C_1$ and $C_2$, $C_1\cap C_2$ is a locally closed set. Also, $C_1\setminus C_2$ is a union of two locally closed sets.
\end{Lemma}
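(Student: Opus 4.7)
The plan is to prove both claims by unpacking the definition $C = V(J_1)\setminus V(J_2)$ and translating set-theoretic operations on the $C_i$ into algebraic operations on the defining ideals, using the standard Nullstellensatz-style identities $V(J)\cap V(K)=V(J+K)$, $V(J)\cup V(K)=V(JK)$, together with the trivial observations from the first Remark after Definition \ref{def:ACS} that both $\emptyset$ and $K^m$ are locally closed.

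First I would handle the intersection. Write $C_1=V(J_1)\setminus V(J_2)$ and $C_2=V(K_1)\setminus V(K_2)$. Using the general identity $(A\setminus B)\cap(C\setminus D)=(A\cap C)\setminus(B\cup D)$ together with $V(J_1)\cap V(K_1)=V(J_1+K_1)$ and $V(J_2)\cup V(K_2)=V(J_2K_2)$, I obtain
\[
C_1\cap C_2 \;=\; V(J_1+K_1)\setminus V(J_2K_2),
\]
which is directly of the form in Definition \ref{def:ACS}, hence a locally closed set.

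For the set-theoretic difference, the strategy is to express $K^m\setminus C_2$ as a union of two locally closed sets and then intersect with $C_1$, applying the intersection case termwise. Using $K^m\setminus(V(K_1)\setminus V(K_2))=(K^m\setminus V(K_1))\cup V(K_2)$, and rewriting each term as $K^m\setminus V(K_1)=V(\langle 0\rangle)\setminus V(K_1)$ and $V(K_2)=V(K_2)\setminus V(\langle 1\rangle)$ to fit the definition, I get
\[
C_1\setminus C_2 \;=\; \bigl(C_1\cap (V(\langle 0\rangle)\setminus V(K_1))\bigr)\cup\bigl(C_1\cap (V(K_2)\setminus V(\langle 1\rangle))\bigr).
\]
By the intersection case already proved, each of the two pieces on the right is a locally closed set, so $C_1\setminus C_2$ is a union of two locally closed sets, as claimed.

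There is no real obstacle here; the argument is a direct dictionary translation between Boolean operations on varieties and arithmetic operations on ideals. The only mild subtlety worth flagging in the write-up is that the ambient field $K$ need not be algebraically closed, so I would be careful to express everything via the purely formal identities for $V_K(\cdot)$ that hold for any field (sum of ideals for intersection, product for union), rather than invoking the Nullstellensatz; this keeps the proof valid for $K=\mathbb{Q}$ as used in the rest of the paper.
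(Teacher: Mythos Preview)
Your proof is correct. The paper does not supply its own proof of this lemma; it simply records the result with a citation to \cite{Yokoyama2006}, Remark~1, so there is nothing to compare against beyond noting that your argument is the standard one and matches what that reference contains.
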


In general, for a comprehensive Gr\"obner system $\{(G_1,C_1),\ldots,(G_l,C_l)\}$ of $(\mathcal{F},(I_1,\ldots,I_r))$, there can be $C_i$ and $C_j$ which have a common stable basis $G$, i.e., $(G,C_i)$ and $(G,C_j)$ are stable pairs of $(\mathcal{F},(I_1,\ldots,I_r))$. In this case, we can combine the two pairs as a new stable pair $(G,C_i\cup C_j)$. Thus, we consider a union of locally closed sets as follows.  

\begin{Definition}[Constructible Set \cite{Brunat}, Section 3]
    We call a finite union of locally closed sets {\em a constructible set}. 
\end{Definition}

We recall that the set of all constructible sets is closed in the following set operations. 

\begin{Lemma}[\cite{Brunat}, Lemma 3.1] \label{lem:cs}
    The set of all constructible sets is closed under the union, intersection, and difference. In other words, $C_1\cup C_2$, $C_1\cap C_2$, and $C_1\setminus C_2$ are all constructible sets for any constructible sets $C_1$ and $C_2$. 
\end{Lemma}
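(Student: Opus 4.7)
The plan is to write every constructible set as a finite union $C=\bigcup_{i=1}^{s}L_i$ of locally closed sets and to reduce each of the three closure claims to the two facts already supplied by Lemma~\ref{lem:union-of-lcs}, namely that an intersection of two locally closed sets is locally closed, and that a set-theoretic difference of two locally closed sets is a union of (at most two) locally closed sets.

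For \emph{union}, the argument is essentially tautological: if $C_1=\bigcup_{i=1}^{s}L_i$ and $C_2=\bigcup_{j=1}^{t}L_j'$ with all $L_i$ and $L_j'$ locally closed, then $C_1\cup C_2$ is by definition a finite union of locally closed sets, hence constructible. For \emph{intersection}, I would distribute over the union to get
\[
C_1\cap C_2 \;=\; \bigcup_{i=1}^{s}\bigcup_{j=1}^{t} (L_i\cap L_j'),
\]
and then invoke the first half of Lemma~\ref{lem:union-of-lcs} to see that each $L_i\cap L_j'$ is locally closed, so the right-hand side is a finite union of locally closed sets.

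For \emph{difference}, which I expect to be the only step requiring real bookkeeping, I would proceed in two stages. First, observe
\[
C_1\setminus C_2 \;=\; \bigcup_{i=1}^{s}\bigcap_{j=1}^{t} (L_i\setminus L_j').
\]
By the second half of Lemma~\ref{lem:union-of-lcs}, each $L_i\setminus L_j'$ is a union of at most two locally closed sets, hence constructible. Second, the inner intersection $\bigcap_{j=1}^{t}(L_i\setminus L_j')$ is a finite intersection of constructible sets, which is constructible by the intersection case already proved (a finite intersection reduces to iterated binary intersection). Taking the outer finite union then stays inside the class of constructible sets, by the union case. This gives $C_1\setminus C_2$ as a constructible set.

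The only subtlety I anticipate is avoiding a circularity between the intersection and difference cases: I should prove union first, then intersection (which uses only Lemma~\ref{lem:union-of-lcs}), and only afterward difference (which uses both Lemma~\ref{lem:union-of-lcs} and the already-established closure under intersection and union). Beyond this ordering, the proof is essentially a matter of distributing unions and intersections and citing Lemma~\ref{lem:union-of-lcs} in the two base cases.
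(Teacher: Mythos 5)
The paper states this lemma with a citation to Brunat--Montes and gives no proof of its own, so there is nothing internal to compare against. Your argument is correct and is the standard one: the union case is immediate from the definition, the intersection case follows by distributing and applying the first half of Lemma~\ref{lem:union-of-lcs}, and the difference case uses the identity $C_1\setminus C_2=\bigcup_i\bigcap_j(L_i\setminus L_j')$ together with the second half of Lemma~\ref{lem:union-of-lcs} and the already-established closure under finite intersection and union; the ordering you impose (union, then intersection, then difference) correctly avoids circularity.
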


\section{Stability of Ideal Operations}
Here, by ideal operations, we call operations that create new ideals from an ideal or ideals, such as for ideals $I_1,\ldots$, $I_{r-1}$ and $I_r$; 
    \begin{itemize}
        \item[(O1)] the sum $I_1+ \cdots +I_r$
        \item[(O2)] the product $I_1\cdot I_2\cdot \cdots \cdot I_r$ \quad ($\langle f\rangle\cdot I$ is abbreviated as $f\cdot I$ for $f\in R$)
        \item[(O3)] the intersection $I_1\cap \cdots \cap I_r$
        \item[(O4)] the quotient $I_1:I_2=\{f\in R\mid f\cdot I_2\subset I_1\}$
        \item[(O5)] the saturation $I_1:I_2^\infty=\bigcup_{k=1}^\infty (I_1:I_2^k)$
		\item[(O6)] the radical $\sqrt{I}=\{f\in R\mid f^m \in I\text{ for some $m\in \mathbb{N}$}\}$
		\item[(O7)] the equidimensional hull $\bigcap_{Q\in \mathcal{Q},\dim Q=\dim I} Q$, where $\mathcal{Q}$ is a minimal primary decomposition of $I$
		\item[(O8)] the localization $I^{ec}=K(U)IK[X\setminus U]\cap K[X]$ with respect to $U$, where $U$ is a maximal independent set of $I$.
    \end{itemize}
Throughout this paper, the input for each ideal of ideal operations is its generating system or its Gr\"obner basis. In this section, we recall the {\em stability} of ideal operations introduced 
by \cite{Yokoyama2006} and discuss a comprehensive system for a parametric primary decomposition. 

\begin{Definition}[Stability of Ideal Operation]
For ideals $I_1,\ldots,I_r$ of $K[A,X]$, an ideal operation $\mathcal{F}$ and a locally closed set $C$ of $K^m$, we say that $(\mathcal{F},(I_1,\ldots,I_r))$ is {\em stable} on $C$ if 
there exists a finite subset $G$ of $K[A,X]$ such that $\varphi_\alpha(G)$ is a Gr\"obner basis of the ideal $\mathcal{F}(\varphi_\alpha(I_1),\ldots,\varphi_\alpha(I_r))$ of $K[X]$ for all $\alpha\in C$. We call such $C$, $G$ and $(S,G)$ {\em a stable cell, a stable basis and a stable pair of $(\mathcal{F},(I_1,\ldots,I_r))$} respectively.
\end{Definition}

\subsection{Comprehensive Gr\"obner system for Ideal Operations}

First, we introduce a notion of comprehensive system ({\em cell decomposition} in \cite{Yokoyama2006}) of an ideal operation and input ideals (cf. Definition \ref{def:CGS}).  

\begin{Definition} 
    Let $C_1,\ldots,C_l$ be locally closed sets with $\bigcup_{i=1}^l C_i=K^m$, $G_1,\ldots,G_l$ finite subsets of $K[A,X]$ and $(\mathcal{F},(I_1,\ldots,I_r))$ be a pair of an ideal operation and ideals of $K[A,X]$. We say that $\{(C_1,G_1),\ldots,(C_l,G_l)\}$ is a  {\em comprehensive Gr\"obner system} of $(\mathcal{F},(I_1,\ldots,I_r))$ if $(C_i,G_i)$ is a stable pair of $(\mathcal{F},(I_1,\ldots,I_r))$ for each $i=1,\ldots,l$.
\end{Definition}

\begin{Example}
	Let $I_1=\langle x_1+a_1\rangle$ and $I_2=\langle x_1^2,x_2\rangle$. Fix the lexicographic ordering $x_1\succ x_2$. We consider ideal intersection ${\tt Int}$ as an ideal operation, that is, ${\tt Int}(I_1,I_2)=I_1\cap I_2$. For $\alpha \neq 0$,
    \[  
    \varphi_\alpha (I_1)\cap \varphi_\alpha (I_2)=\langle (x_1+\alpha)x_1^2,(x_1+\alpha)x_2 \rangle.
    \]
    On the other hand, for $\alpha=0$, 
    \[  
    \varphi_\alpha (I_1)\cap \varphi_\alpha (I_2)=\langle x_1^2,x_1x_2 \rangle.
    \]
    Thus, $\{(\mathbb{Q}\setminus V(a_1),\{(x_1+\alpha)x_1^2,(x_1+\alpha)x_2\}),(V(a_1),\{x_1^2,x_1x_2\})\}$ is a comprehensive Gr\"obner system of $({\tt Int},(I_1,I_2))$ with respect to $\succ$. 
\end{Example}

Ideal operations which can be computed directly from Gr\"obner bases have comprehensive systems as follows. Algorithms for those ideal operations are explained in \cite{greuel2002singular}. 

\begin{Proposition}[\cite{Yokoyama2006}, Section 2] \label{prop:ideal_op}
    All of the Ideal operations $(O1),\ldots,(O8)$ have comprehensive Gr\"obner systems for any input ideals of $K[A,X]$. 
\end{Proposition}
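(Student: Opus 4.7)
The plan is to proceed by recursion on the algorithmic structure of each operation, with Corollary \ref{cor:generalCGS} as the central engine. Given an operation $\mathcal{F}$ and input ideals $I_1,\ldots,I_r$, I would first replace each $I_j$ by $I_j+\sqrt{I_j\cap K[A]}$ so that, by the preceding lemma, every input has a radical condition ideal; then I would unfold the standard algorithm for $\mathcal{F}$ (as described in \cite{greuel2002singular}) into its constituent Gr\"obner-basis computations with respect to suitable block orderings $X\succ\succ A$. At each such computation, Corollary \ref{cor:generalCGS} produces a locally closed stable cell $V(I\cap K[A])\setminus V(J)$ with $J\supsetneq I\cap K[A]$, and the residual variety $V(J)\cap V(I\cap K[A])$ is peeled off and fed back into the recursion under the strictly enlarged condition ideal.

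For the direct operations (O1) and (O2), a generating set of the output is read off at once from generating sets of the inputs, so a single application of Corollary \ref{cor:generalCGS} to that generating set handles the main cell. The elimination-type operations — (O3) intersection via $tI_1+(1-t)I_2$, (O4) quotient and (O5) saturation via the standard reduction to intersection with a principal ideal, and (O8) localization via elimination over $K(U)$ — decompose into finitely many elimination Gr\"obner basis computations, and the same argument applies termwise. Throughout, Lemma \ref{lem:union-of-lcs} lets me combine the resulting pieces and subtract off the complements, while Lemma \ref{lem:cs} keeps the parameter cover inside the class of constructible sets.

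The more delicate cases are (O6) radical and (O7) equidimensional hull, whose classical algorithms internally invoke univariate factorization in $K[A][u]$ and zero-divisor tests on polynomials in $K[A]$. Each such internal branching corresponds to splitting the current cell according to whether a specified polynomial in $K[A]$ vanishes, which preserves constructibility; on each piece the algorithm continues with strictly fewer branchings to make. The main obstacle I expect is controlling this recursion to ensure termination: the safeguard is that every recursive call either strictly enlarges the current condition ideal — which can happen only finitely many times by Noetherianity of $K[A]$, together with Lemma \ref{lem:procon} — or reduces the complexity parameter of the underlying non-parametric algorithm, so the whole recursion is well-founded. Assembling the finitely many stable pairs produced across all branches via Lemma \ref{lem:cs} then yields a comprehensive Gr\"obner system for $(\mathcal{F},(I_1,\ldots,I_r))$ covering all of $K^m$.
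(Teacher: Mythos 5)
Your proposal takes the same route as the paper but spells out what the paper leaves as a one-sentence citation: the paper simply observes that since each of $(O1),\ldots,(O8)$ is computable by Gr\"obner-basis algorithms, one obtains a comprehensive system by running those algorithms parametrically via CGS (citing \cite{Yokoyama2006}); you unpack this into a concrete recursion driven by Corollary~\ref{cor:generalCGS} and Noetherian termination, which is exactly the mechanism implicit in that citation. The extra detail you supply — reducing to radical condition ideals, peeling off the residual variety, and combining cells with Lemmas~\ref{lem:union-of-lcs} and~\ref{lem:cs} — is consistent with the machinery the paper actually uses elsewhere (e.g.\ Proposition~\ref{prop:genpd}), so the two arguments are essentially aligned.

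One point worth sharpening: you say the classical algorithms for $(O6)$ and $(O7)$ ``internally invoke univariate factorization.'' If you mean irreducible factorization, this would actually undermine the claim, because irreducibility over $\mathbb{Q}$ of a specialized polynomial is precisely the condition that is \emph{not} captured by locally closed sets — this is the whole reason the paper must introduce ``feasible'' CPDS and Hilbert's irreducibility theorem in Section~\ref{sec5}. The radical and equidimensional-hull algorithms the paper relies on (Gianni--Trager--Zacharias / Eisenbud--Huneke--Vasconcelos style, as referenced via \cite{greuel2002singular}) use only squarefree decomposition, GCDs, saturations, and eliminations, all of which are GB-reducible and hence specialize along CGS cells. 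You should either restrict ``factorization'' to mean squarefree factorization, or note explicitly that no irreducible factorization is required, since otherwise the branching you describe would not live in the constructible category.
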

\begin{proof}
    Since we can compute those operations by using Gr\"obner bases, we can utilize comprehensive Gr\"obner systems to find comprehensive systems of those those operations and input ideals. 
\end{proof}

\begin{Remark}\label{rem:o1o2}
    One can compute directly comprehensive system of operations $(O1)$ and $(O2)$ since $\varphi_\alpha(I_1+I_2)=\varphi_\alpha(I_1)+\varphi_\alpha(I_2)$ and $\varphi_\alpha(I_1\cdot I_2)=\varphi_\alpha(I_1)\cdot \varphi_\alpha(I_2)$ for any ideals and $\alpha\in K^m$. 
\end{Remark}

If the condition ideal $I\cap K[A]$ is not primary and has a primary decomposition $I\cap K[A]=H_1\cap \cdots \cap H_l$, then one can compute a CPDS $\mathcal{G}_i$ of $I+H_i$ for each $i$ and a CPDS  $\mathcal{G}=\mathcal{G}_1\cup \cdots \cup \mathcal{G}_l$ of $I$. Thus, we may assume that the condition ideal is radical and primary i.e. prime ideal. Also, primary components of $I$ that do not vanish on a zero point $\alpha$ of the condition ideal of $I$ are not needed for a primary decomposition of $\varphi_{\alpha}(I)$ since $\varphi_\alpha(Q)=K[X]$ for such $Q$ and $\alpha$. Thus, we introduce a filtered primary decomposition as follows.

\begin{Definition}
    Let $I$ be a proper ideal of $K[A,X]$ with a prime condition ideal and $\mathcal{Q}=\{Q_1,\ldots,Q_r\}$ a minimal primary decomposition of $I$. Then, we call the set $\mathcal{Q}^\prime=\{Q_i\in \mathcal{Q}\mid \sqrt{Q\cap K[A]}=\sqrt{I\cap K[A]}\}=\{Q_{i_1},\ldots,Q_{i_k}\}$ {\em the filtered primary decomposition of $I$ with respect to $\mathcal{Q}$}.
\end{Definition}

\begin{Example}
    Let $I=\langle x_1^2,a_2x_1x_2,a_1\rangle$. Then, $I\cap K[a_1,a_2]=\langle a_1\rangle$ and $\mathcal{Q}=\{\langle x_1,a_1\rangle,\langle x_1,a_1,a_2\rangle,\langle x_1^2,x_2,a_1\rangle\}$ is a minimal decomposition of $I$. Thus, the filtered primary decomposition of $I$ with respect to $\mathcal{Q}$ is $\{\langle x_1,a_1\rangle,\langle x_1^2,x_2,a_1\rangle\}$ since $\langle x_1,a_1\rangle\cap \mathbb{Q}[a_1,a_2]=\langle a_1\rangle$, $\langle x_1,a_1,a_2\rangle\cap \mathbb{Q}[a_1,a_2]=\langle a_1,a_2\rangle$ and $\langle x_1^2,x_2,a_1\rangle\cap \mathbb{Q}[a_1,a_2]=\langle a_1\rangle$.
\end{Example}

It follows from the following lemma that the filtered primary decomposition has at least one primary component of $I$.

\begin{Lemma}\label{lem:conmin}
    The filtered primary decomposition of $I$is not empty.
\end{Lemma}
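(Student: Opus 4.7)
The plan is to show that the prime ideal $\sqrt{I\cap K[A]}$ must coincide with $\sqrt{Q_i\cap K[A]}$ for at least one $i$, by using the standard fact that a prime ideal cannot be properly contained in a finite intersection of strictly larger prime ideals.

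First I would establish the following chain of equalities. From $I=Q_1\cap\cdots\cap Q_r$, contracting to $K[A]$ yields
\[
I\cap K[A]=\bigcap_{i=1}^r (Q_i\cap K[A]).
\]
Taking radicals and using the standard fact that radicals commute with finite intersections gives
\[
\sqrt{I\cap K[A]}=\bigcap_{i=1}^r \sqrt{Q_i\cap K[A]}.
\]
Next I would verify that $\sqrt{Q_i\cap K[A]}=\sqrt{Q_i}\cap K[A]$ for each $i$; the inclusion $\subset$ is immediate, and the reverse inclusion follows because any $f\in K[A]$ with $f^n\in Q_i$ automatically satisfies $f^n\in Q_i\cap K[A]$. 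Since each $Q_i$ is primary, $\sqrt{Q_i}$ is prime, and the contraction of a prime to a subring is prime, so each $\sqrt{Q_i\cap K[A]}$ is a prime ideal of $K[A]$.

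The main (and only nontrivial) step is then the following classical observation: if $P=P_1\cap\cdots\cap P_r$ where $P$ and every $P_i$ are prime, then $P=P_i$ for some $i$. I would argue by contradiction: if $P\subsetneq P_i$ for every $i$, choose $f_i\in P_i\setminus P$; then $f_1f_2\cdots f_r\in\bigcap_i P_i=P$, contradicting the primality of $P$ since none of the factors lies in $P$. Applying this with $P=\sqrt{I\cap K[A]}$ (prime by hypothesis) and $P_i=\sqrt{Q_i\cap K[A]}$ gives $\sqrt{I\cap K[A]}=\sqrt{Q_{i_0}\cap K[A]}$ for some index $i_0$, so $Q_{i_0}\in\mathcal{Q}'$ and the filtered primary decomposition is nonempty.

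I do not expect any real obstacle here: the only point that requires a moment of care is the pigeonhole-style primality argument, and verifying that the contractions $\sqrt{Q_i\cap K[A]}$ are genuinely prime (rather than merely primary); both are standard commutative-algebra facts that can be stated in one or two lines each.
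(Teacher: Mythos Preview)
Your proposal is correct and follows essentially the same route as the paper: both contract the decomposition $I=\bigcap_i Q_i$ to $K[A]$, take radicals to write the prime $\sqrt{I\cap K[A]}$ as $\bigcap_i \sqrt{Q_i\cap K[A]}$, and then invoke the standard fact that a prime equal to a finite intersection of primes must coincide with one of them. Your write-up is in fact more careful than the paper's, since you explicitly justify why each $\sqrt{Q_i\cap K[A]}$ is prime and spell out the pigeonhole argument.
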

\begin{proof}
    Since $I=Q_1\cap \cdots \cap Q_r$ and $I\cap K[A]=(Q_1\cap K[A])\cap \cdots \cap (Q_r\cap K[A]))$, 
    \[
    \sqrt{I\cap K[A])}=\sqrt{(Q_1\cap K[A])}\cup \cdots \cup \sqrt{Q_r\cap K[A]}.
    \]
    As $\sqrt{I\cap K[A]}$ and $\sqrt{Q_i\cap K[A]}$ are prime ideals, there exists $i$ such that $\sqrt{I\cap K[A]}=\sqrt{Q_i\cap K[A]}$. Thus, $\mathcal{Q}^\prime$ is not empty.
\end{proof}

In the filtered primary decomposition $\mathcal{Q}^\prime$ of $I$, it is enough to consider points in $V(I\cap K[A])$ as follows.

\begin{Lemma}\label{lem:filt}
     Let $\mathcal{Q}^\prime$ be the filtered primary decomposition of a proper ideal $I$ of $K[A,X]$ with respect to a minimal primary decomposition $\mathcal{Q}$ of $I$. For any non-empty subset $\mathcal{Q}^{\prime\prime}$ of $\mathcal{Q}^\prime$,
     \[
     V\left(\bigcap_{Q\in \mathcal{Q}^{\prime\prime}}(Q\cap K[A])\right)=V(I\cap K[A]).
     \]
\end{Lemma}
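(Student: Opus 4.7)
The plan is to reduce the variety-level equality to an equality of radicals, using that in any commutative ring the radical commutes with finite intersections and that $V(J)=V(\sqrt{J})$ for any ideal $J$ of $K[A]$.

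First I would unpack the definition: by construction of $\mathcal{Q}'$, every $Q\in\mathcal{Q}'$ satisfies $\sqrt{Q\cap K[A]}=\sqrt{I\cap K[A]}$, and by hypothesis $I\cap K[A]$ is prime, hence radical, so $\sqrt{I\cap K[A]}=I\cap K[A]$. Since $\mathcal{Q}''\subseteq\mathcal{Q}'$ is non-empty, all contractions $Q\cap K[A]$ for $Q\in\mathcal{Q}''$ have the same radical, namely $I\cap K[A]$.

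Next I would apply the standard identity $\sqrt{J_1\cap\cdots\cap J_k}=\sqrt{J_1}\cap\cdots\cap\sqrt{J_k}$, which gives
\[
\sqrt{\bigcap_{Q\in\mathcal{Q}''}(Q\cap K[A])}
=\bigcap_{Q\in\mathcal{Q}''}\sqrt{Q\cap K[A]}
=I\cap K[A],
\]
the last equality since all the intersected radicals coincide with $I\cap K[A]$ and $\mathcal{Q}''\neq\emptyset$. Taking varieties of both sides and using $V(J)=V(\sqrt{J})$ yields
\[
V\!\left(\bigcap_{Q\in\mathcal{Q}''}(Q\cap K[A])\right)=V(I\cap K[A]),
\]
as desired.

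There is essentially no obstacle here; the only point deserving care is the non-emptiness of $\mathcal{Q}''$, which is needed so that the intersection is not the whole ring and the common-radical argument actually fires. That non-emptiness is guaranteed by the hypothesis of the statement (and, in the case $\mathcal{Q}''=\mathcal{Q}'$, by Lemma~\ref{lem:conmin}). The argument is purely formal and does not need Corollary~\ref{cor:generalCGS} or any CGS machinery.
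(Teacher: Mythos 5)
Your proof is correct and matches the paper's approach in substance: both reduce to the observation that every $Q\in\mathcal{Q}'$ has $\sqrt{Q\cap K[A]}=\sqrt{I\cap K[A]}$ and then pass the finite intersection through a standard identity, you via $\sqrt{\bigcap J_i}=\bigcap\sqrt{J_i}$ followed by $V(J)=V(\sqrt J)$, the paper via $V(\bigcap J_i)=\bigcup V(J_i)$ with each $V(Q\cap K[A])=V(I\cap K[A])$. These are just two phrasings of the same one-line argument.
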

\begin{proof}
    By the definition of the filtered primary decomposition, 
    \[
    V(Q\cap K[A])=V(I\cap K[A])
    \]
    for any $Q\in \mathcal{Q}^\prime$. Thus, 
    \[
    V\left(\bigcap_{Q\in \mathcal{Q}^{\prime\prime}}(Q\cap K[A])\right)=\bigcup_{Q\in \mathcal{Q}^{\prime\prime}}V(Q\cap K[A])=V(I\cap K[A]).
    \]
\end{proof}

When $I$ has a prime condition ideal, we obtain the following lemma.

\begin{Lemma}\label{lem:pa}
    Let $I$ be a proper ideal of $K[A,X]$ with a prime condition ideal and $J_1,\ldots,J_k$ ideals o f $K[A]$. If $J_i\supsetneq I\cap K[A]$ for each $i$, then 
    \[
    \bigcap_{i=1}^k J_i\supsetneq I\cap K[A].
    \]
\end{Lemma}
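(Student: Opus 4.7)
The plan is to exploit primality of the condition ideal $P := I \cap K[A]$ directly via a product argument. The inclusion $\bigcap_{i=1}^k J_i \supseteq P$ is automatic from $J_i \supseteq P$ for each $i$, so the only content is the strictness.

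First, I would use the strict containment hypothesis to pick, for each $i = 1, \ldots, k$, an element $f_i \in J_i \setminus P$. Next, form the product $f := f_1 f_2 \cdots f_k \in K[A]$. Since $f_i \in J_i$ and each $J_i$ is an ideal of $K[A]$, the product $f$ lies in every $J_i$, hence in $\bigcap_{i=1}^k J_i$. On the other hand, primality of $P$ (which is the key hypothesis here) forces $f \notin P$: if $f = f_1 \cdots f_k$ were in $P$, then by primality some factor $f_j$ would lie in $P$, contradicting the choice $f_j \in J_j \setminus P$. This exhibits an element of $\bigcap_{i=1}^k J_i$ that is not in $P$, yielding the strict inclusion.

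There is no serious obstacle: the lemma is a direct instance of the standard fact that a prime ideal cannot contain a finite intersection of ideals unless it contains one of them (equivalently, prime avoidance for the complement). The only subtlety worth noting in the write-up is that primality is genuinely necessary — without it, the intersection of two ideals properly containing a non-prime ideal can equal that ideal (e.g.\ $\langle x \rangle \cap \langle y \rangle = \langle xy \rangle$ over $\langle xy \rangle$), which motivates the standing assumption in the lemma.
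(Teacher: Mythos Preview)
Your proof is correct and essentially identical to the paper's: both pick $f_i\in J_i\setminus P$, form the product $f_1\cdots f_k$, and use primality of $P$ to conclude the product lies in $\bigcap_i J_i$ but not in $P$. The only addition is your closing remark on the necessity of primality, which is extra commentary not present in the paper.
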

\begin{proof}
    As $J_i\supsetneq I\cap K[A]$, it is obvious  that $\bigcap_{i=1}^k J_i\supset I\cap K[A]$. 
    For $x_i\in J_i\setminus I\cap K[A]$, the product $x_1\cdots x_k\in \bigcap_{i=1}^kJ_i\setminus I\cap K[A]$ since $I\cap K[A]$ is a prime ideal. Thus, $\bigcap_{i=1}^k J_i\supsetneq I\cap K[A]$.
\end{proof}

The condition (PD-2) of primary decompositions holds on  a locally closed set as follows.

\begin{Proposition} \label{prop:genpd}
   Let $I$ be a proper ideal of $K[A,X]$ with a prime condition. Also, let $\mathcal{Q}=\{Q_1,\ldots,Q_r\}$ be a minimal primary decomposition of $I$ and $\mathcal{Q}^\prime=\{Q_{i_1},\ldots,Q_{i_k}\}$ the filtered primary decomposition of $I$ with respect to $\mathcal{Q}$. One can compute an ideal $J\supsetneq I\cap K[A]$ such that and 
    \[
    \varphi_\alpha(I)=\varphi_\alpha(Q_{i_1})\cap \cdots \cap\varphi_\alpha(Q_{i_k})
    \]
    for any $\alpha\in V(I\cap K[A])\setminus V(J)$. Also, $V_{\mathbb{C}}(I\cap K[A])\setminus V_{\mathbb{C}}(J)\neq \emptyset$ for such $J$.
\end{Proposition}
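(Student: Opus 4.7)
The plan is to build $J$ as the sum of two pieces: one ensuring the intersection $Q_1\cap\cdots\cap Q_r$ commutes with $\varphi_\alpha$ after specialization, and one ensuring the ``non-filtered'' components $Q_j\not\in\mathcal{Q}^\prime$ become trivial under $\varphi_\alpha$.

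First, for each $Q_j\not\in\mathcal{Q}^\prime$ I would exploit the defining condition of the filtered decomposition: $\sqrt{Q_j\cap K[A]}\neq \sqrt{I\cap K[A]}=I\cap K[A]$, using that the condition ideal is prime and hence radical. Since $I\cap K[A]\subset \sqrt{Q_j\cap K[A]}$, the inclusion is strict, so I pick $h_j\in \sqrt{Q_j\cap K[A]}\setminus (I\cap K[A])$. Some power $h_j^{m_j}\in Q_j$, and for any $\alpha\in V(I\cap K[A])$ with $h_j(\alpha)\neq 0$ the nonzero constant $(h_j(\alpha))^{m_j}$ lies in $\varphi_\alpha(Q_j)$, forcing $\varphi_\alpha(Q_j)=K[X]$. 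The product $\tilde h:=\prod_{Q_j\not\in\mathcal{Q}^\prime} h_j$ then lies outside $I\cap K[A]$ by primality, so $\langle\tilde h\rangle+I\cap K[A]\supsetneq I\cap K[A]$.

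Second, I would apply Proposition~\ref{prop:ideal_op} to the intersection operation (O3) with inputs $Q_1,\ldots,Q_r$, combined with Proposition~\ref{prop:generalCGS} and Corollary~\ref{cor:generalCGS} applied to a Gr\"obner basis computation of that intersection under a block ordering $X\succ\succ A$, to produce an ideal $J_1$ of $K[A]$ with $J_1\supsetneq I\cap K[A]$ and a finite $G\subset K[A,X]$ with $\langle G\rangle=I$ such that $\varphi_\alpha(G)$ is a Gr\"obner basis of $\varphi_\alpha(Q_1)\cap\cdots\cap\varphi_\alpha(Q_r)$ for every $\alpha\in V(I\cap K[A])\setminus V(J_1)$. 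In particular, on this locus $\varphi_\alpha(I)=\varphi_\alpha(Q_1)\cap\cdots\cap\varphi_\alpha(Q_r)$.

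Setting $J:=J_1+\langle\tilde h\rangle+I\cap K[A]$, both summands strictly contain $I\cap K[A]$, so $J\supsetneq I\cap K[A]$. For $\alpha\in V(I\cap K[A])\setminus V(J)$, combining the two ingredients gives
\[
\varphi_\alpha(I)=\bigcap_{j=1}^{r}\varphi_\alpha(Q_j)=\bigcap_{Q_j\in\mathcal{Q}^\prime}\varphi_\alpha(Q_j)=\varphi_\alpha(Q_{i_1})\cap\cdots\cap \varphi_\alpha(Q_{i_k}),
\]
as desired. Non-emptiness of $V_{\mathbb{C}}(I\cap K[A])\setminus V_{\mathbb{C}}(J)$ then follows immediately from Lemma~\ref{lem:procon}, since $I\cap K[A]$ is a proper radical ideal properly contained in $J$. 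The hard part, in my view, is the second step: Proposition~\ref{prop:ideal_op} only asserts existence of a CGS for intersection, so consolidating a single locus $V(J_1)$ strictly inside $V(I\cap K[A])$ on whose complement both the Gr\"obner basis of $\bigcap_j Q_j$ and each of the $\varphi_\alpha(Q_j)$ specialize consistently requires carefully merging the cells of that CGS using the Suzuki--Sato leading-coefficient criterion. Once this stability is in hand, the remainder is essentially radical-ideal bookkeeping driven by the primality of $I\cap K[A]$.
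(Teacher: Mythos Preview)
Your overall strategy matches the paper's: first make the intersection $\bigcap_j Q_j$ commute with $\varphi_\alpha$ on a generic locus, then throw away the non-filtered components because they specialize to the unit ideal. The paper carries out your ``hard part'' explicitly via the slack-variable ideal $H(Q_1,\ldots,Q_r)=t_1Q_1+\cdots+(1-\sum t_i)Q_r\subset K[T,A,X]$, which satisfies $H\cap K[A,X]=I$ and $\varphi_\alpha(H)\cap K[X]=\bigcap_j\varphi_\alpha(Q_j)$ for every $\alpha$; applying Corollary~\ref{cor:generalCGS} to $H$ with respect to $T\succ\succ X\succ\succ A$ gives $J_1$ directly, without any CGS cell-merging. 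This is exactly the mechanism underlying the intersection case of Proposition~\ref{prop:ideal_op}, so your abstract appeal to that proposition is justified, but the concrete construction removes the vagueness you yourself flag.

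There is, however, a genuine error in how you combine the two pieces. You set $J=J_1+\langle\tilde h\rangle+I\cap K[A]$, a \emph{sum}. Then $V(J)=V(J_1)\cap V(\tilde h)\cap V(I\cap K[A])$, so for $\alpha\in V(I\cap K[A])\setminus V(J)$ you only know $\alpha\notin V(J_1)$ \emph{or} $\tilde h(\alpha)\neq 0$, not both. A point with $\alpha\in V(J_1)$ but $\tilde h(\alpha)\neq 0$ lies in your locus yet you cannot conclude $\varphi_\alpha(I)=\bigcap_j\varphi_\alpha(Q_j)$. You need the \emph{intersection}
\[
J=J_1\cap\bigl(\langle\tilde h\rangle+I\cap K[A]\bigr),
\]
so that $V(J)=V(J_1)\cup V(\langle\tilde h\rangle+I\cap K[A])$ and avoiding $V(J)$ forces both conditions simultaneously. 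The strict containment $J\supsetneq I\cap K[A]$ then follows from Lemma~\ref{lem:pa} (primality of the condition ideal), which is precisely how the paper argues (taking $J=J_1\cap\bigcap_{i\notin\{i_1,\ldots,i_k\}}(Q_i\cap K[A])$). With this correction your proof goes through.
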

\begin{proof}
    Let $T=\{t_1,\ldots,t_{r-1}\}$ be slack variables and 
	\[
	H(Q_1,\ldots,Q_r)=t_1\cdot Q_1+t_2\cdot Q_2+\cdots +\left(1-\sum_{i=1}^{r-1}t_i\right)\cdot Q_r\subset K[T,A,X]
	\]
    which is an ideal of $K[T,A,X]$. Then, 
    \[
    H(Q_1,\ldots,Q_r)\cap K[A,X]=Q_1\cap \cdots \cap Q_r=I 
    \]
    and 
    \begin{align*}
        H(Q_1,\ldots,Q_r)\cap K[A]&=(H(Q_1,\ldots,Q_r)\cap K[A,X])\cap K[A]\\
        &=I\cap K[A]\tag{1}
    \end{align*}
    Also, 
    \begin{align*}
        \varphi_\alpha (H(Q_1,\ldots,Q_r))&=t_1\cdot \varphi_\alpha(Q_1)+t_2\cdot \varphi_\alpha(Q_2)+\cdots +\left(1-\sum_{i=1}^{r-1}t_i\right)\cdot \varphi_\alpha(Q_r)\\
        &=H(T,\varphi_\alpha (Q_1),\ldots,\varphi_\alpha (Q_r))
    \end{align*}
    for any $\alpha\in K^m$ by Remark \ref{rem:o1o2}.
    Thus, 
    \[
    \varphi_\alpha (H(Q_1,\ldots,Q_r))\cap K[X]=\varphi_\alpha (Q_1)\cap \cdots \cap \varphi_\alpha (Q_r)\tag{2}
    \]
    for any $\alpha\in K^m$. 

    Let $G$ be the reduced Gr\"obner basis of $H(Q_1,\ldots,Q_r)$ with respect to a block ordering $\succ$ with $T\succ\succ X\succ\succ A$. By Corollary \ref{cor:generalCGS}, there exists an ideal $J_1$ of $K[A]$ such that $\varphi_\alpha(G)$ is a Gr\"obner basis of $\varphi_\alpha(H(Q_1,\ldots,Q_r))$ with respect to $\succ_X$ for any $\alpha\in V(H(Q_1,\ldots,Q_r)\cap K[A])\setminus V(J_1)\neq\emptyset$. Then, $\varphi_\alpha(G)\cap K[X]$ is a Gr\"obner basis of $\varphi_\alpha (H(Q_1,\ldots,Q_r))\cap K[X]=\varphi_\alpha (Q_1)\cap \cdots \cap \varphi_\alpha (Q_r)$ with respect to $\succ_X$ for any $\alpha\in V(I\cap K[A])\setminus V(J_1)$ by $(2)$. Since $G\cap K[A,X]$ is the reduced Gr\"obner basis of $H(Q_1,\ldots,Q_r)\cap K[A,X]=Q_1\cap \cdots \cap Q_r=I$ with respect to $\succ_{A,X}$, $\varphi_\alpha(G\cap K[A,X])=\varphi_\alpha(G)\cap K[X]$ is a Gr\"obner basis of $\varphi_\alpha(I)$ for any $\alpha\in V(I\cap K[A])\setminus V(J_1)$ i.e. $\varphi_\alpha(I) = \varphi_\alpha(Q_1)\cap \cdots \cap\varphi_\alpha(Q_r)$ holds. Here,  $Q_i\cap K[A]\supsetneq I\cap K[A]$ for $i\not \in \{i_1,\ldots,i_k\}$. Let $J=J_1\cap \bigcap_{i\not \in \{i_1,\ldots,i_k\}} (Q_i\cap K[A])$. Then, $J\supsetneq I\cap K[A]$ by Lemma \ref{lem:pa}. Therefore,
    \[
    \varphi_\alpha(I)=\varphi_\alpha(Q_1)\cap \cdots \cap\varphi_\alpha(Q_r)=\varphi_\alpha(Q_{i_1})\cap \cdots \cap\varphi_\alpha(Q_{i_k})
    \]
    for any $\alpha\in V(I\cap K[A])\setminus V(J)$ since $\varphi_\alpha(Q_i)=K[X]$ for each $i\not \in \{i_1,\ldots,i_k\}$ and $\alpha\in V(I\cap K[A])\setminus V(Q_i\cap K[A])\subset  V(I\cap K[A])\setminus V(J)$. Here, $V_{\mathbb{C}}(I\cap K[A])\setminus V_{\mathbb{C}}(J)\neq \emptyset$ by Lemma \ref{lem:procon}.
\end{proof}

\begin{Example}
Let $I=\langle (x_1+a_1)x_1^2,(x_1+a_1)x_2 \rangle$ in Example~\ref{ex1}. For the minimal primary decomposition $\{Q_1=\langle x_1+a_1,x_2\rangle,Q_2=\langle x_1^2,x_2\rangle \}$, let $F(T)=t_1\cdot \langle x_1+a_1\rangle+(1-t_1)\cdot \langle x_1^2,x_2\rangle$. A comprehensive Gr\"obner system of $F(T)$ with respect to the lexicographic ordering $t\succ x_1\succ x_2$ is $\mathcal{G}=\{(\mathbb{Q}\setminus V_\mathbb{Q}(a_1),\{(x_1+a_1)x_1^2,(x_1+a_1)x_2,x_1^2-a_1^2t_1\}),(V_{\mathbb{Q}}(a_1),\{x_1x_2,x_1^2,(1-t)x_2,tx_1\})\}$. Here, for $G=\langle (x_1+a_1)x_1^2,(x_1+a_1)x_2,x_1^2-a_1^2t_1\rangle$ in the first segment of $\mathcal{G}$, $G^\prime=G\cap \mathbb{Q}[a_1,x_1,x_2])=\{(x_1+a_1)x_1^2,(x_1+a_1)x_2\}$ and $\varphi_\alpha (G^\prime)$ is a Gr\"obner basis of $\varphi_\alpha(I_1)\cap \varphi_\alpha(I_2)$ for $\alpha \in \mathbb{Q}\setminus V_{\mathbb{Q}}(a_1)$ with respect to the lexicographic ordering $x_1\succ x_2$. Thus, for $\alpha\in \mathbb{Q}$,  $\varphi_\alpha (I)=\varphi_\alpha (Q_1)\cap \varphi_\alpha (Q_2)=\langle (x_1+a_1)x_1^2,(x_1+a_1)x_2\rangle$ if and only if $\alpha \in \mathbb{Q}\setminus V_{\mathbb{Q}}(a_1)$.  
\end{Example}
The following lemma states that for the property of ideal inclusion is stable on a constructible set. 

\begin{Lemma}\label{lem:inclusion}
    Let $I_1$ and $I_2$ be ideals of $K[A,X]$ with $I_1\not \subset I_2$. Assume that $I_2$ is a primary ideal with a prime condition ideal. Then  one can compute an ideal $J$ such that $J\supsetneq I_2\cap K[A]$ and 
    \[
    \varphi_\alpha(I_1)\not\subset \varphi_\alpha(I_1)
    \]
    for any $\alpha\in V(I_2\cap K[A])\setminus V(J)$. Also, $V_{\mathbb{C}}(I_2\cap K[A])\setminus V_{\mathbb{C}}(J) \neq \emptyset$ for such $J$. 
\end{Lemma}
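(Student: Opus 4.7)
The plan is to exhibit $J$ by exploiting the primary hypothesis on $I_2$ and working, in effect, in the localization at $P=I_2\cap K[A]$. First I would compute the reduced Gr\"obner basis $G_2$ of $I_2$ with respect to a block ordering $\succ$ with $X\succ\succ A$, so that by Corollary~\ref{cor:generalCGS} there is an ideal $J_1\supsetneq I_2\cap K[A]$ with $\varphi_\alpha(G_2)$ a Gr\"obner basis of $\varphi_\alpha(I_2)$ for every $\alpha\in V(I_2\cap K[A])\setminus V(J_1)$. Since $I_1\not\subset I_2$, some generator $f$ of $I_1$ has a non-zero normal form $h$ modulo $G_2$ in $K[A,X]$ with $f-h\in I_2$, so it suffices to force $\varphi_\alpha(h)\notin \varphi_\alpha(I_2)$ on a non-empty Zariski-open subset of $V(I_2\cap K[A])$.

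The heart of the proof is a pseudo-reduction that uses the primariness of $I_2$. Because $P$ is prime and $I_2$ is primary, every element of $K[A]\setminus P$ is a non-zero-divisor modulo $I_2$; in particular each $\lc_\succ(g)$ with $g\in G_2\setminus K[A]$ lies outside $P$ (by the reducedness of $G_2$, as in Corollary~\ref{cor:generalCGS}). I would iterate two kinds of reductions on $h$: pseudo-division by the elements of $G_2\setminus K[A]$ (multiplying through by $\lc_\succ(g)$ to clear the leading $X$-term), and ordinary reduction of the resulting $K[A]$-coefficients modulo $G_2\cap K[A]$. This terminates because each pseudo-reduction strictly decreases the $X$-leading monomial and coefficient reductions terminate by the usual Gr\"obner basis argument. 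It produces $p\in K[A,X]$ and $q\in K[A]\setminus P$ (a product of powers of the $\lc_\succ(g)$) with
\[
qh-p\in I_2,\qquad \lm_\succ(p)\notin\langle \lm_\succ(g)\mid g\in G_2\setminus K[A]\rangle,\qquad \lc_\succ(p)\notin P,
\]
where $p\ne 0$ (hence the last condition, since the leading coefficient is in normal form modulo $G_2\cap K[A]$) because $p\in I_2$ would give $qh\in I_2$ and, by the non-zero-divisor property of $q$, force $h\in I_2$, contradicting $h\ne 0$.

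I would then set
\[
J=\bigl\langle q\cdot \lc_\succ(p)\bigr\rangle+(I_2\cap K[A]).
\]
Since $P$ is prime and both factors $q$ and $\lc_\succ(p)$ lie outside $P$, their product is outside $P$, so $J\supsetneq I_2\cap K[A]$ and Lemma~\ref{lem:procon} gives $V_\mathbb{C}(I_2\cap K[A])\setminus V_\mathbb{C}(J)\neq\emptyset$. For $\alpha\in V(I_2\cap K[A])\setminus V(J)$, both $q(\alpha)\neq 0$ and $\lc_\succ(p)(\alpha)\neq 0$; since $q$ is divisible by each $\lc_\succ(g)$ for $g\in G_2\setminus K[A]$, Proposition~\ref{prop:generalCGS} supplies the Gr\"obner basis property of $\varphi_\alpha(G_2)$, whose $X$-leading-monomial ideal is $\langle\lm_\succ(g)\mid g\in G_2\setminus K[A]\rangle$. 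Specializing $qh-p\in I_2$ gives $q(\alpha)\varphi_\alpha(h)\equiv \varphi_\alpha(p)\pmod{\varphi_\alpha(I_2)}$; the non-vanishing of $\lc_\succ(p)$ at $\alpha$ ensures $\varphi_\alpha(p)$ still has $X$-leading monomial $\lm_\succ(p)$, which is outside the $X$-leading-monomial ideal of $\varphi_\alpha(I_2)$, so $\varphi_\alpha(p)\notin \varphi_\alpha(I_2)$. Therefore $\varphi_\alpha(h)\notin \varphi_\alpha(I_2)$, $\varphi_\alpha(f)\notin \varphi_\alpha(I_2)$, and $\varphi_\alpha(I_1)\not\subset \varphi_\alpha(I_2)$.

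The hardest part is the pseudo-reduction and its analysis: producing $p$ whose $X$-leading monomial avoids the leading monomial ideal of $G_2\setminus K[A]$ and whose leading $K[A]$-coefficient avoids $P$ simultaneously. This is precisely where the primary hypothesis on $I_2$ is used, via the non-zero-divisor property of $K[A]\setminus P$ modulo $I_2$, which prevents the iterated reduction from collapsing $p$ to zero and forces $\lc_\succ(p)\notin P$.
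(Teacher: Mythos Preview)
Your approach is essentially the paper's: both pick $f\in I_1\setminus I_2$, perform a pseudo-reduction by the reduced Gr\"obner basis $G_2$ of $I_2$ (the paper phrases this as computing $NF_{G_2}(c^kf)$ with $c=\prod_{g\in G_2\setminus K[A]}\lc_\succ(g)$), and use primariness of $I_2$ together with primeness of $P=I_2\cap K[A]$ to conclude that the pseudo-remainder is nonzero with a $K[A]$-coefficient outside $P$. The paper then takes $J=J_1\cap J_2$ where $J_2$ is generated over $P$ by \emph{all} $K[A]$-coefficients of the remainder; your choice of only the leading coefficient also works.

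There is one slip. Your $q$ is the product of the $\lc_\succ(g)$'s \emph{actually used} in the pseudo-division, so the assertion ``$q$ is divisible by each $\lc_\succ(g)$ for $g\in G_2\setminus K[A]$'' need not hold. Consequently your $J=\langle q\cdot\lc_\succ(p)\rangle+P$ does not by itself force $\varphi_\alpha(\lc_\succ(g))\neq 0$ for every $g$, which you need in order to identify the $X$-leading-monomial ideal of $\varphi_\alpha(I_2)$ with $\langle\lm_\succ(g)\mid g\in G_2\setminus K[A]\rangle$. The fix is immediate and you already have the ingredient: take $J\cap J_1$ instead of $J$ (the $J_1$ from Corollary~\ref{cor:generalCGS} that you computed at the outset), or equivalently replace $q$ by $c\cdot q$, which still lies outside $P$. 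With that adjustment your argument is complete and coincides with the paper's.
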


\begin{proof}
   Let $G=\{g_1,\ldots,g_r\}$ be the reduced Gr\"obner basis of $I_2$ with respect to a block ordering $X\succ\succ A$. Then there exists an ideal $J_1\supsetneq I_2\cap K[A]$ such that $\varphi_\alpha (G)$ is a Gr\"obner basis of $I_2$ with respect to $\succ_X$ for any $\alpha\in V(I_2\cap K[A])\setminus V(J_1)$. Let $f\in I_1\setminus I_2$ and $c=\prod_{g\in G\setminus K[A]} \lc_\succ (g)$. Since $I_2$ is primary and $c\not \in I\cap K[A]=\sqrt{I\cap K[A]}$, $c^kf\in I_1\setminus I_2$ for any positive number $k$. There exist polynomials $h_1,\ldots,h_r$ and a positive integer $k$ such that 
    \[
    c^kf=h_1g_1+\cdots +h_rg_r+NF_G(c^kf).
    \]
    and $\varphi_\alpha(NF_G(c^kf))=NF_{\varphi_\alpha(G)}(\varphi_\alpha(c^kf))$ for any $\alpha\in V(I_2\cap K[A])\setminus V(J_1)$ since $\{\varphi_\alpha (g_1),\ldots,\varphi_\alpha (g_r)\}$ is a Gr\"obner basis of $\varphi_\alpha(I)$. Let $J_2$ be the ideal generated by coefficients in $K[A]$ of a term of $NF_{G}(c^kf)$ and elements of $I_2\cap K[A]$. As $NF_G(c^kf)\not \in I_2\cap K[A]$, $J_2\supsetneq I_2\cap K[A]$. Therefore, for $J=J_1\cap J_2$, $J\supsetneq I_2\cap K[A]$ by Lemma \ref{lem:pa}. Hence, $NF_{\varphi_\alpha(G)}(\varphi_\alpha(c^kf))=\varphi_\alpha(NF_G(c^kf))\neq 0$ and thus $\varphi_\alpha(I_1) \not \subset \varphi_\alpha(I_2)$ for any $\alpha\in V(I_2\cap K[A])\setminus V(J)$.
\end{proof}

The minimality of a primary decomposition with respect to intersection, i.e., the condition (M-1) holds on  a locally closed set as follows.

\begin{Proposition} \label{prop:int_gen}
    Let $I$ be a proper ideal of $K[A,X]$ with a prime condition ideal and $\mathcal{Q}=\{Q_1,\ldots,Q_r\}$ a minimal primary decomposition of $I$. Let $\mathcal{Q}^\prime=\{Q_{i_1},\ldots,Q_{i_k}\}$ be the filtered primary decomposition of $I$ with respect to $\mathcal{Q}$. One can compute an ideal $J\supsetneq I\cap K[A]$ such that for each $i\in \{i_1,\ldots,i_k\}$
    \[
    \varphi_\alpha\left(\bigcap_{i_j\neq i}Q_{i_j}\right)=\bigcap_{i_j\neq i}\varphi_\alpha(Q_{i_j})\not \subset \varphi_\alpha (Q_i)
    \]
    for any $\alpha\in V(I\cap K[A])\setminus V(J)$. In particular, $V_{\mathbb{C}}(I\cap K[A])\setminus V_{\mathbb{C}}(J)\neq \emptyset$.
\end{Proposition}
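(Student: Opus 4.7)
The plan is to handle each $i \in \{i_1, \ldots, i_k\}$ separately by producing two auxiliary ideals $J_1^{(i)}$ and $J_2^{(i)}$ of $K[A]$, each strictly containing $I\cap K[A]$, and then take $J$ to be their common intersection over all $i$. The first ideal controls the commutation $\varphi_\alpha\bigl(\bigcap_{i_l\neq i}Q_{i_l}\bigr)=\bigcap_{i_l\neq i}\varphi_\alpha(Q_{i_l})$, and the second realizes the non-containment in $\varphi_\alpha(Q_i)$.

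For the commutation part, I would mimic the slack-variable construction from the proof of Proposition~\ref{prop:genpd}, but applied only to the sub-collection $\{Q_{i_l}:i_l\neq i\}$ of filtered primary components. Computing the reduced Gr\"obner basis of the corresponding ideal $H(\{Q_{i_l}\}_{i_l\neq i})$ in $K[T',A,X]$ with respect to a block ordering $T'\succ\succ X\succ\succ A$ and invoking Corollary~\ref{cor:generalCGS} produces an ideal $J_1^{(i)}\supsetneq I\cap K[A]$ such that, for every $\alpha\in V(I\cap K[A])\setminus V(J_1^{(i)})$, elimination of the slack variables $T'$ commutes with $\varphi_\alpha$, yielding the desired equality.

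For the non-containment part, the crucial preliminary observation is the squeezing
\[
I\cap K[A]\subset Q_i\cap K[A]\subset \sqrt{Q_i\cap K[A]}=\sqrt{I\cap K[A]}=I\cap K[A],
\]
valid for $i\in\{i_1,\ldots,i_k\}$ since $I\cap K[A]$ is prime (hence radical) and the filtered condition fixes the radicals. Thus $Q_i\cap K[A]=I\cap K[A]$, so $Q_i$ is a primary ideal with \emph{prime} condition ideal. Minimality of $\mathcal{Q}$ gives $\bigcap_{j\neq i}Q_j\not\subset Q_i$, and since $\bigcap_{j\neq i}Q_j\subset \bigcap_{i_l\neq i}Q_{i_l}$, this upgrades to $\bigcap_{i_l\neq i}Q_{i_l}\not\subset Q_i$. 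Applying Lemma~\ref{lem:inclusion} with $I_1=\bigcap_{i_l\neq i}Q_{i_l}$ and $I_2=Q_i$ then produces the required $J_2^{(i)}\supsetneq I\cap K[A]$ such that $\varphi_\alpha(I_1)\not\subset\varphi_\alpha(Q_i)$ for $\alpha\in V(I\cap K[A])\setminus V(J_2^{(i)})$.

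Finally I set $J=\bigcap_{i\in\{i_1,\ldots,i_k\}}(J_1^{(i)}\cap J_2^{(i)})$. Since $I\cap K[A]$ is prime and every factor strictly contains it, Lemma~\ref{lem:pa} yields $J\supsetneq I\cap K[A]$, and Lemma~\ref{lem:procon} then gives $V_{\mathbb{C}}(I\cap K[A])\setminus V_{\mathbb{C}}(J)\neq\emptyset$. The main obstacle I expect is the careful verification of the slack-variable step: one must check that elimination of $T'$ in the specialized Gr\"obner basis really does recover the specialized intersection of the chosen primary components, which is precisely what the block-ordering version of Corollary~\ref{cor:generalCGS} is designed for. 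Once that is in place, the rest is the bookkeeping of combining finitely many strictly larger ideals via Lemma~\ref{lem:pa}, with the squeezing identity $Q_i\cap K[A]=I\cap K[A]$ being the conceptual payoff that lets Lemma~\ref{lem:inclusion} apply uniformly to each filtered component.
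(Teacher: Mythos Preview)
Your proposal is correct and follows essentially the same route as the paper: for each filtered index you build one ideal controlling the commutation of $\varphi_\alpha$ with the intersection (via the slack-variable construction and Corollary~\ref{cor:generalCGS}) and another controlling non-containment (via Lemma~\ref{lem:inclusion}), then intersect everything using Lemma~\ref{lem:pa} and conclude with Lemma~\ref{lem:procon}. Your explicit squeezing $Q_i\cap K[A]=I\cap K[A]$ and the check $\bigcap_{i_l\neq i}Q_{i_l}\not\subset Q_i$ make the hypotheses of Lemma~\ref{lem:inclusion} visible where the paper leaves them implicit (and the paper additionally folds in an invocation of Proposition~\ref{prop:genpd} that is not actually required for the statement).
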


\begin{proof}
By Proposition \ref{prop:genpd}, there exists an ideal $J_1\supsetneq I\cap K[A]$
    \[
\varphi_\alpha(I)=\varphi_\alpha(Q_{i_1})\cap \cdots \cap\varphi_\alpha(Q_{i_k})
    \]
    for any $\alpha\in V(I\cap K[A])\setminus V(J_1)$. For each $i\in \{i_1,\ldots,i_k\}$, there exists an ideal $J_2^{(i)}\supsetneq I\cap K[A]$ such that  
     \[
    \varphi_\alpha\left (\bigcap_{i_j\neq i} Q_{j_i}\right)=\bigcap_{i_j\neq i}\varphi_\alpha(Q_{i_j})
    \]
    for any $\alpha\in V(\bigcap_{i_j\neq i} Q_{j_i}\cap K[A])\setminus V(J_2^{(i)})$. By Lemma \ref{lem:filt}, we replace $ V(\bigcap_{i_j\neq i} Q_{j_i}\cap K[A])\setminus V(J_2^{(i)})$ by  $V(I\cap K[A])\setminus V(J_2^{(i)})$. Also, by Lemma \ref{lem:inclusion} for each $i\in \{i_1,\ldots,i_k\}$, there exists an ideal $J_3^{(i)}\supsetneq I\cap K[A]$ such that
        \[
    \varphi_\alpha\left (\bigcap_{i_j\neq i} Q_{j_i}\right)\not \subset \varphi_\alpha (Q_i).
    \]
    for any $\alpha\in V(Q_i\cap K[A])\setminus V(J_3^{(i)})=V(I\cap K[A])\setminus V(J_3^{(i)})$.
     Let $J=J_1\cap \bigcap_{i\in \{i_1,\ldots,i_k\}}J_2^{(i)}\cap \bigcap_{i\in \{i_1,\ldots,i_k\}}J_3^{(i)}$ then $J\supsetneq I\cap K[A]$ by Lemma \ref{lem:pa}. For each $i\in \{i_1,\ldots,i_k\}$,
    \[
    \varphi_\alpha\left(\bigcap_{i_j\neq i}Q_{i_j}\right)=\bigcap_{i_j\neq i}\varphi_\alpha(Q_{i_j})\not \subset \varphi_\alpha (Q_i)
    \]
    for any $\alpha\in V(I\cap K[A])\setminus V(J)$.
\end{proof}

The minimality of a primary decomposition with respect to radical, i.e., the condition (M-2) holds on  a locally closed set as follows.

\begin{Proposition}\label{prop:rad_gen}
    Let $I$ be a proper ideal of $K[A,X]$ and $\mathcal{Q}=\{Q_1,\ldots,Q_r\}$ a minimal primary decomposition of $I$. Let $\mathcal{Q}^\prime=\{Q_{i_1},\ldots,Q_{i_k}\}$ be the filtered primary decomposition of $I$ with respect to $\mathcal{Q}$. For each $i\neq j\in \{i_1,\ldots,i_k\}$, one can compute an ideal $J_{ij}\supsetneq I\cap K[A]$ such that 
    \[
    \sqrt{\varphi_\alpha (Q_{i})}=\varphi_\alpha(\sqrt{Q_{i}})
     \neq  \sqrt{\varphi_\alpha (Q_{j})}=\varphi_\alpha(\sqrt{Q_{j}})
    \]
    for any $\alpha\in V(I\cap K[A])\setminus V(J_{ij})$. In particular, $V_{\mathbb{C}}(I\cap K[A])\setminus V_{\mathbb{C}}(J_{ij})\neq \emptyset$.
\end{Proposition}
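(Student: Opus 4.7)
The plan is to separately establish (a) the identity $\sqrt{\varphi_\alpha(Q_\ell)}=\varphi_\alpha(\sqrt{Q_\ell})$ for $\ell\in\{i,j\}$ and (b) the non-inclusion $\varphi_\alpha(\sqrt{Q_i})\not\subset\varphi_\alpha(\sqrt{Q_j})$, each on a suitable set $V(I\cap K[A])\setminus V(\cdot)$, and then intersect the obstructing ideals with the help of Lemma \ref{lem:pa}. Note first that since $Q_i\in\mathcal{Q}^\prime$ one has
\[
\sqrt{Q_i}\cap K[A]=\sqrt{Q_i\cap K[A]}=\sqrt{I\cap K[A]}=I\cap K[A]
\]
(and similarly for $j$), so $\sqrt{Q_i}$ and $\sqrt{Q_j}$ are prime ideals whose condition ideals coincide with the (prime) condition ideal of $I$; this is exactly the hypothesis needed to invoke Corollary \ref{cor:generalCGS} and Lemma \ref{lem:inclusion} for these ideals.

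For (a), fix a block ordering $X\succ\succ A$ and let $G_i$ be a Gr\"obner basis of $\sqrt{Q_i}$ in $K[A,X]$. Corollary \ref{cor:generalCGS} applied to $\sqrt{Q_i}$ produces an ideal $J^{(i)}_{\mathrm{gb}}\supsetneq I\cap K[A]$ on whose complement $\varphi_\alpha(G_i)$ is a Gr\"obner basis of $\varphi_\alpha(\sqrt{Q_i})$. Next, invoke Proposition \ref{prop:ideal_op} for the radical operation applied to the input $Q_i$: a CGS of $(\sqrt{\cdot},Q_i)$ together with Corollary \ref{cor:generalCGS}--style reasoning yields a further ideal $J^{(i)}_{\mathrm{rad}}\supsetneq I\cap K[A]$ such that on its complement a Gr\"obner basis of $\sqrt{Q_i}$ specializes to a Gr\"obner basis of $\sqrt{\varphi_\alpha(Q_i)}$. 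Choosing the same basis $G_i$ on both sides and setting $J_1^{(i)}=J^{(i)}_{\mathrm{gb}}\cap J^{(i)}_{\mathrm{rad}}$, both ideals in
\[
\varphi_\alpha(\sqrt{Q_i})=\langle\varphi_\alpha(G_i)\rangle=\sqrt{\varphi_\alpha(Q_i)}
\]
are generated by the same specialized set, so the identity holds for every $\alpha\in V(I\cap K[A])\setminus V(J_1^{(i)})$; repeat the argument for $j$ to obtain $J_1^{(j)}$.

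For (b), condition (M-2) gives $\sqrt{Q_i}\neq\sqrt{Q_j}$, so without loss of generality $\sqrt{Q_i}\not\subset\sqrt{Q_j}$; since $\sqrt{Q_j}$ is prime (hence primary) with prime condition ideal $I\cap K[A]$, Lemma \ref{lem:inclusion} applies to the pair $(\sqrt{Q_i},\sqrt{Q_j})$ and produces an ideal $J_2^{(ij)}\supsetneq I\cap K[A]$ with $\varphi_\alpha(\sqrt{Q_i})\not\subset\varphi_\alpha(\sqrt{Q_j})$ on its complement. Setting $J_{ij}=J_1^{(i)}\cap J_1^{(j)}\cap J_2^{(ij)}$ and invoking Lemma \ref{lem:pa} yields $J_{ij}\supsetneq I\cap K[A]$; for $\alpha\in V(I\cap K[A])\setminus V(J_{ij})$ we chain (a) and (b) into
\[
\sqrt{\varphi_\alpha(Q_i)}=\varphi_\alpha(\sqrt{Q_i})\not\subset\varphi_\alpha(\sqrt{Q_j})=\sqrt{\varphi_\alpha(Q_j)},
\]
which in particular forces the two radicals to be distinct, and non-emptiness of $V_{\mathbb{C}}(I\cap K[A])\setminus V_{\mathbb{C}}(J_{ij})$ follows from Lemma \ref{lem:procon}. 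The main difficulty is the generic commutation of radical with specialization in part (a): while the non-inclusion step mirrors the argument of Proposition \ref{prop:int_gen} verbatim, justifying that a specialized Gr\"obner basis of $\sqrt{Q_i}$ really generates $\sqrt{\varphi_\alpha(Q_i)}$ rather than a proper subideal is what requires the dedicated CGS for the radical operation.
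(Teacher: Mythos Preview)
Your proof is correct and follows essentially the same route as the paper: invoke the CGS for the radical operation (Proposition \ref{prop:ideal_op}) to obtain $\sqrt{\varphi_\alpha(Q_\ell)}=\varphi_\alpha(\sqrt{Q_\ell})$ generically, apply Lemma \ref{lem:inclusion} to the prime ideals $\sqrt{Q_i},\sqrt{Q_j}$ for the non-inclusion, and combine via Lemmas \ref{lem:pa} and \ref{lem:procon}. The only cosmetic difference is that the paper sets up \emph{both} non-inclusion directions $J_{2,ij}$ and $J_{3,ij}$ and takes their ideal sum, whereas you pick one direction (which suffices since distinct primes cannot each contain the other); your explicit check that $\sqrt{Q_i}\cap K[A]=I\cap K[A]$ for $Q_i\in\mathcal{Q}'$ is a detail the paper leaves implicit.
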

\begin{proof}
     Fix a pair $(i,j)$ with $i\neq j$. By Proposition \ref{prop:ideal_op} and Lemma \ref{lem:inclusion}, there exist ideals $J_{1,ij}$, $J_{2,ij}$ and $J_{3,ij}$ properly containing $I\cap K[A]$ such that 

    \[
    \varphi_\alpha\left (\sqrt{Q_i}\right) = \sqrt{\varphi_\alpha (Q_i)}, \quad \varphi_\alpha\left (\sqrt{Q_j}\right) = \sqrt{\varphi_\alpha (Q_j)}
    \]
    for any $\alpha\in V(I\cap K[A])\setminus V(J_{1,ij})$ and 
    \[
    \varphi_\alpha \left(\sqrt{Q_{i}}\right)\not \subset  \varphi_\alpha \left(\sqrt{Q_{j}}\right)
    \]
    for any $\alpha\in V(\sqrt{Q_j}\cap K[A])\setminus V(J_{2,ij})$ and 
    \[
    \varphi_\alpha \left(\sqrt{Q_{j}}\right)\not \subset  \varphi_\alpha \left(\sqrt{Q_{i}}\right)
    \]
    for any $\alpha\in V(\sqrt{Q_i}\cap K[A])\setminus V(J_{3,ij})$. Let $J=\bigcap_{i\neq j\in\{i_1,\ldots,i_k\}}J_{1,ij}\cap (\bigcap_{i\neq j\in\{i_1,\ldots,i_k\}}J_{2,ij} +\bigcap_{i\neq j\in\{i_1,\ldots,i_k\}}J_{3,ij})\cap \bigcap_{i\not \in\{i_1,\ldots,i_k\}} (\sqrt{Q_i}\cap K[A])$. Then, $J\supsetneq I\cap K[A]$ by Lemma \ref{lem:pa}. For each $i\neq j\in \{i_1,\ldots,i_k\}$,
     \[
    \sqrt{\varphi_\alpha (Q_{i})}=\varphi_\alpha(\sqrt{Q_{i}})
     \neq  \sqrt{\varphi_\alpha (Q_{j})}=\varphi_\alpha(\sqrt{Q_{j}})
    \]
    for any $\alpha\in V(I\cap K[A])\setminus V(J)$. By Lemma \ref{lem:procon}, $V_{\mathbb{C}}(I\cap K[A])\setminus V_{\mathbb{C}}(J)\neq \emptyset$.
\end{proof}

Finally, we introduce the notion of {\em Comprehensive Primary Decomposition System} (CPDS in short). 
In this paper, the words ``parametric primary decomposition'' means this CPDS. The existence and computability of CPDS are proved in Section~\ref{sec:4}.

\begin{Definition}[Comprehensive Primary Decomposition System (CPDS)]\label{def:fCPDS}
    Let $I$ be an ideal of $K[A,X]$ and let $C_1,\ldots,C_l$ be constructible sets and $\mathcal{Q}_1,\ldots,\mathcal{Q}_l$ sets of primary ideals. We call $\{(C_1,\mathcal{Q}_1),$ $\ldots,(C_l,\mathcal{Q}_l)\}$ a {\em CPDS} for $I$ over $K$ if (PD-1) and (PD-2) hold, that is, $\varphi_\alpha(\mathcal{Q}_i)$ is a primary decomposition of $\varphi_\alpha(I)$ for any $\alpha\in C_i$ for each $i$. Moreover, if each $\varphi_\alpha(\mathcal{Q}_i)$ is a minimal primary decomposition of $\varphi_\alpha(I)$, then we call it a minimal CPDS of $I$. We also call $\{(C_1,\mathcal{Q}_1),\ldots,(C_l,\mathcal{Q}_l)\}$ a {\em feasible CPDS} for $I$ if (PD-2) holds. Moreover, if each $\varphi_\alpha(\mathcal{Q}_i)$ satisfies the minimality conditions (M-1) and (M-2), then we call it a feasible minimal CPDS of $I$ and each $(C_i,\mathcal{Q}_i)$ a {\em segment} of its CPDS.
\end{Definition}

\section{Main Algorithms}\label{sec:4}

In this section, we describe the main algorithms for computing a feasible CPDS. We consider Gr\"obner bases as inputs of algorithms. 

\subsection{Algorithms for Feasible CPDS}

First, we provide an algorithm to compute feasible CPDS as Algorithm \ref{alg1}. The specific computational flow of the algorithm is described in Example \ref{ex:alg}.

\begin{Theorem} \label{thm:main-alg}
    For a given ideal $I$ of $K[A,X]$ and its Gr\"ober basis, Algorithm \ref{alg1} terminates and returns a feasible CPDS of $I$ over $K$. 
\end{Theorem}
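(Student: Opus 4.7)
The plan is to establish termination and correctness separately, both by recursion on the structure of Algorithm~\ref{alg1} and both relying on the key technical tools of Section 3.

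For termination I would proceed by Noetherian induction in $K[A]$. Two reduction steps drive the recursion. First, whenever the current ideal $I$ has a non-prime condition ideal, the algorithm decomposes $I \cap K[A]$ via primary decomposition into components $H_1, \ldots, H_l$ and recurses on the ideals $I + H_i$; each of these has condition ideal whose radical is $\sqrt{H_i}$, strictly containing $\sqrt{I \cap K[A]}$. Second, when the condition ideal is prime, Proposition~\ref{prop:genpd} applied to a minimal primary decomposition $\mathcal{Q}$ of $I$ produces a specific ideal $J \supsetneq I \cap K[A]$ such that (PD-2) holds for the filtered decomposition $\mathcal{Q}'$ on the locally closed set $V(I \cap K[A]) \setminus V(J)$; the residual computation is performed on the enlarged ideal $I + J$, whose condition ideal strictly contains the previous one. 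In both branches the condition ideal of every active input properly enlarges as an ideal of $K[A]$, so the ascending chain condition in the Noetherian ring $K[A]$ forces termination after finitely many steps.

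For correctness one must check (i) that the constructible sets $C_i$ produced cover $K^m$ and (ii) that on each segment $(C_i, \mathcal{Q}_i)$ one has $\varphi_\alpha(I) = \bigcap_{Q \in \mathcal{Q}_i} \varphi_\alpha(Q)$ for every $\alpha \in C_i$. Coverage is handled by recording a trivial segment $(K^m \setminus V(I \cap K[A]), \emptyset)$ at the outset, since for $\alpha$ outside the variety of the condition ideal one has $\varphi_\alpha(I) = K[X]$; the remaining part $V(I \cap K[A])$ is then exhausted by the locally closed set from the second reduction together with the recursive call on $V(J)$, or by the decomposition $V(I \cap K[A]) = \bigcup_i V(H_i)$ in the first reduction. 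Property (PD-2) on each produced segment is exactly the statement of Proposition~\ref{prop:genpd} applied at the base of each branch, and the finitely many ideals in $\mathcal{Q}_i$ are primary in $K[A,X]$ by construction, since they originate from a primary decomposition computed at that level, fulfilling the definition of a feasible CPDS.

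The main obstacle is ensuring that replacing $I$ by $I + H_i$ or $I + J$ in the recursive calls does not corrupt the target property for the original ideal. This is reconciled by the observation that for $\alpha \in V(H_i)$ one has $\varphi_\alpha(H_i) = \{0\}$ since $H_i \subset K[A]$, hence $\varphi_\alpha(I + H_i) = \varphi_\alpha(I)$; analogously $\varphi_\alpha(I + J) = \varphi_\alpha(I)$ for $\alpha \in V(J)$. Thus any feasible CPDS computed for an enlarged ideal restricts, on the relevant subset, to a feasible CPDS for $I$ on that subset. Assembling the segments produced at every node of the recursion tree then yields a feasible CPDS of $I$ in the sense of Definition~\ref{def:fCPDS}.
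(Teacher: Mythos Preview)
Your proof is correct and takes essentially the same route as the paper: termination via the ascending chain condition in the Noetherian ring $K[A]$, and correctness via Proposition~\ref{prop:genpd} plus recursive coverage of the parameter space. You are in fact more careful than the paper in explicitly verifying $\varphi_\alpha(I+J)=\varphi_\alpha(I)$ for $\alpha\in V(J)$; one minor slip is that Algorithm~\ref{alg1} does not literally recurse on $I+H_i$ (each $I+H_i$ is processed inside the \textsc{for}-loop and the recursive call is only on $I+J$ with $J\supsetneq H_i$), but this does not affect the validity of your ascending-chain argument.
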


\begin{proof}    
    First, we show its correctness. By Proposition \ref{prop:genpd}, one can compute an ideal $J$ of $K[A]$ properly containing $H_i$ such that $\varphi_\alpha(I)=\varphi_\alpha(Q_{i_1})\cap \cdots \cap \varphi_\alpha(Q_{i_k})$ for any $\alpha\in V(H_i)\setminus V(J)$ for each $H_i$ at each step in the algorithm. Thus, each segment $(V(H_i)\setminus V(J),\mathcal{Q})$ satisfies the condition of a feasible CPDS of $I$. Also, the set of $V(H_i)\setminus V(J)$ covers $K$ since it computes recursively segments by ``CPDS $\gets \text{\{Segment\}} \cup \textsc{feasibleCPDS}(I+J)$''. Thus, the algorithm returns a feasible CPDS of $I$ when it terminates. 
    
    Next, we prove the termination. If the algorithm does not terminate and performs an infinite number of recursions, an infinite ascending chain of ideals $J^{(1)}\subsetneq J^{(2)}\subsetneq \cdots $ in $K[A]$ is created where $J^{(i)}$ is the $J$ at the $i$-th step recursion. However, it contradicts the fact that $K[A]$ is Noetherian. Therefore, the algorithm terminates after a finite number of recursions.
\end{proof}

\begin{algorithm}[H]            
\caption{\textsc{FeasibleComprehensivePrimaryDecompositionSystem (FeasibleCPDS)}}         
\label{alg1}                          
\begin{algorithmic}[1]                
\REQUIRE $I$: an ideal of $K[A,X]$
\ENSURE Comprehensive Primary Decomposition System of $I$
\IF{$I=K[A,X]$}
    \RETURN $\{\}$
\ENDIF
\STATE CPDS $\gets \{\}$
\STATE Compute a minimal primary decomposition $\{H_1,\ldots,H_l\}$ of $\sqrt{I\cap \mathbb{Q}[A]}$ over $K[A]$
\FOR{$i=1$ to $l$}
    \IF{$I+H_i\neq K[A,X]$}
\STATE Compute a minimal primary decomposition of $I+H_i$ as $\mathcal{Q}=\{Q_1,\ldots,Q_r\}$ in $K[A,X]$
\STATE $\mathcal{Q}^\prime\gets \{Q\in \mathcal{Q}\mid \sqrt{Q\cap K[A]}=\sqrt{(I+H_i)\cap K[A]}\}=\{Q_{i_1},\ldots,Q_{i_k}\}$
     \STATE Compute an ideal $J$ of $K[A]$ properly containing $H_i$ such that $\varphi_\alpha(I)=\varphi_\alpha(Q_{i_1})\cap \cdots \cap \varphi_\alpha(Q_{i_k})$ for any $\alpha\in V(H_i)\setminus V(J)$
     \STATE Segment $\gets (V(H_i)\setminus V(J),\mathcal{Q})$
    \STATE CPDS $\gets \text{\{Segment\}} \cup \textsc{feasibleCPDS}(I+J)$
    \ENDIF
\ENDFOR
\RETURN CPDS
\end{algorithmic}
\end{algorithm}

\begin{Example} \label{ex:alg}
    Let $I=\langle x_1^2-a,bx_1x_2 \rangle$ in $\mathbb{Q}[a,b,x_1,x_2]$. Fix the lexicographic ordering $x_1\succ x_2$. Remark that $I$ has the generic condition ideal i.e. $I\cap \mathbb{Q}[a,b]=\{0\}$. Then, first we compute a primary decomposition $\mathcal{Q}=\{Q_1^{(1)}=\langle x_1^2-a,b \rangle,Q_2^{(1)}=\langle x_1,a \rangle,Q_3^{(1)}=\langle x_1^2-a,x_2 \rangle\}$ of $I$ in $\mathbb{Q}[a,b,x_1,x_2]$, i.e.,
		\begin{align*}
			I&=\langle x_1^2-a,b \rangle\cap \langle x_1,a \rangle\cap \langle x_1^2-a,x_2 \rangle.
		\end{align*}  
        Then, for $\alpha\in \mathbb{Q}^2\setminus V(\langle ab\rangle)=\{(c,d)\mid c\neq 0,d\neq 0\}$, 
        \begin{align*}
            \varphi_\alpha (I)&=\varphi_\alpha(Q_1^{(1)})\cap \varphi_\alpha(Q_2^{(1)})\cap \varphi_\alpha(Q_3^{(1)})\\
                              &=\langle x_1^2-a,x_2\rangle.
        \end{align*}
        Next, let $I_2=I+\langle ab \rangle=\langle x_1^2-a,bx_1x_2,ab\rangle$. We compute a minimal primary decomposition $\{\langle a\rangle,\langle b\rangle\}$
        of $I_2\cap \mathbb{Q}[a,b]=\langle ab\rangle$. Then, we recursively apply the algorithm $I+\langle a\rangle$ and $I+\langle b\rangle$.
        \begin{itemize}
            \item Let $I_3=I+\langle a\rangle=\langle x_1^2,bx_1x_2,a \rangle$. We compute a primary decomposition $\mathcal{Q}^{(2)}=\{Q_1^{(2)}=\langle x_1,a \rangle,Q_2^{(2)}=\langle x_1^2,a.b \rangle,Q_3^{(2)}=\langle x_1^2,x_2,a \rangle\}$ of $I_2$ in $\mathbb{Q}[a,b,x_1,x_2]$. For $\alpha\in V(\langle a\rangle)$, 
           \begin{align*}
            \varphi_\alpha (I)&=\varphi_\alpha(Q_1^{(2)})\cap \varphi_\alpha(Q_2^{(2)})\cap \varphi_\alpha(Q_3^{(2)})\\
                              &=\langle x_1\rangle\cap \langle x_1^2,b\rangle\cap \langle x_1^2,x_2\rangle.
        \end{align*}
        \item Let $I_4=I+\langle b\rangle=\langle x_1^2-a,b \rangle$. We compute a primary decomposition $\mathcal{Q}^{(3)}=\{Q_1^{(3)}=\langle x_1^2-a,b\rangle\}$ of $I_3$ in $\mathbb{Q}[a,b,x_1,x_2]$. For $\alpha\in V(\langle b\rangle)$, 
           \begin{align*}
            \varphi_\alpha (I)&=\langle x_1^2-a\rangle.
        \end{align*}
        \end{itemize}
    
        Finally, we obtain a feasible CPDS of $I$ with respect to $\succ$ over $\mathbb{Q}$:
        \begin{align*}
            \{&(\mathbb{Q}^2\setminus V(\langle ab\rangle),\{\langle x_1^2-a,x_2\rangle\}),(V(\langle a\rangle),\{\langle x_1 \rangle,\langle x_1^2,b\rangle,\langle x_1^2,x_2\rangle,\\
            &(V(\langle b\rangle),\{\langle x_1^2-a\rangle\})\})\}.
        \end{align*}
\end{Example}

\begin{Remark}
    When $A$ consists of one parameter $a$, the computed locally closed set are all finite sets except the first one $V(I\cap K[A])\setminus V(J)$, where $V(J)$ is also a finite set. In this case, we do not need to run Algorithm \ref{alg1} recursively since we can compute ordinary primary decomposition of $\varphi_\alpha(I)$ for all $\alpha\in J$. 
\end{Remark}

\subsection{Algorithm for a minimal feasible CPDS}

In order to compute a minimal feasible CPDS, we utilize Propositions~\ref{prop:int_gen} and \ref{prop:rad_gen}. In Algorithm~\ref{alg2}, we compute ideals $J_1$ and $J_2$ of $K[A]$ such that the specialized primary decomposition $\varphi_\alpha(\mathcal{Q})$ for $\alpha \in V(J_1)\setminus V(J_2)$ satisfies all conditions of minimal primary decomposition except primarity. Since we compute such $\mathcal{Q}$ and $J$ recursively as in Algorithm \ref{alg1}, a minimal feasible CPDS is finally obtained. Thus, the following corollary holds. 

\begin{Corollary}
        Algorithm \ref{alg2} returns a minimal feasible CPDS of $I$ over $K$. 
\end{Corollary}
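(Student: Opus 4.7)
The plan is to mirror the proof of Theorem~\ref{thm:main-alg} for Algorithm~\ref{alg1}, adding the two extra ingredients (Propositions~\ref{prop:int_gen} and \ref{prop:rad_gen}) that handle the minimality conditions (M-1) and (M-2). I would establish correctness and termination separately, just as was done there, since the overall skeleton of the algorithm is the same: branch on the primary components $H_1,\ldots,H_l$ of $\sqrt{I\cap K[A]}$ so that each recursive call is on an ideal with prime condition ideal, extract the filtered primary decomposition $\mathcal{Q}'$, compute a locally closed stable cell inside $V(H_i)$, and recurse on $I+J$ with $J\supsetneq H_i$.

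For correctness on a single segment, fix one branch with condition ideal $H_i$ and filtered primary decomposition $\mathcal{Q}'=\{Q_{i_1},\ldots,Q_{i_k}\}$. Proposition~\ref{prop:genpd} yields an ideal $J^{(1)}\supsetneq H_i$ on whose complement within $V(H_i)$ condition (PD-2) holds for $\varphi_\alpha(\mathcal{Q}')$; Proposition~\ref{prop:int_gen} yields $J^{(2)}\supsetneq H_i$ enforcing (M-1); Proposition~\ref{prop:rad_gen} gives, for each pair $i\neq j$ in $\{i_1,\ldots,i_k\}$, an ideal $J^{(3)}_{ij}\supsetneq H_i$ enforcing (M-2) for that pair. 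Setting
\[
J \;=\; J^{(1)} \,\cap\, J^{(2)} \,\cap\, \bigcap_{i\neq j}J^{(3)}_{ij},
\]
Lemma~\ref{lem:pa} (which crucially uses that $H_i$, being primary over $K[A]$ with radical $H_i$, is prime) guarantees $J\supsetneq H_i$, so the segment $(V(H_i)\setminus V(J),\mathcal{Q}')$ simultaneously satisfies (PD-2), (M-1), and (M-2); that is, $\varphi_\alpha(\mathcal{Q}')$ is a minimal primary decomposition of $\varphi_\alpha(I)$ for every $\alpha$ in that locally closed set, which is non-empty over $\mathbb{C}$ by Lemma~\ref{lem:procon}. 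Covering $V(H_i)$ is achieved by the recursive call on $I+J$, exactly as in Algorithm~\ref{alg1}, and the union over $i$ covers $V(\sqrt{I\cap K[A]})=V(I\cap K[A])$; outside of this variety, $\varphi_\alpha(I)=K[X]$, so the empty decomposition suffices.

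Termination is inherited verbatim from the argument for Theorem~\ref{thm:main-alg}: each recursive call is performed on $I+J$ with $J\supsetneq H_i\supset I\cap K[A]$, so the sequence of condition ideals strictly increases along any branch of the recursion tree, and the Noetherian property of $K[A]$ forces finiteness. The branching on the $H_i$ is also finite at each level because $\sqrt{I\cap K[A]}$ has only finitely many associated primes.

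I expect the main obstacle to be purely bookkeeping rather than mathematical: one must verify that after intersecting the three (or more) ideals $J^{(1)},J^{(2)},J^{(3)}_{ij}$ coming from three different propositions, the resulting set $V(H_i)\setminus V(J)$ is still non-empty and remains a single locally closed set on which \emph{all} minimality conditions hold \emph{simultaneously}. This is exactly where the reduction to a prime condition ideal via the decomposition $\sqrt{I\cap K[A]}=H_1\cap\cdots\cap H_l$ pays off: Lemma~\ref{lem:pa} lets us intersect arbitrarily many ideals that properly contain $H_i$ without collapsing to $H_i$ itself, so the joint stable cell is guaranteed to be non-trivial. Once this is in place, the correctness of each segment follows immediately from the three propositions, and the whole corollary reduces to citing the termination argument already used for Algorithm~\ref{alg1}.
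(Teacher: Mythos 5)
Your proposal is correct and follows essentially the same route as the paper: branch on the primary components $H_i$ of $\sqrt{I\cap K[A]}$ to reduce to a prime condition ideal, invoke Propositions~\ref{prop:genpd}, \ref{prop:int_gen}, and \ref{prop:rad_gen} together with Lemma~\ref{lem:pa} to obtain a single ideal $J\supsetneq H_i$ whose complement cell simultaneously enforces (PD-2), (M-1), and (M-2), and recurse on $I+J$; termination is the same Noetherian chain argument as in Theorem~\ref{thm:main-alg}. The paper's own proof is just a terse two-sentence citation of these same ingredients, so your more explicit bookkeeping (in particular calling out Lemma~\ref{lem:pa} to justify intersecting the $J$'s from the three propositions) is a faithful expansion of the intended argument.
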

\begin{proof}
    The termination of Algorithm \ref{alg2} can be proved in the similar way of that of Algorithm \ref{alg1}. In each Segment in \ref{alg2}, it satisfies all conditions for a minimal feasible CPDS by Propositions \ref{prop:int_gen} and \ref{prop:rad_gen}. Thus, Algorithm \ref{alg2} returns a minimal feasible CPDS of $I$ over $K$.
\end{proof}

\begin{algorithm}[H]            
\caption{\textsc{MinimalFeasibleCPDS}}         
\label{alg2}                          
\begin{algorithmic}[1]              
\REQUIRE $I$: an ideal of $K[A,X]$
\ENSURE a minimal feasible CPDS of $I$ 
\STATE CPDS $\gets \{\}$
\STATE Compute a minimal primary decomposition of $I$ as $\mathcal{Q}=\{Q_1,\ldots,Q_r\}$ in $K[A,X]$
\STATE Compute a minimal primary decomposition $\{H_1,\ldots,H_l\}$ of $\sqrt{I\cap K[A]}$ over $\mathbb{Q}[A]$
\FOR{$i=1$ to $l$}
    \IF{$I+H_i\neq K[A,X]$}
\STATE Compute a minimal primary decomposition of $I+H_i$ as $\mathcal{Q}=\{Q_1,\ldots,Q_r\}$ in $K[A,X]$
\STATE $\mathcal{Q}^\prime\gets \{Q\in \mathcal{Q}\mid \sqrt{Q\cap K[A]}=\sqrt{(I+H_i)\cap K[A]}\}=\{Q_{i_1},\ldots,Q_{i_k}\}$
\STATE Renumber the indices $i_1,\ldots,i_k$ as $1,\ldots,k$
     \STATE Compute an ideal $J$ of $K[A]$ properly containing $H_i$ such that \begin{itemize}
    \item $\varphi_\alpha(I)=\varphi_\alpha(Q_1)\cap \cdots \cap \varphi_\alpha(Q_k)$ 
    \item $\varphi_\alpha\left(\bigcap_{j\neq i}Q_j\right)=\bigcap_{j\neq i}\varphi_\alpha(Q_j)\not \subset \varphi_\alpha (Q_i)$ 
    \item $\sqrt{\varphi_\alpha (Q_{i})}=\varphi_\alpha(\sqrt{Q_{i}})
     \neq  \sqrt{\varphi_\alpha (Q_{j})}=\varphi_\alpha(\sqrt{Q_{j}})$ 
\end{itemize} for any $\alpha\in V(H_i)\setminus V(J)$
\STATE Segment $\gets (V(H_i)\setminus V(J),\mathcal{Q})$
    \STATE CPDS $\gets \text{\{Segment\}} \cup \textsc{MinimalFeasibleCPDS}(I+J)$
    \ENDIF
\ENDFOR
\RETURN CPDS
\end{algorithmic}
\end{algorithm}

\section{Primarity of Parametric Primary Ideal}\label{sec5}

In this section, we discuss the primarity of a parametric primary ideal evaluated at each parameter value. In general, for a primary ideal $Q$ of $K[A,X]$, the specialized ideal $\varphi_\alpha(Q)$ is not always a primary ideal e.g. $\varphi_{1}(\langle x_1^2-a_1\rangle)=\langle x_1^2-1\rangle=\langle x+1\rangle\cap \langle x-1\rangle$ even though $\langle x_1^2-a_1\rangle$ is a prime ideal in $K[A,X]$. However, for ``almost'' rational numbers $\alpha$, $\varphi_\alpha(Q)$ is still a primary ideal. 

\subsection{Primarity of Parametric Ideal} 

First, we recall a notion of minimal polynomial as follows.

\begin{Definition}[Minimal Polynomial and Generic Position \cite{Noro2004}, Definitions 2.1, 2.2]
Let $I$ be a zero-dimensional ideal of $K[X]$. For a polynomial $f(X)$ of $K[X]$, 
the {\em minimal polynomial} $m_f(t)$ modulo $I$ is defined as the monic, univariate polynomial over $K$ having the smallest degree among all univariate polynomials $h$ such that $h(f)\in I$. A polynomial $g(X)\in K[X]$ is said to be {\em in generic position} with respect to $I$ if $\deg(\sqrt{m_g(t)})=\dim_K(K[X]/\sqrt{I})$ for the minimal polynomial $m_g$ with respect to $I$ where $\sqrt{m_g(t)}$ is the square free part of $m_g(t)$. 
\end{Definition}

If $g(X)$ is in generic position with respect to $I$, then the factorization of the minimal polynomial of $g(X)$ gives the primary decomposition of $I$ as follows. In general, there exists a polynomial 
which is in generic position with respect to $I$ of $K[X]$ if $K$ is a perfect field.

\begin{Lemma}[\cite{Noro2004}, Proposition 2.3] \label{lem:fact}
    Let $I$ be a zero-dimensional ideal of $K[X]$, and suppose that a polynomial $g(X)$ is in generic position with respect to $I$ and that $m_g$ is the minimal polynomial of $g(X)$ with respect to $J$. Moreover, suppose that 
    \[
    m_g(t)=m_1(t)^{e_1}\cdots m_r(t)^{e_r}
    \]
    is the irreducible factorization of $m_g$ over $K$. Then $P_i=I+\langle m_i(g)\rangle$ is a prime divisor for each $m_i$, and $\sqrt{J}=\bigcap_{i=1}^r P_i$ is the prime decomposition of $I$. 
\end{Lemma}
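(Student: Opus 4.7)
The plan is to use the generic-position hypothesis to obtain a product decomposition of $K[X]/\sqrt{I}$ via the minimal polynomial $m_g$, and to read off the prime divisors by pulling back through the Chinese Remainder Theorem. I would proceed in three steps.

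First, consider the $K$-algebra homomorphism $\phi:K[t]\to K[X]/\sqrt{I}$ sending $t\mapsto \bar g$, and identify $\ker\phi=\langle \sqrt{m_g}(t)\rangle$. The inclusion $\langle \sqrt{m_g}\rangle \subseteq \ker\phi$ is immediate, since $m_g(g)\in I\subseteq \sqrt{I}$ forces $\sqrt{m_g}(g)\in \sqrt{I}$. Conversely, if $p(t)\in K[t]$ satisfies $p(g)\in \sqrt{I}$, then $p(g)^N\in I$ for some $N$, so $m_g\mid p^N$ in $K[t]$; since $m_g=\prod m_i^{e_i}$, each irreducible factor $m_i$ must divide $p$, hence $\sqrt{m_g}\mid p$.

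Second, use the generic-position hypothesis to upgrade the induced injection $\bar\phi:K[t]/\langle \sqrt{m_g}\rangle \hookrightarrow K[X]/\sqrt{I}$ to an isomorphism: the source has $K$-dimension $\deg(\sqrt{m_g})$, which by hypothesis equals $\dim_K(K[X]/\sqrt{I})$, so $\bar\phi$ is surjective. Applying the Chinese Remainder Theorem to the pairwise coprime factorisation $\sqrt{m_g}=m_1\cdots m_r$ then yields
\[
K[X]/\sqrt{I}\;\cong\;K[t]/\langle \sqrt{m_g}\rangle\;\cong\;\prod_{i=1}^{r}K[t]/\langle m_i(t)\rangle,
\]
a product of $r$ fields, under which $\bar g$ corresponds to the tuple $(t\bmod m_i)_{i}$. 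The maximal ideals of this product are exactly the kernels $\mathfrak m_i$ of the $r$ coordinate projections. Using the B\'ezout identity $u_i(t)m_i(t)+v_i(t)n_i(t)=1$ with $n_i(t)=\prod_{j\neq i}m_j(t)$, one checks that the class of $n_i(g)$ is a unit on every factor other than the $i$-th while the class of $m_i(g)$ vanishes on the $i$-th and is a unit elsewhere; consequently $\mathfrak m_i$ is generated by the image of $m_i(g)$ in $K[X]/\sqrt{I}$.

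Third, pull the $\mathfrak m_i$ back through $\phi$ to $K[X]$ to obtain the minimal primes of $\sqrt{I}$ (equivalently, of $I$): each preimage has the form $\sqrt{I}+\langle m_i(g)\rangle$, yielding the prime divisor $P_i$, and the tautology $\bigcap_i\mathfrak m_i=(0)$ in the product pulls back to $\sqrt{I}=\bigcap_{i=1}^r P_i$. The main obstacle I anticipate is the reconciliation of this intrinsic description $\sqrt{I}+\langle m_i(g)\rangle$ with the form $P_i=I+\langle m_i(g)\rangle$ asserted in the lemma: bridging the two requires exploiting the identity $m_g(g)=m_i(g)^{e_i}\,n_i(g)\in I$ together with the invertibility of $n_i(g)$ modulo $\langle m_i(g)\rangle$, so that the radical elements of $I$ not already lying in $I+\langle m_i(g)\rangle$ are absorbed. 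This technical collapse is the real content of the lemma; the rest is bookkeeping around the isomorphism and CRT.
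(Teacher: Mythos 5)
Your CRT-based argument is the standard one and Steps 1 and 2 are correct: $\ker\phi = \langle\sqrt{m_g}\rangle$, generic position upgrades $\bar\phi$ to an isomorphism, and pulling back the coordinate projections yields that each $\sqrt{I}+\langle m_i(g)\rangle$ is maximal and $\sqrt{I}=\bigcap_i\bigl(\sqrt{I}+\langle m_i(g)\rangle\bigr)$. You have therefore proved a correct and useful statement. However, the ``technical collapse'' you flag in Step~3 --- that $\sqrt{I}+\langle m_i(g)\rangle$ should equal $I+\langle m_i(g)\rangle$ --- is not a gap you can bridge by the B\'ezout/invertibility manoeuvre you sketch; it is genuinely false under the definition of generic position given in the paper. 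Take $I=\langle x^2,xy,y^2\rangle$ in $K[x,y]$ and $g=x$: then $m_g(t)=t^2$, $\sqrt{m_g}=t$ has degree $1=\dim_K K[x,y]/\langle x,y\rangle$, so $g$ is in generic position, $r=1$ and $m_1=t$. But $I+\langle m_1(g)\rangle=\langle x,y^2\rangle$ is not prime (it is $\langle x,y\rangle$-primary), and it does not equal $\sqrt{I}=\langle x,y\rangle$. So the conclusion ``$P_i=I+\langle m_i(g)\rangle$ is prime'' fails, and your instinct to present the preimage intrinsically as $\sqrt{I}+\langle m_i(g)\rangle$ was correct.

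What this shows is that the statement as quoted in the paper (which already has the undefined symbol $J$ in it) is a loose transcription of Noro--Yokoyama's Proposition~2.3: it needs either the hypothesis that $I$ is radical, or a stronger notion of generic position requiring $\deg m_g=\dim_K(K[X]/I)$ rather than $\deg\sqrt{m_g}=\dim_K(K[X]/\sqrt{I})$ (in which case $K[X]/I\cong K[t]/\langle m_g\rangle$ and $I+\langle m_i(g)\rangle$ is prime because it is the preimage of $\langle m_i\rangle/\langle m_g\rangle$ under this isomorphism). Your proof, replacing $I$ by $\sqrt{I}$ in the formula for $P_i$, is the version that is actually true under the stated hypotheses, and it is also the version that the downstream application (Theorem~\ref{thm:hilb_prime}, where the ideal is prime, hence radical) uses. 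So: the proof you gave is sound and it exposes an imprecision in the cited statement; the one thing to fix in your write-up is to delete the speculative closing paragraph about absorbing radical elements, since that step cannot go through in general.
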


\begin{Definition}[Local Dimension]
    Let $I$ be an ideal of $K[A,X]$, $C$ a constructible set and $d$ an integer. Then, we say that $I$ is {\em local $d$-dimensional} with respect to $C$ if $\varphi_\alpha(I)$ is $d$-dimensional as an ideal of $K[X]$ for any $\alpha\in C$. 
\end{Definition}

\begin{Example}
    Let $I=\langle x_1^2,x_1x_2,ax_2^2\rangle\subset \mathbb{Q}[a,x_1,x_2]$. The local dimension of $I$ is $1$ with respect to $V(a)$ is $1$ since $\varphi_0 (I)=\langle x_1^2,x_1x_2\rangle\subset \mathbb{Q}[x_1,x_2]$. On the other hand, the local dimension of $I$ is $0$ with respect to $\mathbb{Q}\setminus V(a)$ is $0$ since $\varphi_\alpha (I)=\langle x_1^2,x_1x_2,x_2^2\rangle\subset \mathbb{Q}[x_1,x_2]$ for any $\alpha\neq 0$. 
\end{Example}

\begin{Remark}
    Let $I$ be a parametric ideal of $K[A,X]$. For a comprehensive Gr\"obner system $\mathcal{G}=\{(C_1,G_1),$ $\ldots$ $,(C_l,G_l)\}$, the local dimension of $G_i$ with respect to $C_i$ can be defined for each $i$ since the dimension of the ideal is uniquely determined from its Gr\"obner basis. 
\end{Remark}

\begin{Definition}[Local Maximal Independent Set]
    Let $I$ be an ideal of $K[A,X]$, $C$ a constructible set and $U\subset X$. Then, we say that $U$ is a {\em local maximal independent set} {\em (local MIS, in short)} with respect to $C$ if $U$ is a maximal independent set of $\varphi_\alpha(I)$ for any $\alpha\in C$. 
\end{Definition}

\begin{Example}
    Let $I=\langle x^2,xy,ay^2\rangle\subset \mathbb{Q}[a,x,y]$. Then $U=\{y\}$ is a local maximal independent set with respect to $V(a)$. 
\end{Example}

\begin{Definition}[Local Minimal Polynomial]
    Let $I$ be a local zero-dimensional ideal $I$ with respect to a constructible set $C$ and $G$ be a stable Gr\"obner basis with respect to $C$. For a polynomial $f(A,X)$ of $K[A,X]$ and $m_f(A,t)\in K[A,t]$, $m_f(A,t)$ is called the {\em local minimal polynomial} $m_f(A,t)$ modulo $I$ and $C$ if $m_f(\alpha,t)$ is the minimal polynomial modulo $\varphi_\alpha(I)$ for any $\alpha\in C$. 
    A polynomial $g(A,X)\in K[A,X]$ is said to be {\em in local generic position} with respect to $I$ and $C$ if $\deg(\sqrt{m_g(\alpha,t)})=\dim_K(K[X]/\sqrt{\varphi_\alpha(I)})$ for the local minimal polynomial $m_g$ modulo $I$ for any $\alpha\in C$. 
\end{Definition}

\begin{Example}
    Let $I=\langle x^2,xy,ay^2+a^2y\rangle\subset \mathbb{Q}[a,x,y]$. Then $m_y(a,y)=y^2+ay$ is a local minimal polynomial with respect to $I$ and $\mathbb{Q}\setminus V(a)$ since $\sqrt{\varphi_\alpha(I)}=\langle x,y^2+ay\rangle$ for $\alpha\neq 0$. As $\deg(m_y(\alpha,t))=\dim_K(K[X]/\sqrt{\varphi_\alpha(I)})=2$ for $\alpha\neq 0$, it follows that $y^2+ay$ is in local generic position with respect to $I$ and $\mathbb{Q}\setminus V(a)$. 
\end{Example}

\subsection{Hilbert's Irreducibility Theorem}

To discuss the irreducibility of the minimal polynomial with parameters, we use Hilbert's irreducibility theorem. We recall the notion of Zariski dense. For detailed properties of Zariski topology, see \cite{Hartshorne}. 

\begin{Definition}[Zariski Dense Set]
    Let $(K^m,\mathbb{O})$ be the Zariski topology. Then, a subset $\mathcal{O}$ is said to be 
    {\em Zariski dense} if the closure of $\mathcal{O}$ equals to $K^m$. 
\end{Definition}

\begin{Remark}
    Any Zariski open set is Zariski dense, but the converse is not necessarily true. For example, letting $\pi:\mathbb{Q}^2\to \mathbb{Q}$ be the projection map by $\pi(a_1,x_1)=a_1$, the image of the variety $\pi(V_{\mathbb{Q}}(x_1^2-a_1))$ is Zariski-dense but not Zariski-open in $\mathbb{Q}$. However, in this case, it is also dense for the ordinary (Euclidean) topology (see Lemma \ref{lem:dense}). 
\end{Remark}

We introduce the Hilbert's irreducibility theorem, which was conceived by David Hilbert in 1892. 

\begin{Theorem}[\cite{Corvaja2016}, Theorem 4.1.2] \label{Thm:HIT}
    Let $f(A,X)$ be an irreducible polynomial of degree $\ge 1$ in $\mathbb{Q}[A,X]$. Then, there exist infinitely many points $\alpha\in \mathbb{Q}^m$ such that the specialized polynomial $f(\alpha,X)$ is irreducible in $\mathbb{Q}[X]$. Moreover, there exists a Zariski dense set $\mathcal{O}$ of $\mathbb{Q}^m$ such that $f(\alpha,X)$ is irreducible in $\mathbb{Q}[X]$ for any $\alpha\in \mathcal{O}$. 
\end{Theorem}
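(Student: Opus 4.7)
The plan is to follow the classical route to Hilbert's Irreducibility Theorem, proceeding by induction on the number of parameters $m$. The base case $m=1$ is the essential content; the inductive step then reduces to it by fixing $m-1$ coordinates at a time, applying the one-parameter result to produce a Zariski dense slice, and assembling these slices via a Fubini-type argument (using that a subset of $\mathbb{Q}^m$ is Zariski dense whenever its projection to a coordinate hyperplane hits a Zariski dense set and the fibre above each good point is Zariski dense).

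For the base case, treat $f(a,X)$ as a polynomial in $X$ with coefficients in $\mathbb{Q}[a]$, and let $L$ be the splitting field of $f$ over $\mathbb{Q}(a)$ with Galois group $G$. By Gauss's Lemma the hypothesis of irreducibility in $\mathbb{Q}[a,X]$ gives irreducibility in $\mathbb{Q}(a)[X]$, hence $G$ acts transitively on the roots. For each $\alpha\in\mathbb{Q}$ avoiding a finite set of ``singular'' values (the zeros of the leading coefficient of $f$ as a polynomial in $X$, the zeros of the discriminant, and the zeros of the denominators appearing in a fixed integral model of $L$), the place $a\mapsto\alpha$ admits an extension to $L$ whose residue field $L_\alpha$ is a splitting field of $f(\alpha,X)$, inducing an injection $\operatorname{Gal}(L_\alpha/\mathbb{Q})\hookrightarrow G$. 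The specialized polynomial $f(\alpha,X)$ is irreducible over $\mathbb{Q}$ precisely when this image subgroup still acts transitively on the roots; the failure locus is therefore a finite union of sets of the form
\[
  \{\alpha\in\mathbb{Q} : \exists\, y\in\mathbb{Q}\ \text{with}\ h_i(\alpha,y)=0\},
\]
one for each proper intransitive subgroup of $G$ and a corresponding auxiliary polynomial $h_i(a,y)\in\mathbb{Q}[a,y]$ built from a primitive element of the fixed field.

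The crux is to show that such a \emph{thin set} is not Zariski dense in $\mathbb{Q}$, so that its complement (being the complement of something not Zariski dense in $\mathbb{A}^1$, which in $\mathbb{A}^1$ means ``missing only finitely many points in a geometric sense but infinitely many arithmetically'') still leaves a Zariski dense set of good $\alpha$. This is the main obstacle, and it is what genuinely uses the arithmetic of $\mathbb{Q}$. The standard route is a Northcott-type height argument: count rational points of height $\le N$ lying in the thin set and compare to the total count on $\mathbb{A}^1(\mathbb{Q})$. Using Puiseux expansions of the roots of $h_i(a,y)=0$ over $a=\infty$, one shows that integral $\alpha$ with $h_i(\alpha,y)=0$ solvable in $\mathbb{Q}$ must be close to one of finitely many algebraic branches, yielding $O(\sqrt{N})$ such points versus $\Theta(N^2)$ in total and hence a density-zero exceptional set; its complement is infinite in every arithmetic progression and in particular Zariski dense. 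Granted this quantitative input (which is proved in the Corvaja--Zannier reference), the ``infinitely many $\alpha$'' assertion is then a trivial consequence of Zariski density over the infinite field $\mathbb{Q}$, completing the proof.
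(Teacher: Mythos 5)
The paper does not prove Theorem~\ref{Thm:HIT} at all; it is imported wholesale as a citation to Corvaja (Theorem~4.1.2), and the paper's original contributions (Theorem~\ref{thm:hilb_prime}, Corollary~\ref{cor:hilb_primary}, Theorem~\ref{thm:denseCPDS}) take it as a black box. So there is no ``paper's own proof'' to compare against; your proposal is an attempt to re-derive a background result the authors chose not to prove.

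As a sketch of the classical route to Hilbert's irreducibility theorem, your outline is in the right spirit but has one genuine gap in setup and one in the ordering of reductions. First, your base case frames the Galois-theoretic argument around ``the splitting field $L$ of $f$ over $\mathbb{Q}(a)$ with Galois group $G$ acting transitively on the roots.'' This only makes sense when $X$ consists of a \emph{single} variable, whereas the statement allows $X=\{x_1,\ldots,x_n\}$, and irreducibility of $f(\alpha,X)$ then means irreducibility of a multivariate polynomial in $\mathbb{Q}[x_1,\ldots,x_n]$, not transitivity of a Galois action on finitely many roots. A reduction from many $X$-variables to one is required (Kronecker substitution, a generic linear form, or the Bertini--Noether reduction as in Fried--Jarden), and your sketch does not supply it. Second, the Fubini-type induction as you describe it (``fix $m-1$ coordinates, apply the one-parameter result, assemble'') presupposes that after fixing $a_1=\alpha_1$ the polynomial $f(\alpha_1,a_2,\ldots,a_m,X)$ is still irreducible; that is itself a non-trivial density statement and must be arranged \emph{before} the one-parameter theorem can be invoked on the slice, so the induction needs to be ordered more carefully (the usual route is to specialize one parameter at a time using the $n=1$, $m=1$ case already proved, together with the fact that a nonzero polynomial has a Zariski-open nonvanishing locus to guarantee irreducibility of the slice generically). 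The thin-set counting step you defer to Corvaja--Zannier is fine to black-box at this level; it is the honest arithmetic input and cannot be made elementary.
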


\begin{Example}
    Let $f(a_1,x_1)=x_1^2-a_1$ in $\mathbb{Q}[a_1,x_1]$. Then, $\mathcal{O}=\{\alpha\in \mathbb{Q}\mid \text{$\alpha$ is not a positive square number}\}$ is a Zariski-dense set of $\mathbb{Q}$ and $f$ is irreducible over $\mathbb{Q}$ for any $\alpha\in \mathcal{O}$. 
\end{Example}

\begin{Remark}
In \cite{Yokoyama2006} a systematic method for 
factorization of polynomials with parametric coefficients was proposed by introducing the notion of decomposition ideal. In its continuation \cite{Yokoyama2022}, if an irreducible polynomial $f(a,X)$ with a single parameter $a$ over $\mathbb{Q}$ 
is absolutely irreducible over $\mathbb{Q}(a)$, the parameter values making $f$ reducible can be completely classified. See Example \ref{ex:hilb}. 
\end{Remark}

\begin{Remark}
    A field satisfying the property of Theorem \ref{Thm:HIT} is said to be {\em Hilbertian}. A finite extension of a Hilbertian field is also Hilbertian (see Chapter 9 of \cite{Lang}). Thus, for an irreducible polynomial $f$ over $\mathbb{Q}$ and its root $\kappa$, we can apply the Hilbert's irreducible theorem to parametric polynomials over $\mathbb{Q}(\kappa)$. Also, the field of rational functions over a Hilbertian field $K$, i.e., $K(x)$ is Hilbertian. This property implies that we can apply the Hilbert's irreducibility theorem for the localized field $\mathbb{Q}(U)[A,X\setminus U]$ by a subset $U$ of $X$.  
\end{Remark}

To consider the density of such $\mathcal{O}$ in Euclidean topology, the concept of Hilbert subset is introduced as follows.

\begin{Definition}[Hilbert Subset]
    Let $f$ be an irreducible polynomial of degree $\ge 1$ in $\mathbb{Q}[A,X]$. Then, the set of rational points  $\alpha$ such that $f(\alpha,X)$ is irreducible over $\mathbb{Q}$ is called a basic Hilbert subset. The finite intersection of basic Hilbert subsets and Zariski-open sets is called a Hilbert subset. 
\end{Definition}

\begin{Remark}
    The finite intersection of Hilbert subsets is also a Hilbert subset by the definition. 
\end{Remark}

We may say that almost points in $\mathbb{Q}^m$ are in $\mathcal{O}$ by Lemma \ref{lem:dense}. 
   
\begin{Lemma}[\cite{Lang}, Corollary 2.5]\label{lem:dense}
     A Hilbert subset is dense for the ordinary topology. 
\end{Lemma}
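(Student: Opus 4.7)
The plan is to establish Euclidean density of a Hilbert subset $H \subset \mathbb{Q}^m$ by a parameter-scaling reduction to a quantitative form of Theorem \ref{Thm:HIT} applied to integer points. Unpacking the definition, a rational point $\alpha$ lies in $H$ iff every $f_i(\alpha, X)$ is irreducible in $\mathbb{Q}[X]$ (for finitely many irreducible $f_1, \ldots, f_k \in \mathbb{Q}[A,X]$) and $g(\alpha) \neq 0$ (for a single nonzero $g \in \mathbb{Q}[A]$ encoding the Zariski-open factors). Since rational-cornered open boxes $B = \prod_{i=1}^m (u_i, v_i) \subset \mathbb{R}^m$ form a base of the Euclidean topology on $\mathbb{Q}^m$, it suffices to prove $H \cap B \neq \emptyset$ for every such $B$.

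For this, I would pass to the integer lattice by a scaling of the parameters. Fix a positive integer $N$ and apply the $\mathbb{Q}$-algebra isomorphism $A \mapsto Y/N$ to each $f_i$ and $g$, clearing denominators to obtain $\tilde{f}_i(Y,X) := N^{\deg_A f_i}\, f_i(Y/N, X) \in \mathbb{Q}[Y,X]$ and $\tilde{g}(Y) := N^{\deg g}\, g(Y/N) \in \mathbb{Q}[Y]$. Because the substitution is an automorphism, each $\tilde{f}_i$ remains irreducible over $\mathbb{Q}$ and $\tilde{g}$ remains nonzero. For $a \in \mathbb{Z}^m$, the rational point $a/N$ lies in $H$ precisely when every $\tilde{f}_i(a, X)$ is irreducible in $\mathbb{Q}[X]$ and $\tilde{g}(a) \neq 0$, so it is enough to exhibit some $a \in NB \cap \mathbb{Z}^m$ satisfying these conditions.

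Then I would invoke the quantitative Hilbert irreducibility theorem applied to $\tilde{f}_1,\ldots,\tilde{f}_k$ together with $\tilde{g}$: the number of $a \in \mathbb{Z}^m \cap [-M,M]^m$ for which some $\tilde{f}_i(a,X)$ factors in $\mathbb{Q}[X]$ or $\tilde{g}(a) = 0$ is $o(M^m)$ as $M \to \infty$. Choosing $N$ large enough that $NB \cap \mathbb{Z}^m$ contains $\sim N^m \operatorname{vol}(B)$ lattice points, this exceptional count becomes negligible in comparison, so a non-exceptional $a$ exists; then $\alpha := a/N \in H \cap B$, yielding density.

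The main obstacle is precisely the quantitative bound $o(M^m)$ on the exceptional integer points, since the qualitative statement of Theorem \ref{Thm:HIT} guarantees only Zariski density, which a priori allows all good specializations to concentrate away from any prescribed Euclidean box. This strengthening is the technical heart of the lemma as cited from Lang: Hilbert's original counting argument controls the possible factorizations of $\tilde{f}_i(a,X)$ through Newton-polygon or Puiseux-series data and then uses a sieve estimate to bound the number of $a$ realizing them. I would invoke this estimate from Lang's treatment rather than re-derive it.
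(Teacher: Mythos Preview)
The paper does not supply a proof of this lemma: it is stated with a bare citation to Corollary~2.5 of Lang's \emph{Fundamentals of Diophantine Geometry} and then used as a black box. There is therefore no in-paper argument to compare your proposal against.

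That said, your outline is the standard route to this result and is essentially what the cited reference does: reduce Euclidean density to hitting an arbitrary rational box, rescale the parameters so that the target points become integers, and invoke the quantitative form of Hilbert irreducibility (the exceptional integer specializations in $[-M,M]^m$ number $O(M^{m-1/2+\varepsilon})$, hence $o(M^m)$) to guarantee a non-exceptional point in the scaled box. Your identification of this counting estimate as the technical heart, and your decision to import it from Lang rather than re-derive it, are both appropriate here.

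One small point worth tightening: after the substitution $A\mapsto Y/N$ the auxiliary polynomials $\tilde f_i$ depend on $N$, while you then take $M$ comparable to $N$; so as written you are implicitly assuming the implied constant in the $o(M^m)$ bound is controlled under this rescaling. This is true---a $\mathbb{Q}$-linear reparametrization of $A$ does not change the Puiseux/Newton data governing the possible factorization types---but it deserves a sentence. Alternatively, one can bypass the issue by applying the quantitative estimate directly to the original $f_i$ over integer points in the dilated box $NB$, noting that a set of asymptotic density~$1$ in $\mathbb{Z}^m$ meets every sufficiently large lattice cube. Either way the argument closes.
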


We can utilize Lemma \ref{lem:fact} to check the primarity of a $0$-dimensional prime ideal. 

\begin{Lemma} \label{lem:mini-prime}
    Let $I$ be a zero-dimensional ideal in $K[X]$. Then, $I$ is a prime ideal if and only if its minimal polynomial in generic position is irreducible over $\mathbb{Q}$. Moreover, $I$ is a primary ideal if and only if its minimal polynomial in generic position is a power of an irreducible polynomial over $\mathbb{Q}$. 
\end{Lemma}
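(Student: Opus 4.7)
The plan is to reduce both biconditionals to Lemma \ref{lem:fact} together with the basic fact that, since $I$ is zero-dimensional, $K[X]/I$ is a finite-dimensional $K$-algebra and the minimal polynomial $m_g(t)$ modulo $I$ coincides with the minimal polynomial of the class $g+I$ in $K[X]/I$. I will also tacitly use the standard fact that, in the zero-dimensional setting, $I$ is primary if and only if its radical $\sqrt{I}$ is prime (equivalently, maximal), because $K[X]/I$ is Artinian and a prime radical forces $K[X]/I$ to be local Artinian, in which every zero divisor is nilpotent. Note that ``irreducible over $\mathbb{Q}$'' should be read as ``irreducible over $K$'' to be consistent with Lemma \ref{lem:fact}.

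For the \emph{prime} case, the forward direction does not need the generic-position hypothesis: if $I$ is prime and zero-dimensional then $I$ is maximal, so $K[X]/I$ is a field, and the minimal polynomial of any element of a field extension is irreducible, hence $m_g$ is irreducible. For the reverse direction, suppose $g$ is in generic position and $m_g$ is irreducible. Applying Lemma \ref{lem:fact} with the trivial factorization $m_g=m_g^{1}$, I get a single prime divisor $P_1=I+\langle m_g(g)\rangle$. Since $m_g(g)\in I$ by definition of the minimal polynomial, $P_1=I$, and Lemma \ref{lem:fact} also gives $\sqrt{I}=P_1=I$. Thus $I$ equals its prime radical and is prime.

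For the \emph{primary} case, assume first that $I$ is primary with $P:=\sqrt{I}$ (which is maximal). Let $m_g^{*}$ be the minimal polynomial of $g+P$ in the field $K[X]/P$, which is irreducible. Because $m_g^{*}(g)\in P=\sqrt{I}$, some power $(m_g^{*}(g))^{N}$ lies in $I$, so $m_g$ divides $(m_g^{*})^{N}$ by minimality; irreducibility of $m_g^{*}$ then forces $m_g=(m_g^{*})^{e}$ for some $e\ge 1$. For the reverse direction, suppose $g$ is in generic position and $m_g=m_1^{e_1}$ with $m_1$ irreducible. Lemma \ref{lem:fact} yields $\sqrt{I}=P_1=I+\langle m_1(g)\rangle$, a single prime ideal. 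Because $I$ is zero-dimensional, $P_1$ is maximal and is the unique associated prime of $I$, so $I$ is primary.

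The only real subtlety is the reverse direction of the primary case, where one must argue from ``$\sqrt{I}$ prime'' to ``$I$ primary''; this fails in general but holds here thanks to zero-dimensionality, via the local-Artinian structure of $K[X]/I$. Everything else is direct bookkeeping against Lemma \ref{lem:fact} and the definitions of minimal polynomial and generic position.
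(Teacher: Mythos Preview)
Your argument is correct. The paper does not actually supply a proof of this lemma; it only prefaces the statement with the sentence ``We can utilize Lemma~\ref{lem:fact} to check the primarity of a $0$-dimensional prime ideal'' and then states the result without a \texttt{proof} environment. So there is nothing to compare against beyond the implicit claim that the lemma follows from Lemma~\ref{lem:fact}, and your write-up is precisely a clean execution of that reduction: for the prime case you observe that a single irreducible factor forces $P_1=I+\langle m_g(g)\rangle=I$ and hence $\sqrt{I}=I$ is prime, and for the primary case you combine the single associated prime coming from Lemma~\ref{lem:fact} with the Artinian structure of $K[X]/I$ to conclude primarity. The one remark worth keeping is the caveat you already noted, namely that ``irreducible over $\mathbb{Q}$'' in the statement should be read as ``irreducible over $K$'' to match Lemma~\ref{lem:fact}; otherwise the argument is complete.
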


The next lemma refers to the relationship between the primarity of higher dimensional ideals and zero-dimensional ideals. 

\begin{Lemma}[\cite{Noro2004} Lemma 3.1, \cite{greuel2002singular} Chapter 4.3] 
\label{lem:cont-prime}
    Let $I$ be an ideal of $K(U)[X\setminus U]$ with $U$ is a maximal independent set of $X$ and $I^c=I\cap K[X]$, the contraction of $I$. Then:
    \begin{enumerate}
        \item If $I$ is a radical ideal, then $I^c$ is also a radical ideal.
        \item If $I$ is a prime ideal, then $I^c$ is also a prime ideal.
        \item If $I$ is a primary ideal, then $I^c$ is also a primary ideal.
    \end{enumerate}
Conversely, if $I$ is a prime (primary) ideal of $K[X]$ and $U$ is its MIS, 
then the extension $I^e$, which is the ideal generated by $I$ in $K(U)[X\setminus U]$, is 
also a prime(primary) ideal. Moreover, in this case, $(I^e)^c=I$ holds. 
\end{Lemma}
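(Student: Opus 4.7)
The plan is to interpret $K(U)[X\setminus U]$ as the localization $S^{-1}K[X]$, where $S = K[U]\setminus\{0\}$ is the multiplicative set of nonzero polynomials in the independent variables, so that extension and contraction in the statement coincide with the usual extension and contraction along the localization map $K[X]\to S^{-1}K[X]$. Once this identification is made, every assertion in the lemma becomes an instance of the classical bijection between ideals of a localization and ideals of the base ring that are disjoint from the multiplicative set, so the proof will be a routine verification.

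For the forward direction (items 1--3), I would argue uniformly using only that contraction is the inverse image of an ideal along a ring homomorphism. If $f^n \in I^c$ then $f^n/1 \in I$, so radicality of $I$ in $S^{-1}K[X]$ gives $f/1 \in I$ and hence $f \in I^c$. If $fg \in I^c$ with $f \notin I^c$, then $(f/1)(g/1) \in I$ and $f/1 \notin I$, so primality of $I$ yields $g/1 \in I$ and $g \in I^c$. The primary case is identical except that one concludes $g^n/1 \in I$ for some $n$, hence $g^n \in I^c$. Notably, this half does not use the MIS hypothesis at all.

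For the converse direction, the MIS condition enters to guarantee $I \cap S = \emptyset$, which is literally the defining condition $I \cap K[U] = \{0\}$ of a maximal independent set. Under this hypothesis, $(I^e)^c = \{f \in K[X] : sf \in I \text{ for some } s \in S\}$ collapses to $I$: in the prime case $sf \in I$ with $s \notin I$ forces $f \in I$, and $I^e$ is prime because any relation $(f_1/s_1)(f_2/s_2) \in I^e$ clears to $uf_1 f_2 \in I$ for some $u \in S$, so primality and $u \notin I$ give $f_1 \in I$ or $f_2 \in I$. For the primary case one first upgrades the disjointness to $\sqrt{I}\cap S = \emptyset$: any $s \in K[U]\cap\sqrt{I}$ satisfies $s^n \in I\cap K[U] = \{0\}$, hence $s = 0$. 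Then the same arguments go through with $\sqrt{I}$ in place of $I$ wherever one needs a non-zero-divisor modulo $I$.

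The main obstacle, such as it is, is this promotion from $I\cap S=\emptyset$ to $\sqrt{I}\cap S=\emptyset$ in the primary case, together with the bookkeeping of writing the MIS hypothesis in the language of multiplicative sets; everything mathematical beyond that is standard and follows the localization--contraction dictionary of commutative algebra.
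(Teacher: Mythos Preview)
Your argument is correct and is the standard localization proof: identify $K(U)[X\setminus U]$ with $S^{-1}K[X]$ for $S=K[U]\setminus\{0\}$, then use the elementary contraction/extension dictionary, with the MIS hypothesis entering precisely as $I\cap S=\emptyset$ (and its upgrade to $\sqrt{I}\cap S=\emptyset$ in the primary case). The paper itself does not supply a proof for this lemma---it is simply quoted from \cite{Noro2004} and \cite{greuel2002singular}---so there is no in-paper argument to compare against; your approach is exactly the one found in those references.
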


We generalize the Hilbert's irreducibility theorem for a prime ideal as follows. 

\begin{Theorem} \label{thm:hilb_prime}
    Let $P$ be a prime ideal $\mathbb{Q}[A,X]$. Then, there exist infinitely many points $\alpha\in \mathbb{Q}^m$ such that the specialized polynomial $\varphi_\alpha(P)$ is a prime ideal in $\mathbb{Q}[X]$. Moreover, there exist a dense set $\mathcal{O}$ of $\mathbb{Q}^m$ and a non-empty locally closed set $C$ such that $\varphi_\alpha(P)$ is a prime ideal in $\mathbb{Q}[X]$ for any $\alpha\in \mathcal{O}\cap C$. 
\end{Theorem}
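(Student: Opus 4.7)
The plan is to reduce to the zero-dimensional case via a maximal independent set, reformulate primarity through the irreducibility of a single minimal polynomial in generic position (Lemma \ref{lem:mini-prime}), and then apply Hilbert's Irreducibility Theorem (Theorem \ref{Thm:HIT}) in its extended form for Hilbertian rational function fields.

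First, let $H=P\cap \mathbb{Q}[A]$. Since $P$ is prime, $H$ is prime, and for $\alpha\notin V(H)$ one has $\varphi_\alpha(P)=\mathbb{Q}[X]$, so the candidate locally closed set must sit inside $V(H)$. Apply Corollary \ref{cor:generalCGS} to the reduced Gr\"obner basis $G$ of $P$ with respect to a block order $X\succ\succ A$ to obtain an ideal $J_0\supsetneq H$ such that $\varphi_\alpha(G)$ is a Gr\"obner basis of $\varphi_\alpha(P)$ for every $\alpha$ in the non-empty locally closed set $C_0:=V(H)\setminus V(J_0)$.

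Next, choose a maximal independent set $U$ of $P$ and set $V=U\cap X$. After possibly refining $C_0$ to a locally closed subset $C$ on which $V$ is a local MIS of $\varphi_\alpha(P)$, consider the extension $P^e$ in $\kappa[X\setminus V]$, where $\kappa$ is the fraction field of $(\mathbb{Q}[A]/H)[V]$. By Lemma \ref{lem:cont-prime}, $P^e$ is a zero-dimensional prime ideal, so by Lemma \ref{lem:mini-prime} there exists $g\in \mathbb{Q}[A,X]$ in generic position whose minimal polynomial $m_g\in \kappa[t]$ is irreducible. Clearing denominators and reducing modulo $H$ gives a polynomial $\tilde m_g(A,V,t)\in \mathbb{Q}[A,V,t]$ that is irreducible in $(\mathbb{Q}[A]/H)(V)[t]$.

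The core step is then Hilbert's Irreducibility Theorem applied to $\tilde m_g$ viewed as a polynomial in the parameter variables $A$ (modulo $H$) over the Hilbertian field $\mathbb{Q}(V)$: it yields a Hilbert subset $\mathcal{O}\subset \mathbb{Q}^m$ for which $\tilde m_g(\alpha,V,t)$ remains irreducible in $\mathbb{Q}(V)[t]$ for every $\alpha\in \mathcal{O}\cap V(H)$, and by Lemma \ref{lem:dense} this $\mathcal{O}$ is dense in the ordinary topology. For $\alpha\in \mathcal{O}\cap C$, the minimal polynomial of $\varphi_\alpha(g)$ modulo $\varphi_\alpha(P)^{e}$ is precisely $\tilde m_g(\alpha,V,t)$; its irreducibility together with Lemma \ref{lem:mini-prime} forces $\varphi_\alpha(P)^e$ to be prime, whence Lemma \ref{lem:cont-prime} shows $\varphi_\alpha(P)$ is prime in $\mathbb{Q}[X]$, and infinitely many such rational $\alpha$ exist by density. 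The main obstacle I foresee is verifying that specialization commutes with the formation of the local minimal polynomial and with extension/contraction along $V$: this forces us to absorb all bad loci (denominators of $m_g$ vanishing, $V$ ceasing to be a local MIS, or the Gr\"obner basis degenerating) into the defining ideal of $C$, so that on the surviving locally closed set all these compatibilities hold simultaneously.
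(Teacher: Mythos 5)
Your proposal follows essentially the same route as the paper's proof: reduce to the zero-dimensional case via a maximal independent set $U\subset X$, characterize primarity through the irreducibility of a local minimal polynomial in generic position (Lemma \ref{lem:mini-prime}), invoke Hilbert's Irreducibility Theorem over the Hilbertian field $\mathbb{Q}(U)$, and pull the conclusion back through contraction (Lemma \ref{lem:cont-prime}). The one point where you go beyond the paper's exposition is in explicitly flagging the condition ideal $H=P\cap\mathbb{Q}[A]$ and the fact that the locally closed set $C$ must sit inside $V(H)$; this is a genuine subtlety that the paper's proof glosses over, since when $H\neq 0$ the set $\mathcal{O}$ produced by Theorem \ref{Thm:HIT} is dense in $\mathbb{Q}^m$ while $C\subset V(H)$ is a proper closed subvariety, so the nonemptiness of $\mathcal{O}\cap C$ (and a fortiori the ``infinitely many points'' claim) does not follow from HIT in $\mathbb{Q}^m$ alone but needs a version of Hilbert irreducibility relative to the $\mathbb{Q}$-variety $V(H)$. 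Your write-up names the issue but, like the paper, does not fully close it; in the generic case $H=0$ (which is the case Corollary \ref{cor:generic} is really about) both arguments go through cleanly.
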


\begin{proof}
    First, we assume $P$ is a local $0$-dimensional ideal. Then, by Lemma \ref{lem:mini-prime}, the local minimal polynomial $f$ in generic position with respect to $P$ is irreducible over $\mathbb{Q}$. By Theorem \ref{Thm:HIT}, there exists a Zariski dense set $\mathcal{O}$ of $\mathbb{Q}^m$ such that $f(\alpha,X)$ is irreducible in $\mathbb{Q}[X]$ if and only if $\alpha\in \mathcal{O}$. Here, there exists a non-empty locally closed set $C$ such that $f(\alpha,X)$ is a minimal polynomial in generic position with respect to $\varphi_\alpha (P)$, i.e., $\varphi_\alpha (P)$ is a prime ideal for any $\alpha\in C$. Therefore, $\varphi_\alpha(P)$ is a prime ideal in $\mathbb{Q}[X]$ for any $\alpha\in \mathcal{O}\cap C$. 

    Second, we consider $P$ of local dimension $\ge 1$. Let $U\subset X$ be a local maximal independent set of $P$. Then, there exists a non-empty locally closed set $C_1$ such that $U$ is a maximal independent set of $\varphi_\alpha (P)$ for any $\alpha \in C_1$. Then, for such $\alpha$, $K(U)[X\setminus U]\varphi_\alpha (P)$ is a zero-dimensional ideal in $P^e$ and a prime ideal for any $\alpha \in \mathcal{O}\cap C_2$, where $\mathcal{O}$ is a Zariski dense set of $\mathbb{Q}^m$ and $C_2$ a non-empty locally closed set. Also, there exists a non-empty locally closed set $C_3$ such that $\varphi_\alpha(P)=K(U)[X\setminus U]\varphi_\alpha (P)\cap K[X]=\varphi_\alpha(K(U)[X\setminus U]P\cap K[X])$ for any $\alpha \in C_3$ by $(O8)$ in Proposition \ref{prop:ideal_op}.
    
    Thus, by Lemma \ref{lem:cont-prime} and letting $C=C_1\cap C_2\cap C_3$, we obtain that $\varphi_\alpha(P)$ is a prime ideal in $\mathbb{Q}[X]$ for any $\alpha\in \mathcal{O}\cap C$. 
\end{proof}

A similar proof method yields the following corollary.

\begin{Corollary} \label{cor:hilb_primary}
    Let $P$ be a primary ideal $\mathbb{Q}[A,X]$. Then, there exist infinitely many points $\alpha\in \mathbb{Q}^m$ such that the specialized polynomial $\varphi_\alpha(Q)$ is a primary ideal in $\mathbb{Q}[X]$. Moreover, there exist a dense set $\mathcal{O}$ of $\mathbb{Q}^m$ and a non-empty locally closed set $C$ such that $\varphi_\alpha(Q)$ is a primary ideal in $\mathbb{Q}[X]$ for any $\alpha\in \mathcal{O}\cap C$. 
\end{Corollary}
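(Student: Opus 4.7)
The plan is to follow the proof of Theorem \ref{thm:hilb_prime} almost verbatim, replacing the characterization ``minimal polynomial is irreducible'' by the characterization ``minimal polynomial is a power of an irreducible'' supplied by Lemma \ref{lem:mini-prime}. First, I would reduce to the local zero-dimensional case exactly as before: pick a local maximal independent set $U \subset X$ valid on a non-empty locally closed set $C_1$, pass to the extension $Q^e \subset \mathbb{Q}(U)[X \setminus U]$, which is zero-dimensional and still primary by Lemma \ref{lem:cont-prime}, and use operation (O8) in Proposition \ref{prop:ideal_op} to produce a locally closed set $C_3$ on which specialization commutes with contraction. Lemma \ref{lem:cont-prime}(3) transfers primarity back to $\varphi_\alpha(Q)$, and the fact that $\mathbb{Q}(U)$ is Hilbertian (noted in the remark after Theorem \ref{Thm:HIT}) legitimises applying Hilbert's irreducibility over this ground field.

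In the local zero-dimensional case, I would take $g(A,X)$ in local generic position with respect to $Q$ on a non-empty locally closed set $C_2$, with local minimal polynomial $m_g(A,t)$. Since $Q$ is primary, Lemma \ref{lem:mini-prime} (applied over the generic point $\mathbb{Q}(A)$) forces $m_g(A,t) = f(A,t)^e$ for some $f(A,t)$ irreducible in $\mathbb{Q}[A,t]$ and some positive integer $e$. I would then apply Theorem \ref{Thm:HIT} to $f(A,t)$ to extract a Zariski dense Hilbert subset $\mathcal{O} \subset \mathbb{Q}^m$ on which $f(\alpha,t)$ stays irreducible. After intersecting $C_2$ with the Zariski open condition that the leading coefficient of $f$ in $t$ does not vanish, I obtain $m_g(\alpha,t) = f(\alpha,t)^e$ as a pure power of an irreducible polynomial for every $\alpha \in \mathcal{O} \cap C_2$, and the other half of Lemma \ref{lem:mini-prime} declares $\varphi_\alpha(Q)$ primary. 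Setting $C = C_1 \cap C_2 \cap C_3$ gives the desired locally closed witness, and Lemma \ref{lem:dense} guarantees $\mathcal{O} \cap C$ is non-empty (in fact infinite).

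The main obstacle I anticipate is the book-keeping around how the factorization $m_g = f^e$ behaves under specialization: one must ensure the leading coefficient and $t$-degree of $f$ are preserved by $\varphi_\alpha$, and that no parasitic factors appear in $m_g(\alpha,t)$ from a degree drop. All of these are Zariski open conditions and so cut out a non-empty locally closed subset of the already-constructed $C$. A secondary subtlety, inherited from Theorem \ref{thm:hilb_prime}, is producing a polynomial in local generic position uniformly on a non-empty locally closed set; this is handled by restricting to where the stable Gr\"obner basis, local dimension, and local MIS conditions are simultaneously preserved, which is exactly the setup already used for the prime case.
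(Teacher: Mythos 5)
Your proposal is correct and follows exactly the route the paper intends: the paper gives no separate argument for this corollary beyond the remark ``a similar proof method yields the following corollary,'' and your plan mirrors the proof of Theorem~\ref{thm:hilb_prime} while substituting the ``power of an irreducible'' half of Lemma~\ref{lem:mini-prime} for the ``irreducible'' half, applying Theorem~\ref{Thm:HIT} to the irreducible base $f(A,t)$ of $m_g=f^e$, and transporting primarity through the MIS/contraction machinery of Lemma~\ref{lem:cont-prime} on an intersection of locally closed sets. The Zariski-open bookkeeping you flag (leading coefficient and degree of $f$, stability of the Gr\"obner basis and local MIS) is the same bookkeeping the paper implicitly uses in the prime case, so there is no gap.
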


Finally, we formalize primary decomposition system with Hilbert subset which is an extended notion of our feasible CPDS.  

\begin{Theorem}[Density of feasible CPDS] \label{thm:denseCPDS}
    Let $I$ be a proper ideal of $\mathbb{Q}[A,X]$. There exists a system $\{(C_1,\mathcal{O}_1,\mathcal{Q}_1),\ldots,(C_l,\mathcal{O}_l,\mathcal{Q}_l)\}$ of tuples of a constructible set, a Hilbert subset, and the reduced Gr\"obner basis such that $\varphi_\alpha (\mathcal{Q}_i)$ is a minimal primary decomposition of $\varphi_\alpha (I)$ for any $\alpha \in C_i\cap \mathcal{O}_i$. In particular, if $C_i$ is Zariski open then $C_i\cap \mathcal{O}_i$ is dense in the Euclidean topology. 
\end{Theorem}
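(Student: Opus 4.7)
The plan is to layer the Hilbert irreducibility results of Section 5 on top of the minimal feasible CPDS produced by Algorithm \ref{alg2}. First I would apply Algorithm \ref{alg2} to $I$ to obtain a minimal feasible CPDS $\{(C_1,\mathcal{Q}_1),\ldots,(C_l,\mathcal{Q}_l)\}$, where each $C_i$ is (a finite union of) locally closed sets and each $\mathcal{Q}_i=\{Q_{i,1},\ldots,Q_{i,r_i}\}$ is a set of primary ideals of $\mathbb{Q}[A,X]$ such that for every $\alpha\in C_i$ the specialized set $\varphi_\alpha(\mathcal{Q}_i)$ satisfies (PD-2), (M-1), and (M-2). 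What remains after this step is precisely the primarity condition (PD-1) on each $\varphi_\alpha(Q_{i,j})$, which is the content of Section 5.

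Second, for each segment $i$ and each primary ideal $Q_{i,j}\in\mathcal{Q}_i$, I would invoke Corollary \ref{cor:hilb_primary} to produce a Hilbert subset $\mathcal{O}_{i,j}\subseteq\mathbb{Q}^m$ and a non-empty locally closed set $D_{i,j}$ such that $\varphi_\alpha(Q_{i,j})$ is a primary ideal of $\mathbb{Q}[X]$ for every $\alpha\in\mathcal{O}_{i,j}\cap D_{i,j}$. Setting
\[
\mathcal{O}_i\ =\ \bigcap_{j=1}^{r_i}\mathcal{O}_{i,j},\qquad C_i'\ =\ C_i\cap\bigcap_{j=1}^{r_i} D_{i,j},
\]
the former is again a Hilbert subset since the class of Hilbert subsets is closed under finite intersection, while the latter is constructible by Lemma \ref{lem:cs}. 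Replacing each $(C_i,\mathcal{Q}_i)$ by $(C_i',\mathcal{O}_i,\mathcal{Q}_i)$ yields the desired system: for every $\alpha\in C_i'\cap\mathcal{O}_i$ the family $\varphi_\alpha(\mathcal{Q}_i)$ satisfies (PD-1), (PD-2), (M-1), (M-2) simultaneously, hence is a minimal primary decomposition of $\varphi_\alpha(I)$. The representation by reduced Gröbner bases is then obtained from Corollary \ref{cor:generalCGS} applied to each $Q_{i,j}$ and refining $C_i'$ accordingly.

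For the final assertion, when $C_i$ (and hence the refined $C_i'$, after similar Gröbner-stability refinement) is Zariski open, the intersection $C_i'\cap\mathcal{O}_i$ is a finite intersection of Zariski open sets and basic Hilbert subsets, and therefore a Hilbert subset in the sense of Definition 5.11. Density in the Euclidean topology then follows immediately from Lemma \ref{lem:dense}.

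The main obstacle I anticipate is non-emptiness bookkeeping: one has to make sure the refinement $C_i'=C_i\cap\bigcap_j D_{i,j}$ does not accidentally collapse to the empty set, so that density is a meaningful claim. This is handled by observing that Corollary \ref{cor:hilb_primary} guarantees $D_{i,j}$ meets a dense open part of the component of $V(I\cap K[A])$ on which $Q_{i,j}$ is ``geometrically relevant'', and combining this with Lemma \ref{lem:procon} and Lemma \ref{lem:pa} inductively along the finite primary decomposition. The remaining technical care is in choosing a single element in local generic position for each $Q_{i,j}$ uniformly over $C_i$, as required to invoke Corollary \ref{cor:hilb_primary}; this is the same local-minimal-polynomial construction used in the proof of Theorem \ref{thm:hilb_prime}, applied independently to each primary component.
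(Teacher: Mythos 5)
Your proof takes essentially the same approach as the paper: start from a minimal feasible CPDS produced by Algorithm~\ref{alg2} (which secures (PD-2), (M-1), (M-2)), then apply Corollary~\ref{cor:hilb_primary} to each primary component $Q_{i,j}$ to obtain a Hilbert subset and a locally closed refinement, and intersect. Your write-up is slightly more explicit than the paper's terse proof about defining the refined constructible set $C_i'$, checking non-emptiness, and producing the reduced Gr\"obner bases, but the substance and the key lemmas invoked are identical.
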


\begin{proof}
    It is follows that there exists a feasible minimal CPDS $\{(C_1,\mathcal{Q}_1),$ $\ldots$ $,(C_l,\mathcal{Q}_l)\}$ of $I$. Then, for each $i$ and $Q_j\in \mathcal{Q}_i$, there exist a Hilbert set $\mathcal{O}_{ij}$ and a non-empty locally closed set $C_{ij}$ such that $\varphi_\alpha(Q_j)$ is a primary ideal in $\mathbb{Q}[X]$ for any $\alpha\in \mathcal{O}_{ij}\cap C_{ij}$ by Corollary \ref{cor:hilb_primary}. We may assume that $C_{ij}\subset C_{i}$. Then, letting $\mathcal{O}_{i}=\bigcap_j \mathcal{O}_{ij}$, $\{(C_1,\mathcal{O}_1,\mathcal{Q}_1),\ldots,(C_l,\mathcal{O}_l,\mathcal{Q}_l)\}$ satisfies the condition. 
\end{proof}

We state that for ``almost'' $\alpha \in \mathbb{Q}^m$, a minimal primary decomposition of $I$ with a genric condition ideal gives a minimal primary decomposition of $\varphi_\alpha (I)$ as follows.

\begin{Corollary}\label{cor:generic}
    Let $I$ be an ideal of $K[A,X]$ with a generic condition ideal and $\mathcal{Q}=\{Q_1,\ldots,Q_r\}$ a minimal primary decomposition of $I$. There exists a Hilbert subset $C$ such that $\varphi_\alpha (\mathcal{Q})=\{\varphi_\alpha (Q_1),\ldots,\varphi_\alpha (Q_r)\}$ is a minimal primary decomposition of $\varphi_\alpha (I)$ for any $\alpha \in C$. In particular, for almost every $\alpha\in \mathbb{Q}^m$, $\varphi_\alpha (\mathcal{Q})=\{\varphi_\alpha (Q_1),\ldots,\varphi_\alpha (Q_r)\}$ is a minimal primary decomposition of $\varphi_\alpha (I)$. 
\end{Corollary}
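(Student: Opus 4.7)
The plan is to combine Theorem \ref{thm:denseCPDS} with the special structure afforded by the generic condition ideal: since $I\cap K[A]=\{0\}$, the base set $V(I\cap K[A])$ is the whole of $K^m$, so the ``stability neighborhood'' produced on the first iteration of the recursion underlying Algorithm \ref{alg2} is a Zariski-open subset of $K^m$, not merely a locally closed set. The combination of this open set with the Hilbert subsets arising from the primarity conditions will yield the desired Hilbert subset $C$ in a single pass, without further recursion.

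First, I would apply Proposition \ref{prop:genpd} to $I$ and $\mathcal{Q}$: the zero ideal is prime, so the hypothesis of a prime condition ideal is satisfied, and this yields an ideal $J_0\supsetneq\{0\}$ such that $\varphi_\alpha$ commutes with the intersection defining the filtered subfamily $\mathcal{Q}'\subseteq\mathcal{Q}$ on the Zariski-open set $K^m\setminus V(J_0)$. Any $Q\in\mathcal{Q}\setminus\mathcal{Q}'$ satisfies $Q\cap K[A]\supsetneq\{0\}$, hence specializes to the unit ideal on $K^m\setminus V(Q\cap K[A])$; these components can be absorbed into $J_0$, so on the resulting open set we may identify $\varphi_\alpha(\mathcal{Q})$ with $\varphi_\alpha(\mathcal{Q}')$ up to trivial unit-ideal factors. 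Then Propositions \ref{prop:int_gen} and \ref{prop:rad_gen} give analogous ideals $J_1,J_2\supsetneq\{0\}$ enforcing (M-1) and (M-2), and $U=K^m\setminus V(J_0 J_1 J_2)$ is a Zariski-open set on which (PD-2), (M-1), and (M-2) all hold for $\varphi_\alpha(\mathcal{Q})$.

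Next, for each $Q_i\in\mathcal{Q}'$ the condition ideal $Q_i\cap K[A]$ equals $\{0\}$, so $Q_i$ itself has a generic condition ideal. Applying Corollary \ref{cor:hilb_primary} provides a Hilbert subset $\mathcal{O}_i$ and a non-empty locally closed set $C_i$ on which $\varphi_\alpha(Q_i)$ is primary. In the generic condition setting, inspecting the proof of Theorem \ref{thm:hilb_prime} shows that each locally-closed refinement (stability of a local MIS, local minimal polynomial remaining in generic position, commutativity of the localization-contraction $(O8)$ with $\varphi_\alpha$) becomes a Zariski-open complement of a proper subvariety, because each corresponding ``stability ideal'' properly contains $\{0\}$ by Corollary \ref{cor:generalCGS}. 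Hence each $C_i$ may be taken Zariski-open.

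Finally, set
\[
C\;=\;U\,\cap\,\bigcap_{Q_i\in\mathcal{Q}'}\bigl(C_i\cap\mathcal{O}_i\bigr).
\]
This is a finite intersection of Zariski-open sets with basic Hilbert subsets, hence a Hilbert subset by definition, and by Lemma \ref{lem:dense} it is dense in the Euclidean topology, which is precisely the ``almost every $\alpha$'' assertion. The main obstacle I anticipate is the third step: it requires a careful audit of the proof of Theorem \ref{thm:hilb_prime} to confirm that each locally closed set appearing there is genuinely Zariski-open once the relevant ideal has generic condition. This is plausible because every stability ideal produced in the paper is built from leading coefficients of a reduced Gr\"obner basis plus the condition ideal (cf.\ Corollary \ref{cor:generalCGS}) and therefore properly contains the condition ideal, but each occurrence in the various stability arguments (local MIS, local minimal polynomial, contraction) must be checked individually.
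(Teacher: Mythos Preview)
The paper does not spell out a proof of this corollary; it is stated immediately after Theorem~\ref{thm:denseCPDS} and left to the reader as a direct specialisation of that theorem to the case $I\cap K[A]=\{0\}$. Your proposal is precisely that specialisation, carried out explicitly: you observe that when the condition ideal is $\{0\}$ the cell $V(I\cap K[A])\setminus V(J)$ produced by Propositions~\ref{prop:genpd}, \ref{prop:int_gen}, \ref{prop:rad_gen} is a genuine Zariski-open subset of $K^m$, and you then intersect it with the Hilbert subsets coming from Corollary~\ref{cor:hilb_primary}. This is exactly the mechanism behind the ``In particular, if $C_i$ is Zariski open'' clause of Theorem~\ref{thm:denseCPDS}, so your approach and the paper's intended one coincide.

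Two remarks. First, your observation that the auxiliary locally closed sets $C_i$ in Corollary~\ref{cor:hilb_primary} may themselves be taken Zariski-open when $Q_i\cap K[A]=\{0\}$ is correct and is indeed what one must check; it follows because every stability ideal in the proof of Theorem~\ref{thm:hilb_prime} arises via Corollary~\ref{cor:generalCGS} and hence strictly contains the (zero) condition ideal. Second, your handling of components $Q\in\mathcal{Q}\setminus\mathcal{Q}'$ is the right move operationally, but note that it exposes a small looseness in the \emph{statement} of the corollary rather than in your argument: if some $Q_j$ has $Q_j\cap K[A]\neq\{0\}$ (as can happen, e.g.\ $I=\langle ax\rangle$), then $\varphi_\alpha(Q_j)=K[X]$ on a dense open set and $\varphi_\alpha(\mathcal{Q})$ is not literally a minimal primary decomposition. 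The paper's Algorithm~\ref{alg2} silently passes to the filtered decomposition $\mathcal{Q}'$ for exactly this reason, and your proof does the same.
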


Finally, we give an explicit expression of $\mathcal{O}_i$ as follows.

\begin{Proposition}\label{pro:O-expression}
    Let $(C,\mathcal{O},\mathcal{Q})$ be an element of the system in Theorem \ref{thm:denseCPDS} and $\mathcal{Q}=\{Q_1,\ldots,Q_r\}$. One can compute local MISs $U_1,\ldots,U_r\subset X$ and local minimal polynomials $f_1(A,t)\in \mathbb{Q}(U_1)[A,t],\ldots,f_r(A,t)\in \mathbb{Q}(U_r)[A,t]$ corresponding to primary components 
    $Q_1,\ldots,Q_r$ such that
    \[
    \mathcal{O}=\{\alpha\in \mathbb{Q}^m\mid f_k(\alpha,t) \text{ is irreducible in } 
    K(U_k)[A,t] \text{ for all }k=1,\ldots,r\}.
    \]
\end{Proposition}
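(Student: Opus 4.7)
The plan is to unwind the Hilbert-set construction used in the proofs of Theorem \ref{thm:hilb_prime}, Corollary \ref{cor:hilb_primary}, and Theorem \ref{thm:denseCPDS}, and read off, for each primary component $Q_k$, the single irreducible polynomial whose specialized irreducibility is equivalent to the primarity of $\varphi_\alpha(Q_k)$. Since the $\mathcal{O}$ in Theorem \ref{thm:denseCPDS} was produced as $\mathcal{O}=\bigcap_k \mathcal{O}_k$ with each $\mathcal{O}_k$ a basic Hilbert subset coming from one primary component, it suffices to describe each $\mathcal{O}_k$ explicitly.

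For each $k$, I would first compute a local maximal independent set $U_k \subset X$ of $Q_k$ on $C$, shrinking $C$ within the same constructible set if necessary so that $U_k$ works uniformly (this is legitimate because the constructible refinement has already been absorbed into $C$ in the proof of Theorem \ref{thm:denseCPDS}). Passing to the extension $Q_k^e$ in $\mathbb{Q}(U_k)[A,X\setminus U_k]$ yields a zero-dimensional primary ideal by Lemma \ref{lem:cont-prime}. Next, choose a polynomial $g_k \in \mathbb{Q}[A, X\setminus U_k]$ in local generic position with respect to $Q_k^e$ and $C$, and compute its local minimal polynomial $m_k(A,t)\in \mathbb{Q}(U_k)[A,t]$. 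Because $Q_k$ (equivalently $Q_k^e$) is primary, Lemma \ref{lem:mini-prime} forces $m_k(A,t)$ to be a power of a single irreducible polynomial $f_k(A,t)\in \mathbb{Q}(U_k)[A,t]$, which I take as the output.

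To finish, I would verify that $\mathcal{O}_k$ is exactly the basic Hilbert set defined by $f_k$. Specialization commutes with extension and contraction with respect to $U_k$ on a Zariski-open subset of $C$ (by $(O8)$ of Proposition \ref{prop:ideal_op} together with Lemma \ref{lem:cont-prime}), and on that subset the local generic-position property for $g_k$ is preserved (so $m_k(\alpha,t)$ is a minimal polynomial in generic position modulo $\varphi_\alpha(Q_k^e)$). By Lemma \ref{lem:mini-prime} applied pointwise, $\varphi_\alpha(Q_k^e)$ is primary iff $m_k(\alpha,t)$ is a power of an irreducible, iff $f_k(\alpha,t)$ is irreducible in $\mathbb{Q}(U_k)[t]$. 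Contracting back via Lemma \ref{lem:cont-prime} yields primarity of $\varphi_\alpha(Q_k)$. Intersecting over $k$ recovers the explicit expression of $\mathcal{O}$.

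The main obstacle is bookkeeping the uniformity: one must guarantee that a single $U_k$, a single $g_k$, and a single local minimal polynomial $f_k$ work simultaneously for every $\alpha$ in the claimed $\mathcal{O}$, i.e., that generic position and commutation of extension/contraction with specialization do not silently fail on a Zariski-dense but strictly smaller subset. This is controlled by absorbing the finitely many ``bad'' loci (leading-coefficient vanishing loci in the CGS of $Q_k^e$, the locus where $\deg m_k$ drops, etc.) into the constructible refinement of $C$ already present in Theorem \ref{thm:denseCPDS}, so that the final Hilbert subset $\mathcal{O}$ is unaffected.
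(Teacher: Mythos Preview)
Your proposal is correct and follows essentially the same route as the paper: invoke Corollary \ref{cor:hilb_primary} for each $Q_k$ to extract a local MIS $U_k$ and a local minimal polynomial whose specialized irreducibility controls primarity of $\varphi_\alpha(Q_k)$, then intersect over $k$ to recover $\mathcal{O}$. The paper's proof is a two-sentence sketch that simply cites Corollary \ref{cor:hilb_primary}; your version unwinds that citation through Lemmas \ref{lem:mini-prime} and \ref{lem:cont-prime} and is more explicit about two points the paper leaves implicit: (i) that for a genuinely primary (non-prime) $Q_k$ the local minimal polynomial is only a \emph{power} of an irreducible, so one must take its irreducible factor $f_k$ rather than the minimal polynomial itself, and (ii) the uniformity bookkeeping needed to ensure a single $U_k$ and $f_k$ work across all of $C$, handled by absorbing the bad loci into the constructible refinement already performed in Theorem \ref{thm:denseCPDS}.
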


\begin{proof}
    For each $Q_i$, by Corollary \ref{cor:hilb_primary}, there exist a local MIS $U_i\subset X$ and a local minimal polynomial $f_i(A,t)\in \mathbb{Q}(U_i)[A,t]$ such that $\varphi_\alpha(Q_i)$ is a primary ideal for $\alpha\in \{\beta\in \mathbb{Q}^m\mid f_i(\beta,t) \text{ is irreducible in $\mathbb{Q}(U_k)[A,t]$}\}$. Thus, 
    for $\alpha\in \mathcal{O}$, $\varphi_\alpha (Q_1),\ldots,\varphi_\alpha(Q_r)$ are all primary ideals.
\end{proof}

\begin{algorithm}
\label{alg3}  
\begin{algorithmic}[1]      
\REQUIRE $(C,\mathcal{Q})$: a tuple of a constructible set and a set of primary ideals of $\mathbb{Q}[A,X]$ with order $r$
\ENSURE polynomials $f_1,\ldots,f_r$ such that $\varphi_\alpha(Q_1),\ldots,\varphi_\alpha(Q_r)$ are all primary ideals for any $\alpha \in \mathcal{Q}\cap C$, where $\mathcal{Q}=\{\beta\in C\mid f_k(\beta,t) \text{ is irreducible in } 
    \mathbb{Q}(U_k)[A,t] \text{ for all }k=1,\ldots,r\}$ 
\FOR{$i=1$ to $r$}
    \STATE $U_i\gets $ a local MIS  of $Q_i$ on $C$
    \STATE $f_i(A,t)\in \mathbb{Q}(U_i)[A,t]\gets$ a local minimal polynomial  of $Q_i$ on $C$
\ENDFOR
\RETURN $f_1,\ldots,f_r$
\caption{Defining Polynomials for a Hilbert Subset (\textsc{HilbertSubset})}
\end{algorithmic}
\end{algorithm}

By combining Algorithm~\ref{alg2} and Algorithm~3, we obtain Algorithm~\ref{alg4} as follows. 
\begin{algorithm}[H]            
\caption{\textsc{HilbertCPDS}}         
\label{alg4}                          
\begin{algorithmic}[1]              
\REQUIRE $I$: an ideal of $\mathbb{Q}[A,X]$
\ENSURE a minimal feasible CPDS of $I$ with Hilbert subsets
\STATE HilbertCPDS $\gets \{\}$
\STATE CPDS $\gets \textsc{MinimalFeasibleCPDS}(I)$
\FOR{$(C,\mathcal{Q})\in \text{CPDS}$}
    \STATE $F\gets \textsc{HilbertSubset}(C,\mathcal{Q})$
    \STATE HilbertCPDS $\gets \text{HilbertCPDS}\cup \{(C,F,\mathcal{Q})\}$
\ENDFOR
\RETURN HilbertCPDS
\end{algorithmic}
\end{algorithm}

\section{Computer Experiments}

We implement our algorithm in the computer algebra system Risa/Asir \cite{risaasir}. We use a library ``noro\_pd'' to compute ideal operations (intersection, saturation), primary decompositions, radicals of ideals. Also, we use a function kcgs\_main, one of the efficient programs, implemented by Katsusuke Nabeshima \footnote{\url{https://www.rs.tus.ac.jp/~nabeshima/softwares.html}} to compute comprehensive Gr\"obner systems of ideals. Timings in seconds are measured on a PC with AMD Ryzen Threadripper PRO 5965WX 24-Cores and 128GB memory. 

\subsection{Timings of Several Examples}

\begin{Example}
    We let $I_1=\langle x^2-a,bxy\rangle$ be an ideal of $\mathbb{Q}[a,b,x,y]$. In Example \ref{ex:alg}, it has a minimal  feasible CPDS $\{(\mathbb{Q}^2\setminus V(\langle ab\rangle),\{\langle x^2-a,y\rangle\}),$ $(V(\langle a\rangle)\setminus V(\langle a,b\rangle),\{\langle x\rangle,\langle x^2,y\rangle\})$ $,(V(\langle a,b\rangle),\{\langle x^2\rangle\})$,$(V(\langle b\rangle),\{\langle x^2-a\rangle\})\}$.
\end{Example}

\begin{Example}
 We consider  ``cyclic $n$-roots ideal'' ($cyclic(n)$) which is defined in \cite{Backelin}. For $cyclic(4)=\langle c_3c_2c_1c_0-1,((c_2+c_3)c_1+c_3c_2)c_0+c_3c_2c_1,(c_1+c_3)c_0+c_2c_1+c_3c_2,c_0+c_1+c_2+c_3\rangle$, if we take $c_0$ as a parameter, then we obtain its feasible CPDS consisting of three elements as follows.
\begin{align*}
&\{\\
(&\mathbb{Q}\setminus V(c_0^5-c_0),\langle c_0c_3-1,c_2+c_0,c_1+c_3\rangle,\langle c_0c_3+1,c_2+c_0,c_1+c_3\rangle),\\
(&V(c_0^2+1),\{\langle c_0^2+1,c_3-c_0,c_2^2+2c_0c_2-1,c_1+c_2-2c_0\rangle,\\
&\langle c_0^2+1,c_1-c_0,c_3^2+2c_0c_3-1,c_2+c_3+2c_0\rangle\}),\\
(&V(c_0+1),\{\langle c_0+1,c_3+1,c_2^2-2c_2+1,c_1+c_2-2\rangle,\langle c_0+1,c_3^2-2c_3+1,c_2+c_3-2,c_1+1\rangle\}),\\
(&V(c_0-1),\{\langle c_0-1,c_3-1,c_2^2+2c_2+1,c_1+c_2+2\rangle,\langle c_0-1,c_3^2+2c_3+1,c_2+c_3+2,c_1-1\rangle\}),\\
(&V(c_0),\{\langle 1\rangle\})\\
\}
\end{align*}
\end{Example}

\begin{Example}
     We define an adjacent-minor ideal of type $(2,3,4)$ $A_{(2,3,4)}=\langle x_{12}x_{21}-x_{11}x_{22},x_{13}x_{22}-x_{12}x_{23},x_{14}x_{23}-x_{13}x_{24},x_{22}x_{31}-x_{21}x_{32},x_{23}x_{32}-x_{22}x_{33},x_{24}x_{33}-x_{23}x_{34}\rangle\subset \mathbb{Q}[x_{ij}\mid 1\le i\le 3,1\le j\le 4]$, where a primary decomposition of an adjacent-minor ideal has important meanings in Computer Algebra for Statistics (see Section 5.3 of Chapter 5 in \cite{strumfels}). We compute a feasible CPDS of $A_{(2,3,4)}$ for each set of parameters. 
\end{Example}

	\begin{table}[H]
		\begin{center}
\begin{tabular}{|ll|c|}\hline
ideal & parameters & {\tt para\_pd}\\\hline
$I_1$ &$\{a,b\}$   &  0.03\\ 
$cyclic(4)$& $\{c_0\}$ &0.11 \\ 
 $A_{2,3,4}$& $\{x_{11}\}$&0.84 \\ 
  $A_{2,3,4}$& $\{x_{11},x_{12}\}$&0.75 \\ 
   $A_{2,3,4}$& $\{x_{11},x_{12},x_{13}\}$&1.08 \\ 
    $A_{2,3,4}$& $\{x_{11},x_{12},x_{13},x_{14}\}$&2.35 \\ 
     $A_{2,3,4}$& $\{x_{11},x_{12},x_{13},x_{14},x_{21}\}$&10.3 \\ 
\hline
\end{tabular}
			\caption{Timings of minimal feasible CPDSs} \label{table1}
		\end{center} 
	\end{table}	

In Table \ref{table1}, we see that our algorithm compute feasible CPDS effectively. Also, for $A_{2,3,4}$, the computation time increases as the number of parameters increases. The reason for this is that the computational complexity of CGS is highly dependent on the number of parameters.

\subsection{Simple Example for Hibert Subset of Prime Parametric Ideal}
In general, each Hilbert subset $\mathcal{O}_i$ given in Proposition \ref{pro:O-expression} 
can not be expressed explicitly as a locally closed set. 
But, for some local minimal polynomial $f_i(A,t)$, 
we can apply the naive method proposed in \cite{Yokoyama2006,Yokoyama2022} 
to $f_i(A,X)$ to obtain the expression of $\mathcal{O}_i$ as a locally closed set. 
\begin{Example} \label{ex:hilb}
Here we give a simple example of a prime parametric ideal for which 
an Hilbert subset can be computed in algebraic way. 
Consider the following ideal $P$ of $\mathbb{Q}[A,X]$, where $X=\{x_1,x_2,x_3\}$ and $A=\{a_1,a_2\}$.
\begin{eqnarray*}
P&= & \langle -x_1+x_2-a_1 x_3^2+a_2 x_3,\\
& & (a_1+1)x_1^3+(-x_2+(a_1^2-a_2)x_3+(a_2+1)a_1+1)x_1^2\\
& & +((-a_1-2)x_2+(a_1^2-a_2 a_1-2a_2)x_3 +a_2 a_1+a_2)x_1 \\
& & +x_2^2+((-a_1^2+2a_2)x_3-a_2 a_1-a_2)x_2+a_2^2 x_3^2+(-a_2^2 a_1-a_2^2)x_3,\\
& & -x_1^3+(x_3^2-a_1x_3-a_2-1)x_1^2+(x_2+x_3^2+(-a_1+a_2)x_3-a_2)\\
& & x_1+(-x_3^2+a_1 x_3+a_2)x_2
-a_2 x_3^3+a_2 x_3^2+a_2^2 x_3 \rangle 
\end{eqnarray*}
Then 
its Gr\"obner basis $G$ 
with respect to a block ordering $\{x_2\succ x_1\succ x_3\}\succ \succ \{a_2\succ a_1\}$
is 
\begin{eqnarray*}
&&\{g_1=x_1^3+(-x_3^2+a_1 x_3+a_2)x_1^2-a_1 x_3^2 x_1+a_1 x_3^4-a_1^2 x_3^3-a_2 x_3^2,\\
&& \;\; g_2=x_2-x_1-a_1 x_3^2+a_2 x_3\}. 
\end{eqnarray*}
By the form of $G$, it can be examined that $P$ is prime and $U=\{a_1,a_2,x_3\}$ is its MIS. 
Since the leading coefficients of elements in $G$ are 1 (constant) as polynomials over $\mathbb{Q}(A)$, 
$G$ is stable for any value $\alpha=(\alpha_1,\alpha_2)$ 
in $\mathbb{Q}^2$ and $U$  is also stable as an MIS.  
Moreover, for any value $\alpha=(\alpha_1,\alpha_2)$ in $\mathbb{Q}^2$, 
$\{x_3\}$ is an MIS for 
$\varphi_\alpha(P)=\langle \varphi_\alpha(g_1),\varphi_\alpha(g_2)\rangle$ 
and the minimal polynomial of $x_1$ modulo the extension $\varphi_\alpha(P)^e$ 
in $\mathbb{Q}(x_3)[x_1,x_2]$ is $\varphi_\alpha(g_1)=g_1(\alpha_1,\alpha_2,x_1,x_3)$. 
Then, the degree of $\varphi_\alpha(g_1)$ coincides with the linear dimension of the residue class ring $\frac{\mathbb{Q}(x_3)[x_1,x_2]}{\varphi_\alpha(P)^e}$ 
as a vector space over the rational function field $\mathbb{Q}(x_3)$, 
and $x_1$ is {\em in local generic position} with respect to $P$ for any value in $\mathbb{Q}^2$. 

Thus, for each value $\alpha$ in $\mathbb{Q}^2$, 
$\varphi_\alpha(P)$ is a prime ideal of $\mathbb{Q}[X]$ 
if and only if $\varphi_\alpha(g_1)$ is irreducible over $\mathbb{Q}$. 
For $g_1$, the {\em naive method for parametric factorization} 
proposed in \cite{Yokoyama2006,Yokoyama2022} can be applied 
to show that $\varphi_\alpha(g_1)$ is reducible over $\mathbb{Q}$ 
only if $\alpha$ is a zero of 
$\langle a_1^2a_2-a_1 a_2,a_1a_2^3+a_1 a_2^2-a_2^3-a_2^2\rangle$, 
that is, $\alpha_1=1$ or $\alpha_2=0$ or $\alpha=(0,-1)$. 
(See Example 2.6 in \cite{Yokoyama2022}.) 
Thus, we have the following dense subset $\mathcal{O}$ 
of $\mathbb{Q}^2$ such that $\varphi_\alpha(P)$ is prime for any $\alpha$ in $\mathcal{O}$:
\[
\mathcal{O}=\mathbb{Q}^2\setminus (\{(1,\alpha_2)\mid \alpha_2\in \mathbb{Q}\}\cup 
\{(\alpha_1,0)\mid \alpha_1\in \mathbb{Q}\}\cup \{(0,-1)\})
\]
We remark that for $\alpha$ in $\mathbb{Q}^2$ such that $\varphi_\alpha(P)$ is not prime, 
$\varphi_\alpha(P)$ has the following primary decomposition:
\begin{itemize}
\item Case where $\alpha_1=1$: $\varphi_\alpha(P)$ has three prime components 
for any $\alpha_2\in \mathbb{Q}$;
\begin{eqnarray*}
&& \langle x_2-x_3^2+(\alpha_2-1)x_3,x_1-x_3\rangle,
\langle x_2-x_3^2+(\alpha_2 +1)x_3,x_1+x_3\rangle, \\
&& \langle x_2-2x_3^2+(\alpha_2+1)x_3+\alpha_2,x_1-x_3^2+x_3+\alpha_2\rangle.
\end{eqnarray*}
\item Case where $\alpha_2=0$: The decomposition depends on the shape of $\alpha_1$ as follows:
\begin{itemize}
\item Case where $\alpha_1$ is not square: 
$\varphi_\alpha(P)$ has two prime components;
\[\langle x_2-(\alpha_1+1)x_3^2+\alpha_1x_3,x_1-x_3^2+\alpha_1 x_3\rangle, \langle x_1^2-\alpha_1 x_3^2,x_2-x_1-\alpha x_3^2\rangle.
\]
\item Case where $\alpha_1$ is square, that is, $\alpha_1=\beta^2$ for some $\beta$ in $\mathbb{Q}$: 
$\varphi_\alpha(P)$ has three prime components;
\begin{eqnarray*}
&& \langle x_1-\beta x_3,x_2-\beta^2 x_3^2-\beta x_3\rangle, 
\langle x_1+\beta x_3,x_2-\beta^2 x_3^2+\beta x_3\rangle, \\
&& \langle x_2-(\beta^2+1) x_1-\beta^4 x_3,x_1-x_3^2+\beta^2 x_3\rangle. 
\end{eqnarray*}
\end{itemize}

\item Case where $\alpha=(0,-1)$: $\varphi_\alpha(P)$ has two prime components;
\[
\langle x_2-x_3-1,x_1-1\rangle, 
\langle x_1^2-x_3^2 x_1-x_3^2,x_2-x_1-x_3\rangle.  
\]
\end{itemize}

\end{Example}

\section{Conclusion and Future Works}

In this paper, we proposed a notion of a feasible comprehensive primary decomposition system (CPDS) of a parametric ideal, as the first attempt, based on the Hilbert's irreducibility theorem. We also gave effective algorithms for computing feasible CPDS. In a naive computer experiment, we examined a certain effectiveness of our method. 

As future works, we will make theoretical and practical studies. 

\begin{itemize}
    \item (Theoretical) We will examine the density of a Hilbert subset in each locally closed set, by which the feasibility of CPDS becomes more rigidly defined. We will try to compute more precise CPDS by combining parametric factorization in \cite{Yokoyama2022}. 
    \item (Practical) We will improve the practicality to handle larger size problems derived from pure mathematics or engineering. By recursive structure of our algorithm, we might meet combinatorial explosion which causes impractical computation. In order to overcome this difficulty, it is required to introduce efficient techniques such as pruning unnecessary steps in \cite{Nabeshima}. Moreover, to examine the real practicality, we will apply our algorithm to solve systems of parametric equations e.g. geometrical problems in \cite{Chou}. 
\end{itemize}

\section*{Acknowledgements}
This work was supported by JSPS KAKENHI Grant Number JP22K13901.


\end{document}